\def\BibTeX{{\rm B\kern-.05em{\sc i\kern-.025em b}\kern-.08em
    T\kern-.1667em\lower.7ex\hbox{E}\kern-.125emX}}
\algnewcommand{\LineComment}[1]{\Statex \hskip\ALG@thistlm \(\triangleright\) #1}
\newtheorem{theorem}{Theorem}
\newtheorem{lemma}{Lemma}
\newtheorem{definition}{Definition}
\newtheorem{problem}{Problem}
\newcommand{\tabincell}[2]{\begin{tabular}{@{}#1@{}}#2\end{tabular}}
\begin{document}
\begin{frontmatter}

%% Title, authors and addresses

%% use the tnoteref command within \title for footnotes;
%% use the tnotetext command for theassociated footnote;
%% use the fnref command within \author or \affiliation for footnotes;
%% use the fntext command for theassociated footnote;
%% use the corref command within \author for corresponding author footnotes;
%% use the cortext command for theassociated footnote;
%% use the ead command for the email address,
%% and the form \ead[url] for the home page:
%% \title{Title\tnoteref{label1}}
%% \tnotetext[label1]{}
%% \author{Name\corref{cor1}\fnref{label2}}
%% \ead{email address}
%% \ead[url]{home page}
%% \fntext[label2]{}
%% \cortext[cor1]{}
%% \affiliation{organization={},
%%             addressline={},
%%             city={},
%%             postcode={},
%%             state={},
%%             country={}}
%% \fntext[label3]{}

\title{Efficient Top-$k$ $s$-Biplexes Search over Large Bipartite Graphs with Time Complexity Guarantees} %% Article title

%% use optional labels to link authors explicitly to addresses:
%% \author[label1,label2]{}
%% \affiliation[label1]{organization={},
%%             addressline={},
%%             city={},
%%             postcode={},
%%             state={},
%%             country={}}
%%
%% \affiliation[label2]{organization={},
%%             addressline={},
%%             city={},
%%             postcode={},
%%             state={},
%%             country={}}

\author[aff1]{Zhenxiang	Xu \corref{cor1}}
\ead{zhenxiangxu@std.uestc.edu.cn}

% \cortext[cor1]{Corresponding authors}
\author[aff1,aff2]{Yiping Liu \corref{cor1}}
\ead{yliu823@aucklanduni.ac.nz}

\author[aff1]{Yi Zhou\corref{cor2}}
\ead{zhou.yi@uestc.edu.cn }

\author[aff1]{Yimin Hao}
\ead{yiminhao@std.uestc.edu.cn}

\author[aff3]{Zhengren Wang}
\ead{wzr@stu.pku.edu.cn}

\cortext[cor1]{Equal contribution.}

\cortext[cor2]{Corresponding author.}
% \renewcommand{\thefootnote}{\blank}  
% \footnote*{{\Letter} {Corresponding author}}

\affiliation[aff1]{organization={School of Computer Science and Engineering, University of Electronic Science and Technology of China},
            % addressline={262 Khyber Pass Road, Newmarket},
            city={Chengdu},
            % postcode={611731},
            % state={},
            country={China}}
 \affiliation[aff2]{organization={School of Computer Science,
 University of Auckland},
             addressline={262 Khyber Pass Road, Newmarket},
             city={Auckland},
             % postcode={1023},
             % state={},
             country={New Zealand}}
\affiliation[aff3]{organization={School of Computer Science,
Peking University},
            % addressline={262 Khyber Pass Road, Newmarket},
            city={Beijing},
            % postcode={1023},
            % state={},
            country={China}}

\begin{abstract}
In a bipartite graph, a subgraph is an $s$-biplex if each vertex of the subgraph is adjacent to all but at most $s$ vertices on the opposite set. 
The enumeration of $s$-biplexes from a given graph is a fundamental problem in bipartite graph analysis.
However, in real-world data engineering, finding all $s$-biplexes is neither necessary nor computationally affordable. 
A more realistic problem is to identify some of the largest $s$-biplexes from the large input graph.
We formulate the problem as the {\em top-$k$ $s$-biplex search (TBS) problem}, which aims to find the top-$k$ maximal $s$-biplexes with the most vertices, where $k$ is an input parameter.  
We prove that the TBS problem is NP-hard for any fixed $k\ge 1$.
Then, we propose a branching algorithm, named MVBP, that breaks the simple $2^n$ enumeration algorithm.
Furthermore, from a practical perspective, we investigate three techniques to improve the performance of MVBP: 2-hop decomposition, single-side bounds, and progressive search. 
Complexity analysis shows that the improved algorithm, named FastMVBP, has a running time $O^*(\gamma_s^{d_2})$, where $\gamma_s<2$, and $d_2$ is a parameter much smaller than the number of vertex in the sparse real-world graphs, e.g. $d_2$ is only $67$ in the AmazonRatings dataset which has more than $3$ million vertices.
Finally, we conducted extensive experiments on eight real-world and synthetic datasets to demonstrate the empirical efficiency of the proposed algorithms.
In particular, FastMVBP outperforms the benchmark algorithms by up to three orders of magnitude in several instances.

\end{abstract}

%%Graphical abstract
% \begin{graphicalabstract}
%\includegraphics{grabs}
% \end{graphicalabstract}

%%Research highlights
%\begin{highlights}
%\item Top-k maximal s-biplexes is applied in various real-world fields.

%\item Comprehensive NP-hardness proof and algorithm complexity analysis are given.

%\item Speedup techniques include graph decomposition, upper bounds, and progressive search are investigated.

%\item Highly competitive performance in various datasets.
%\end{highlights}

%% Keywords
\begin{keyword}
large graph\sep  bipartite graph\sep top-k maximal s-biplex\sep  social recommendation
\end{keyword}

\end{frontmatter}

% \maketitle

\section{Introduction}
Bipartite graphs play crucial roles in modelling interactions between two distinct types of entity, making them essential in a variety of applications such as recommendation \citep{maier2022bipartite}, e-commerce \citep{wang2022efficient}, and social networks \citep{zhao2022finding}.
In bipartite graphs, vertices are partitioned into two disjoint sets. Edges do not connect any pair of vertices within the same set. 
% {\color{red}[Sides is a weird name for a set. What about just calling the first set  and the second set (in wiki, then use first and second.) It's ok to define L as left set and R as right set in the rest of the paper]% Yiping: has modified the first paragraph.}

An $s$-biplex is a bipartite graph in which each vertex of a set is not adjacent to at most $s$ vertices in the other set. 
In particular, the $0$-biplex is called a biclique. The $s$-biplex model has applications in various fields.

\smallskip
\noindent {\bf Fraud detection.} Online shopping has become a common lifestyle, with many people selecting products based on user ratings. However, fraudulent users may collaborate to artificially promote or demote products for financial gain. These collaborative behaviours can be detected using $s$-biplexes \citep{allahbakhsh2013collusion,beutel2013copycatch}. 
In addition, biplexes can also be applied to detect fake news \citep{gangireddy2020unsupervised,yu2021graph}.  

\smallskip
\noindent {\bf Community detection.} 
People in the same communities often share common interests and preferences. Identifying communities is beneficial for making accurate recommendations and targeted advertising.  
These communities can be identified via $s$-biplexes \citep{maier2022bipartite,murata2004discovery,maier2021biclique}.

\smallskip
\noindent {\bf Biological data analysis.} Interaction analysing is a significant technique to help understand the functions of biological components, such as genes and proteins.
A usual goal of gene expression analysis is to group genes according to their response under different conditions.
Genes that respond to a number of common conditions are considered a significant gene group. This group of genes can be identified via $s$-biplexes \citep{cheng2000biclustering,yang2005improved}.
Similarly, the protein-protein interactions between HIV-1 and humans can be clustered via $s$-biplexes \citep{bustamam2020application,dey2020graph}.

Hence, querying these $s$-biplexes has been a fundamental task \citep{li2008maximal,sim2009mining,sim2011case,dhawan2019spotting,wang2008recommending,wu2015solving,tan2020top}. 
However, listing all $s$-biplexes from bipartite graphs is rarely asked in real applications. 
On the one hand, there can be an exponential number of $s$-biplexes in real world large graphs. 
It is unaffordable to enumerate all these $s$-biplexes using the limited computation resources \citep{luo2022maximum}.
On the other hand, in real-world bipartite graphs, there are many trivially small $s$-biplexes, which do not carry useful information \citep{yu2023maximum}. %than the larger $s$-biplexes , making their enumeration unnecessary.
%As a result,  
Due to these factors, the {\em top-$k$ $s$-biplex search (TBS) problem}, which only asks for the $k$ maximal $s$-biplexes with most vertices where $k$ is an input parameter, is more realistic.
Here, an $s$-biplex is maximal if there does not exist a larger $s$-biplex that contains it. 
The maximality condition excludes trivially small $s$-biplexes.
% The maximality condition prevents these $k$ bi-plexes pairwise inclusive.

\begin{figure}
    \centering
    \includegraphics[width=0.5\linewidth]{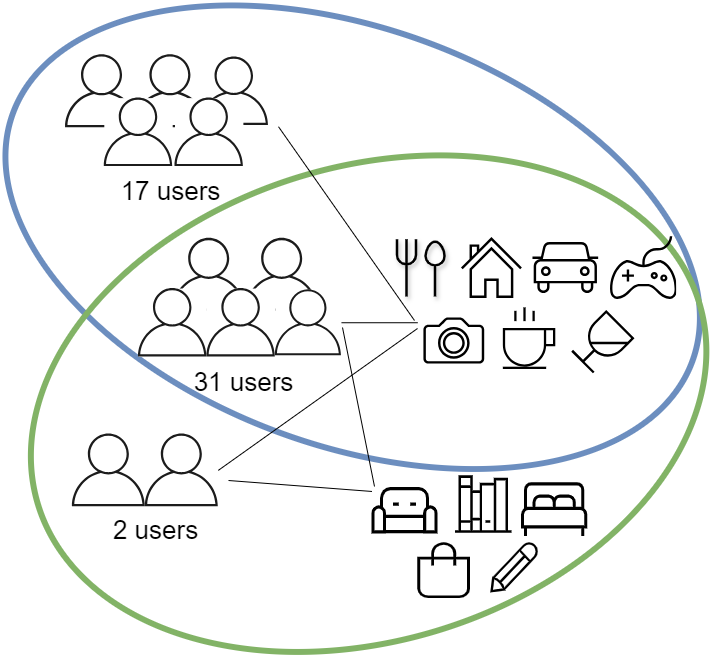}
    \caption{The comparison between the maximal $1$-biplex with most vertices (named TBS, top blue circle) and the maximal $1$-biplex with most edges (named TBSE, bottom green circle) in the Amazon Ratings dataset. 
    This bipartite graph contains product ratings from the Amazon online shopping website.
    We ask the biplexes to contain at least $7$ products to identify users who share common interests with at least $7$ products.
    The two biplexes share 31 common users and $7$ common products.
    The biplex with the most edges identifies dense connections by introducing $5$ additional products and $2$ additional users. In contrast, the biplex with the most vertices identifies $17$ more users who share common interests with the $7$ products. Note that both two biplex models are able to specify the lower-bound number of products. 
    When exploring more users who share common interests in a given number of products, the biplex with the most vertices is preferred.}
    \label{fig:Comparison most edges}
\end{figure}

The {\em TBS} problem differs from another problem investigated by \citep{yu2023maximum}. 
The {\em TBS} problem in this paper aims to find $k$ maximal $s$-biplexes with most vertices.
In contrast, the problem in \citep{yu2023maximum}, namely \emph{top-$k$ $s$-biplex with most edges} (TBSE), asks for the $k$ maximal $s$-biplexes with most edges. 
Theoretically, any (vertex-) induced subgraph of an $s$-biplex remains an $s$-biplex, known as the hereditary property. However, the property of the edge-induced subgraph of an $s$-biplex is not clear. In practice, to explore users that share common interests, the biplex with most vertices is preferred as shown in Fig.\ref{fig:Comparison most edges}.
% Therefore, we believe the maximal $s$-biplexes with the most vertices are more informative than those with the most edges.
To the best of our knowledge, we are the first to formulate the {\em TBS} problem.

The {\em TBS} problem is challenging due to the exponential number of candidate solutions. A simple algorithm would require at least $O(2^n)$ time to evaluate each candidate. Indeed, we will prove that the {\em TBS} problem is NP-hard. However, it is possible to solve the {\em TBS} problem by leveraging efficient algorithms designed for two similar scenarios.
The first scenario involves listing $s$-plexes in non-bipartite graphs, where each vertex in an $s$-plex is not adjacent to at most $s$ vertices. The state-of-the-art algorithm for this is the {\em ListPlex} algorithm \citep{wang2022listing}. Extending {\em ListPlex} to search for $s$-biplexes, however, requires additional time to account for vertices on both sets of the bipartite graph, and it does not fully utilise the properties of bipartite graphs.
The second scenario involves searching for maximal $s$-biplexes with the most edges in bipartite graphs. The state-of-the-art algorithm for this is {\em FastBB} \citep{yu2023maximum}. However, the pruning and reduction rules of {\em FastBB} may not apply when searching for top-$k$ maximal $s$-biplexes with the most vertices. Additionally, the running time of {\em FastBB} is dependent on the cube of the maximum degree $\Delta$, which can be quite large in dense graphs, rendering {\em FastBB} impractical in such cases.
We include both two algorithms as our benchmarks.
In this paper, we address the {\em TBS} problem with an algorithm that operates in $O^*(\gamma_s^{d_2})$ time\footnote{The $O^*$ notation hides polynomial factors.}, where $d_2$ denotes the largest 2-hop degeneracy. 
In real-world graphs, $d_2 \leq \Delta^2$ is a small parameter. 
% {\color{red} due to the sparsity of real-world graphs}.

% {\color{red}Mathematical notations in the paper are not well organized $\Delta$ or $\Delta(G)$?}

\subsection{Our contributions}

To the best of our knowledge, we are the first to formulate the {\em TBS} problem aiming to search $k$ maximal $s$-biplexes with most vertices. 
Compared to existing models \citep{yu2023maximum}, the TBS model excels at identifying users with shared interests, as illustrated in Fig.\ref{fig:Comparison most edges}.
Theoretically, we prove that the {\em TBS} problem is NP-hard by a reduction from the constrained minimum vertex cover problem \citep{chen2003constrained}.

We propose a non-trivial exact branching algorithm named MVBP.
The MVBP algorithm is based on the Bron-Kerbosch framework \citep{bron1973algorithm}, which is adapted to fit the specific requirements of this problem.
Building on the Bron-Kerbosch framework, we identify key properties of the \emph{TBS} problem and leverage these to develop effective reduction and branching rules. Without these enhancements, a brute-force solution to the \emph{TBS} problem would have a running time of $O^*(2^n)$. By applying advanced analysis techniques for exponential branching algorithms, we show that the inclusion of these reduction and branching rules reduces the algorithm’s running time to $O^*(\gamma_s^n)$ where $n$ is the vertex number of the input bipartite graph and $\gamma_s<2$ is a parameter related to $s$.
% Clearly, this algorithm is better than the simple brute-force enumeration.

We investigate techniques that effectively improve the practical performance of the algorithm. 
Specifically, we introduce graph decomposition (Section~\ref{sec: 2-hop decomposition}), reduction bounds (Section~\ref{sec:reductions with single side bounds}), and progressive search (Section~\ref{sec: progressive search}). {
Compared to existing methods, our work introduces several key improvements:
For graph decomposition: We propose a novel 2-hop decomposition that reduces the size of decomposed subgraphs from the existing $O(\Delta^3)$ complexity \citep{yu2023maximum} to $O(\Delta^2)$. This optimization is particularly impactful when the maximum degree $\Delta$ is large, leading to a significant reduction in the algorithm’s running time.
For effective reductions: While existing algorithms typically use a single bound for both sides of the vertex set \citep{lyu2020maximum,dai2024efficient}, we introduce independent bounds for each side. These single-side bounds are more frequently triggered, allowing for more aggressive branch reduction.
We also incorporate a progressive search strategy that heuristically adjusts the lowerbound and upperbound of target size, further improving the algorithm's efficiency.}
As a result, the running time of the algorithm with these techniques is improved to $O^*(\gamma_s^{d_2})$.
In other words, the exponent part of the running time is reduced from $n$ to $d_2$, a very small parameter in real graphs. For example, $n=3376972$ while $d_2=67$ when finding $1$-biplexes with both sides contains more than $15$ vertices in the AmazonRatings dataset.

We conduct extensive experiments on both real-world datasets and synthetic datasets. (1) The experimental results highlight the performance of the final algorithm, namely FastMVBP. In particular, FastMVBP outperforms all benchmarks by three orders of magnitude in several cases. For example, when searching for $1$-biplexes with at least $3$ vertices on each bipartite vertex set in the YouTube dataset, FastMVBP completes the task in 10.644 seconds, FastBB takes 532.3 seconds, and ListPlex requires 14260.9 seconds.
(2) In a social recommendation task, TBS can identify more users who share common interests than another model.
    % {\color{red} The case study }{\color{blue}[is case study belong to evaluating performance?]}  

\section{Preliminaries}
\label{sec_prelimiaries}
In this paper, we focus on an undirected and unweighted bipartite graph $G=(L\cup R,E)$, where $L$ and $R$ are two distinct vertex sets such that no edge connects two vertices within the same set, and $E\subseteq L\times R$ is the set of edges. 
Let $V:=L\cup R$ denote the set of vertices, $n:=|V|$ and $m:=|E|$ denote the number of vertices and the number of edges, respectively.
For clarity, we refer to the set $L$ as left-side vertex set and the set $R$ as right-side vertex set. 
% {\color{red}The left-side set and the right-side set?}

% \begin{table}[t]
% \centering
% \caption{\color{red} Table of Notations}
% \label{tab:notations}
% \resizebox{\textwidth}{!}{%
% \begin{tabular}{|l|l|}
% \hline
% $G=(L\cup R,E)$ & a bipartite graph $G$ with left-side vertex set $L$, right-side vertex set $R$ and edge set $E$ \\ \hline
% $n,m$ & $n=|L\cup R|$ and $m=|E|$ \\ \hline

% \end{tabular}%
% }
% \end{table}

% {\color{red}Without loss of generality, we assume that $u\in L$ and $v\in R$, and we denote the undirected edge between $u$ and $v$ by $(u,v)$.Given a vertex set $R \subseteq V$, we define $N_R(u) = \{v \in R \mid (u, v) \in E \}$ as the set of neighbours of $u \in L$ within $R$, and $\overline{N}_R(u) = \{v \in R \mid (u, v) \notin E \ $ as the set of anti-neighbours of $u \in L$ within $R$. }
Given $u\in L$ and $v\in R$, we denote the undirected edge between $u$ and $v$ by $(u,v)$.
Given a vertex set $S \subseteq V$, we define $N_S(u) = \{v \in S \mid (u, v) \in E \}$ as the set of neighbours of $u \in V$ within $S$, and {  $\overline{N}_S(u) = \{v \in S \mid (u, v) \in  L\times R\setminus E\}$ as the set of anti-neighbours of $u \in V$ within $S$}. 

% {\color{red}[This definition is not clear] Yiping: d_S(u) is not mentioned in other place, so I simply remove its definition.}.
When the context is clear, we simply denote $N_V(u)$ as $N(u)$.

The degree of $u$ in $G$ is defined as $d(u):=|N(u)|$.
We use ${\Delta_{S}}^{(i)}$ to denote the $i$th largest degree among vertices in $S$.
% {\color{red}[What if we use $\Delta_S^{(i)}$]}
The maximum degree among all vertices in $L$, $R$ and $L\cup R$ are denoted by $\Delta_L$, $\Delta_R$ and $\Delta$, respectively.

For any vertex $u\in V$ 
% {\color{red}$[u\in V$?] Yiping: Yes, a typo.} 
and a positive integer $k$, we use $N^k(u)$ to denote the set of vertices with distance exactly $k$ to $u$ in $G$. 
The vertices in $N^k(u)$ are called $k$-hop neighbours of $u$. In particular, 
$N^1(u)=N(u)$. 

Lastly, given two vertex sets $X\subseteq L$ and $Y\subseteq R$, let $G[X\cup Y]$ be the subgraph induced by $X\cup Y$. 
We define the diameter of $G$ as the maximum distance among all pairs of vertices. { The complement graph of the bipartite graph $G$ is a bipartite graph $\overline{G}=(L\cup R,E')$ where $E'=\{(u,v)\mid (u,v)\in L\times R\setminus E\}$.}

\subsection{Problem Formulation}
% \subsection{Properties}
\begin{definition}[$s$-biplex]
% \color{red}
Given a positive integer $s$, a bipartite graph $B=(X\cup Y,E)$ is an $s$-biplex if $|Y|-|N_Y(u)|\leq s$ for each $ u\in X$ and $|X|-|N_X(v)|\leq s$ for each $ v\in Y$.
\end{definition}

An $s$-biplex is hereditary -- any induced subgraph of the $s$-biplex remains an $s$-biplex \citep{yu2021efficient}. The hereditary property motivates the following class of $s$-biplexes.

\begin{definition}[Maximal $s$-biplex] 
% \color{red}[Define vertex-induced maximal biplex]
Given a bipartite graph $G=(L\cup R,E)$, a positive integer $s$, and a vertex set $B\subseteq L\cup R$, the induced subgraph $G[B]$ is a maximal $s$-biplex in $G$ if $G[B]$ is an $s$-biplex, and for each $B'\supseteq B$, $G[B']$ is not an $s$-biplex.
% {\color{red}[Given \ldots, a subgraph $B=(X\cup Y,E')$ where $X\subseteq L$, $Y\subseteq R$ and $E'\subseteq E\cap(X\times Y)$ is an $s$-biplex if $|Y|-N_Y(u)\leq s, \forall u\in X$ and $|X|-N_X(v)\leq s, \forall v\in Y$ ??] Yiping: has modified.}
\end{definition}

% {\color{red}
% Because adding an edge to an $s$-biplex $B$ does not make $B$ a none $s$-biplex, a maximal $s$-biplex must be a vertex-induced subgraph.

% Given a bipartite graph $G=(L\cup R,E)$, we find an $s$-biplex $B$ of $G$ is a subgraph of $G$. 
% {\color{red}Specifically, we say $B$ is maximal if we cannot expand $B$ by adding more vertices into $B$. }
% }
%no larger $s$-biplex that contains $B$. 
% In other words, the maximality is defined with vertex number.

% {\color{red}[Any citations?], Yiping: has added one citation. We may cite our own papers here.}
 
As mentioned, in most data engineering tasks, the focus is only on large $s$-biplexes. 
However, a maximal $s$-biplex is not always large, e.g. $|X|$ could be as small as 0
% {\color{blue} actually is 0} 
in an $s$-plex $B=(X\cup Y,  E')$. 
So we introduce two positive integers $\theta_L$ and $\theta_R $ for the target $s$-biplex $B$, such that $|X|\geq \theta_L$ and $|Y|\geq \theta_R$. These two integers, namely the \textit{size lower bounds}, help to filter out extreme cases, that is, the number of vertices on one side is extremely larger than that at the other side. 
Moreover, because disconnected graphs are often deemed not interesting in near-clique subgraphs \citep{asratian1998bipartite,wang2021efficient}, we focus on the case where both $\theta_L$ and $\theta_R$ are larger than $2s+1$, where the graph is proven to be connected with a diameter bounded by 3 in Lemma 1 of \citep{yu2023maximum}. 

We now introduce the {\em top-$k$ $s$-biplex search (TBS) problem}, followed by its computational complexity analysis.

\begin{problem}[Top-$k$ $s$-biplex search (TBS) problem] \label{problem: definition}
Given a bipartite graph $G=(V=L\cup R,E)$, and four positive integers $k>0$, $s>0$, $\theta_L \geq 2s+1$ and $\theta_R \geq 2s+1$, find the largest $k$ subsets of $V$ such that each subset $S$ satisfies 
\begin{enumerate}
    \item the subgraph induced by $S$ is a maximal $s$-biplex, and
    \item $|S\cap L|\geq \theta_L$, $|S\cap R|\geq \theta_R$.
\end{enumerate}
% Given a bipartite graph $G=(L\cup R,E)$, and four positive integers $k>0$, $s>0$, $\theta_L \geq 2s+1$ and $\theta_R \geq 2s+1$, find $k$ maximum $s$-biplexes where each maximum $s$-biplex {\color{red}$B=(X\cup Y, E')$} satisfies (1) $G[B]$ is a maximal $s$-biplex, (2) $|L\cap B|\geq \theta_L$, $|R\cap B|\geq \theta_R$, (3) $|B|\geq |B'|$ for any other $s$-biplex $G[B']$ that is not returned.
\end{problem}

% {\color{red}[Another formulation. The above one is not concise. ]
% Given a bipartite graph $G=(V=L\cup R,E)$, and four positive integers $k>0$, $s>0$, $\theta_L \geq 2s+1$ and $\theta_R \geq 2s+1$, find the largest $k$ subsets of $V$ such that each subset $S$ satisfies 
% \begin{enumerate}
%     \item the subgraph induced by $S$ is a maximal $s$-biplex
%     \item supposing $X=S\cap L$ and $Y= S\cap R$, then $X\geq \theta_L$, $Y\geq \theta_R$
% \end{enumerate}
% }
% \end{problem}

% {\color{red}When $k=1$, the TBS problem is equivalent to finding the maximum $s$-biplex (with size bound) from a given graph. We show that in this case, the TBS is already NP-hard.}
When $k=1$, the TBS problem is equivalent to finding the maximum $s$-biplex (with size bound) from a given graph. We show that in this case, the TBS is already NP-hard.
\begin{theorem}
The {\em TBS} problem is NP-hard.
\end{theorem}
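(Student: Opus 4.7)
I plan to establish NP-hardness by a polynomial-time reduction from the constrained minimum vertex cover problem on bipartite graphs (CVCB), shown NP-hard in \citep{chen2003constrained}. A CVCB instance consists of a bipartite graph $G=(L\cup R,E)$ with $|L|=n_L$, $|R|=n_R$ together with positive integers $k_L,k_R$, and asks whether $G$ has a vertex cover with at most $k_L$ vertices in $L$ and $k_R$ in $R$; equivalently, whether $G$ has an independent set $I=I_L\cup I_R$ with $|I_L|\geq n_L-k_L$ and $|I_R|\geq n_R-k_R$.

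The bridge between the two problems is the classical observation that $I$ is independent in $G$ iff $I_L\cup I_R$ induces a complete bipartite subgraph (a $0$-biplex) in the bipartite complement $\overline{G}$. If $s=0$ were permitted, the reduction would be immediate via the TBS instance $(\overline{G},\,s=0,\,\theta_L=n_L-k_L,\,\theta_R=n_R-k_R,\,k=1)$. Since the problem definition enforces $s\geq 1$, I would lift the correspondence by a vertex blow-up: construct $G^\star$ from $\overline{G}$ by replacing every vertex $v$ with $s+1$ twin copies that inherit the neighborhood of $v$. The critical structural fact is that if $u\in L$ and $v\in R$ are adjacent in $G$ (so every copy of $u$ is non-adjacent to every copy of $v$ in $G^\star$), then no $s$-biplex of $G^\star$ can simultaneously contain all $s+1$ copies of $v$ on one side and any copy of $u$ on the other: such a copy of $u$ would have at least $s+1$ non-neighbors in the biplex. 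Hence the set $X^\star$ of left vertices whose full $(s+1)$-cluster is included in a given $s$-biplex of $G^\star$ must be independent in $G$, and the pigeonhole bound $|X'|\leq s\,n_L+|X^\star|$ converts a sufficiently large biplex of $G^\star$ into an independent set of $G$ of the required sizes.

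Choosing side thresholds $\theta_L,\theta_R$ of order $(s+1)n_L-k_L$ and $(s+1)n_R-k_R$, padding $G$ with isolated vertices (if needed) so that $\theta_L,\theta_R\geq 2s+1$, and, for $k\geq 2$, adjoining $k-1$ disjoint large ``dummy'' bicliques whose sizes dominate all other candidates so that the $k$-th largest qualifying maximal $s$-biplex still witnesses the CVCB answer, yields a polynomial-time reduction whose yes/no outputs coincide with those of CVCB.

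The main technical obstacle will be the two-way calibration of $\theta_L,\theta_R$. The pigeonhole step in the backward direction pushes the thresholds up to roughly $(s+1)n_L-k_L$, while the naive forward direction — taking all copies of an independent set $I$ — only produces a biplex of size $(s+1)(n_L-k_L)$, short by $sk_L$. Bridging this gap, either by carefully augmenting the biplex with a selection of partial copies of vertices outside $I$ that do not violate the $s$-biplex constraint, or by refining the blow-up so that it absorbs the discrepancy, is the delicate step I expect to spend the most care on. Once calibrated, the polynomial nature of the construction together with the forward/backward equivalence completes the NP-hardness argument for every fixed $k\geq 1$ and $s\geq 1$.
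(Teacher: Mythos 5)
Your overall strategy (reduce from constrained minimum vertex cover, pass to bicliques in the bipartite complement, then build a gadget to absorb the slack $s\geq 1$) matches the paper's, but your gadget is different and, as you yourself flag, it does not close. The problem is precisely the calibration you defer: with the same-side $(s+1)$-twin blow-up, the backward (pigeonhole) direction forces thresholds of the form $\theta_L \geq (s+1)n_L - k_L$ (so that at least $n_L-k_L$ left clusters are fully included), while the forward direction, taking all copies of an independent set $I$ with $|I_L|=n_L-k_L$, only yields $(s+1)(n_L-k_L)$ left vertices, a shortfall of $s\,k_L$. The repair you sketch --- augmenting with partial copies of vertices outside $I$ --- cannot work: if $u\notin I_L$ has even one $G$-neighbour $v\in I_R$, then any copy of $u$ is non-adjacent in $G^\star$ to all $s+1$ copies of $v$, so it already has $s+1$ anti-neighbours in the candidate biplex and cannot be added; and you cannot assume $u$ has no neighbour in $I_R$ for any surplus vertex. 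Lowering the thresholds to what the forward map achieves breaks the backward direction (you then only extract an independent set of size about $n_L-(s+1)k_L$). So the equivalence, as proposed, fails in at least one direction for every choice of thresholds, and this is the heart of the reduction, not a routine detail.

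The paper resolves exactly this tension with a differently shaped gadget: instead of $s+1$ twins on the same side, each original vertex receives $s$ copies placed on the \emph{opposite} side, and these $2sn$ gadget vertices are made adjacent to everything on the other side except their own originals/copies. Because the gadget vertices are ``universally includable,'' the forward map simply adds \emph{all} of them, contributing a fixed $sn$ to each side, so the arithmetic is exact ($\theta_L=k_L+sn$, $\alpha'=\alpha+2sn$); and because each original vertex's $s$ units of slack are consumed exactly by its opposite-side copies, any solution can be normalized (by an exchange argument) to contain all gadget vertices, after which the surviving original vertices must induce a biclique. If you want to rescue your blow-up idea you would need an analogous mechanism that makes the padding both always includable and slack-consuming; as written, your proposal has a genuine gap at its central step. (Your dummy-biclique trick for $k\geq 2$ is a reasonable add-on but is moot until the $k=1$ equivalence is established; the paper only needs $k=1$.)
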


\begin{proof}
%We show that the TBS problem is NP-hard when $k=1$.
Our proof is built on a reduction from the {\it constrained minimum vertex cover (CMVC) problem on bipartite graphs} \citep{chen2003constrained}, which is NP-hard, to the {\em TBS} problem when $k=1$.

Our reduction consists of two steps. 

The first step reduces the CMVC problem to the size-constrained maximum biclique problem in polynomial time. 
Given a bipartite graph $G(L\cup R,  E)$ and two constraints $k_L$ and $k_R$, the CMVC problem aims to find a minimum subset of vertices $T\subseteq V$ such that (1) every edge has at least one endpoint in the subset, (2) the subset contains at most $k_L$ vertices in $L$ and at most $k_R$ vertices in $R$.
By the classical reduction from vertex cover to independent set \citep{kleinberg2006algorithm}, if $T$ is a minimum vertex cover (with at most $k_L$ vertices in $L$ and at most $k_R$ vertices in $R$), then $V\setminus T$ is a maximum independent set (with at least $|L|-k_L$ vertices in $L$ and at least $|R|-k_R$ vertices in $R$). 
% {\color{red} We construct a graph $C(L,R,E')$, where $E'=\{(u,v)|u\in L, v\in R, (u,v)\notin E\}$. And then $C[V\setminus T]$ is the maximum biclique in $C$ with at least $|L|-k_L$ vertices in $L$ and at least $|R|-k_R$ vertices in $R$.} 
And then in the complement graph $\overline{G}$ of $G$, $\overline{G}[V\setminus T]$ is the maximum biclique in $\overline{G}$ with at least $|L|-k_L$ vertices in $L$ and at least $|R|-k_R$ vertices in $R$.
In summary, $T\subseteq V$ is a constrained minimum vertex cover in $G$ if and only if $V\setminus T$ is a size-constrained maximum biclique in {$\overline{G}$}.

The second step reduces the size-constrained maximum biclique problem to the {\em TBS} problem in polynomial time. We define the decision version of these two problems:\newline
(i) BICLIQUE: given a bipartite graph $G=(L\cup R,E)$ and three positive integers $k_L$, $k_R$, and $\alpha\geq k_L+k_R$, is there a biclique of size $\alpha$, with at least $k_L$ vertices in left side, at least $k_R$ vertices in right side?\newline
(ii) BISPLEX: given a bipartite graph $G=(L\cup R,E)$ and four positive integers $s$, $\theta_L$, $\theta_R$, and $\alpha'\geq \theta_L+\theta_R$, is there an $s$-biplex of size $\alpha'$ with at least $\theta_L$ vertices in left side, at least $\theta_R$ vertices in right side?

% Given an input of BICLIQUE $\langle G,k_L,k_R,\alpha\rangle$, the crucial step in our proof is to construct a graph $G'$ that is the input of the BISPLEX.
% First, we build $2s$ graphs: $G_i=(L_i\cup R_i,E_i)$ ($i=1,\ldots,2s$), where $L_i=L,R_i=R$  for $i\leq s$, $L_i=R,R_i=L$ for $i > s$, and $E_i=\varnothing$. Notice that each vertex $u_k\in L$ (resp. $v_k\in R$) has a copy in every $L_i$ with $i\leq s$ (resp. $R_i$ with $i\leq s$) and in every $R_i$ with $i>s$ (resp. $L_i$ with $i> s$). 
% Next, each vertex in $L$ (resp. $R$) is connected to all vertices in $R_i$ (resp.$L_i$) with $i=1,\ldots,2s$. And each vertex in $L_i$ (resp. $R_i$) is connected to all vertices in $R_i$ (resp. $L_i$) with $i=1,\ldots,2s$ except its copies.
% We denote these connections as $E'$.
% The constructed graph is defined as $G'=(L\cup L'\cup R\cup R',E\cup E')$ where $L'= \bigcup_{i=1}^{2s}L_i$, $R'=\bigcup_{i=1}^{2s}R_i$.

\begin{figure}[h]
    \centering
    \includegraphics[width=0.5\linewidth]{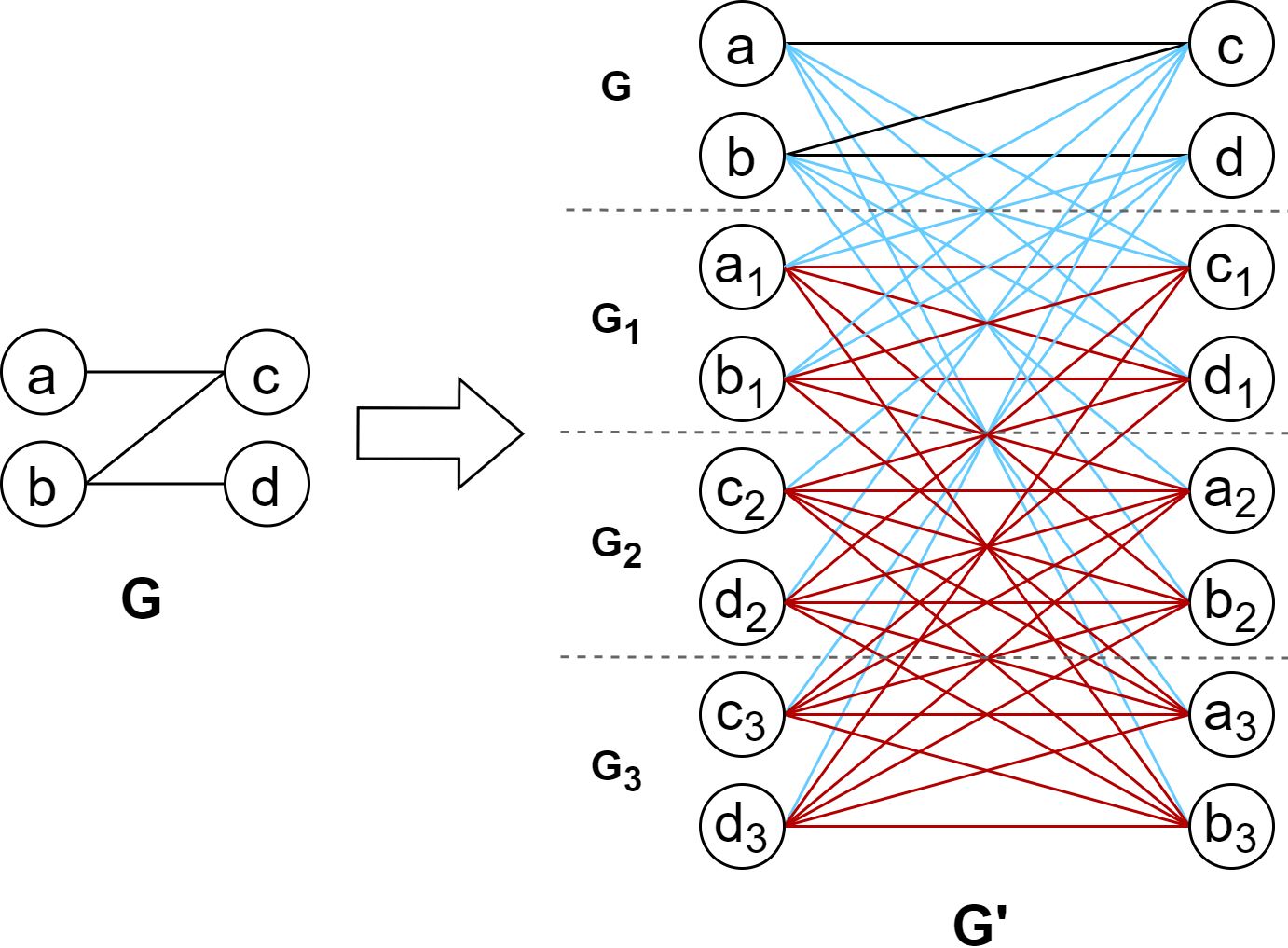}
    \caption{An example of the graph construction process.}
    \label{fig:construction}
\end{figure}

Given an input of BICLIQUE $\langle G,k_L,k_R,\alpha\rangle$, the crucial step in our proof is to construct a graph $G'$ that is the input of the BISPLEX. 
{
Firstly, we build $2s-1$ graphs: $G_i=(L_i\cup R_i,E_i)$ ($i=1,\ldots,2s-1$), where $L_i=L,R_i=R$  for $i\leq s-1$, $L_i=R,R_i=L$ for $i \geq s$, and $E_i=\varnothing$ for $i=1,\ldots,2s-1$. Now, each vertex $u_k\in L$ (resp. $v_k\in R$) has a copy in every $L_i$ with $i=s,\ldots,2s-1$ (resp. $R_i$ with $i=s,\ldots,2s-1$). 
Next, we connect each vertex in $L$ (resp. $R$) to all vertices in $R_i$ (resp.$L_i$) where $i=1,\ldots,2s-1$ except its copies. And we connect each vertex in $L_i$ (resp. $R_i$) to all vertices in $R_i$ (resp. $L_i$) with $i=1,\ldots,2s-1$ except its copies.}
We denote these connections as $E'$.
The constructed graph is defined as $G'=(L\cup L'\cup R\cup R',E\cup E')$ where $L'= \bigcup_{i=1}^{2s-1}L_i$, $R'=\bigcup_{i=1}^{2s-1}R_i$.
An example of this construction is shown in Figure~\ref{fig:construction}.

Define $\theta_L:=k_L+(2s-1)|L|$ 
, $\theta_R:=k_R+(2s-1)|R|$ and $\alpha':=\alpha+k_L+k_R+(2s-1)n$,where $n = |L\cup R|$.
In the following, we prove that there is a solution for BISPLEX on inputs $\langle G',s,\theta_L,\theta_R,\alpha'\rangle$ 
if and only if there is a solution for BICLIQUE on inputs $\langle G,k_L,k_R,\alpha\rangle$.

$\Leftarrow$.
Given a solution of BICLIQUE induced by $B=B_L\cup B_R$ in $G$.
We will prove that $B'=B_L\cup L'\cup B_R\cup R'$ is a solution for BISPLEX in $G'$. {
According to the construction of $G'$, each vertex $u\in B_L$ (resp. $v\in B_R$) 
is connected to all vertices in $B_R\cup R'$ (resp. $B_L\cup L'$) except its $s$ copies.
And each vertex $u\in L'$ (resp. $v\in R'$) has exactly $s$ anti-neighbours in $B_R\cup R'$ (resp. $B_L\cup L'$).} Therefore, $B'$ is an $s$-biplex.
Moreover, $B'$ contains exactly $\alpha+(2s-1)n$ vertices with $|B_L|+(2s-1)|L|\geq k_L+(2s-1)|L|$ vertices in left side, and $|B_R|+(2s-1)|R|\geq k_R+(2s-1)|R|$ vertices in right side. Thus, $B'$ is a solution for BISPLEX on inputs $\langle G',s,\theta_L,\theta_R,\alpha'\rangle$.

$\Rightarrow$.
Given that there is no solution $B$ for BICLIQUE in $G$.
Assume, for the sake of contradiction, that there exists a solution $P=P_L\cup P_R$ for BISPLEX on inputs $\langle G',s,\theta_L,\theta_R,\alpha'\rangle$.

We claim that given $P$, there always exists an $s$-biplex induced by $P'=P'_L\cup P'_R$ such that 
$P'_L\supset L'$, $P'_R\supset R'$, $|P'_L|\geq \theta_L$, $|P'_R|\geq \theta_R$, and $|P'|=\alpha'=\alpha+(2s-1)n$.
When $P_L\supset L'$ and $P_R\supset R'$, the claim clearly follows.
When $L'\setminus P\neq \varnothing$ or $R'\setminus P\neq \varnothing$, the following proves the claim by constructing $P'$: We initialise $P'=P$.
For a vertex $u' \in L'\setminus P'$, w.l.o.g., assume $u'$ is the copy of a vertex $u\in L\cup R$ (such vertex $u$ must exist according to the construction of $G'$). 
If $u\in P'_R$ (resp. $u\in P'_L$), then we remove from $P'$ all vertices in $\overline{N}_{P'\cap L}(u)$ (resp. $\overline{N}_{P'\cap R}(u)$), and add  all vertices in $\overline{N}_{L'\setminus P'}(u)$ (resp. $\overline{N}_{R'\setminus P'}(u)$) into $P'$. $P'$ remains an $s$-biplex because the hereditary property and vertex in $\overline{N}_{L'\setminus P'}(u)$ (resp. $\overline{N}_{R'\setminus P'}(u)$) is connected to all vertices except its $s$ copies.
Since $|\overline{N}_{L'\setminus P'}(u)|-|\overline{N}_{P'\cap L}(u)|=s-|\overline{N}_{L'\cap P'}(u)|-|\overline{N}_{P'\cap L}(u)|\geq 0$ (resp. $|\overline{N}_{R'\setminus P'}(u)|-|\overline{N}_{P'\cap R}(u)|=s-|\overline{N}_{R'\cap P'}(u)|-|\overline{N}_{P'\cap R}(u)|\geq 0$), the total number of vertices in the left side (resp. right side) does not decrease.
If $u\notin P'$, then we directly add $u'$ into $P'$. $P'$ remains an $s$-biplex because $u'$ has at most $s$ anti-neighbours in $G'$.
We perform the operations above for all vertices in $L'\setminus P'$ and $R'\setminus P'$.
% We do the same operations on vertices in $R'\setminus P'$.
After performing the operations, all vertices in $L'$ and $R'$ are included in $P'$. 
% By the construction of $G'$, each vertex in $P'$ has $s$ anti-neighbours in $G'$, which are exactly its copies.
Next, we refine $P'$ by randomly removing vertices from $P'_L\cap L$ and $P'_R\cap R$ such that
$|P'_L|\geq \theta_L$, $|P'_R|\geq \theta_R$, and $|P'|=\alpha+2sn$.
According to the hereditary property, the refined $P'$ is an $s$-biplex.

There are at least $k_L$ vertices in $P'_L\cap L$ and  $k_R$ vertices in $P'_R\cap R$ since $\theta_L = k_L + (2s-1)|L|$ and $\theta_R = k_R + (2s-1)|R|$. For each vertex in $P'_L\cap L$ (resp. $P'_R\cap R$), it has $s$ anti-neighbours in $R'$ (resp. $L'$). 
In other words, each vertex in $P'_L\cap L$ (resp. $P'_R\cap R$) is connected to every vertex in $P'_R\cap R$ (resp. $P'_L\cap L$). Therefore, $B=(P'_L \cap L) \cup(P'_R \cap R) $ forms a biclique where $|P'_L \cap L|\geq k_L$ ,  $|P'_R \cap R|\geq k_R$ and $|B|= \alpha$, which contradicts our initial assumption. Hence, no solution exists for BISPLEX on inputs $\langle G',s,\theta_L,\theta_R,\alpha'\rangle$.

In summary, there is a polynomial-time reduction from CMVC, which is NP-hard \citep{chen2003constrained}, to BISPLEX, implying that 
the {\em TBS} problem is also NP-hard.
\end{proof}

\subsection{Related Works}
The {\em TBS} problem  is related to the problems of listing cliques/plexes on non-bipartite graphs and bipartite graphs.
% of searching the top-k maximum $s$-biplex with constraints.
% To the best of our knowledge, there is no literature studying the same problem as ours. The most similar issue was $MaxBP$, studied in \citep{yu2023maximum}, which is the search for the top $k$ $s$-biplexes with the maximum number of edges with constraints like us,\citep{luo2022maximum}, searching for the maximum $s$-biplexes with the maximum number of edges without constraints and \citep{yu2022efficient}, listing all maximal $s$-biplexes. So, we'll introduce those works similar to ours and talk about the differences and similarities with our work.\par

% Biclique and $s$-biplex in bipartite graphs correspond to special cases of clique and s-plex in other graphs, while biclique is a bipartite subgraph where each vertex is adjacent to all vertices on the opposite side, and clique is a subgraph where each vertex is adjacent to all vertices. Then $s$-biplex and s-plex is the relaxed structure of biclique and clique respectively.
% Therefore, searching for $s$-biplexes on bipartite graphs and searching for s-plexes on other graphs are somewhat similar in certain aspects. 

\noindent {\bf Listing cliques and $s$-plexes on non-bipartite graphs.}
Cliques are known as a complete graph where each pair of vertices is connected by an edge.
$s$-Plex is a relaxation of clique by allowing at most $s$ non-adjacent vertices for each vertex.
Intensive efforts have been devoted to searching cliques \citep{bron1973algorithm,jin2022fast} and s-plexes \citep{conte2018d2k,zhou2020enumerating,wang2022listing}.
Branching-and-bound algorithms stand out due to their high efficiency.
\citep{bron1973algorithm} proposed the Bron-Kerbosch algorithm to list maximal cliques.  
\citep{conte2018d2k} proposed an improved branching algorithm $D2K$ using graph decomposition. $D2K$ can list maximal $s$-plexes in $O^*(2^{d_1\Delta})$ time, where $d_1$ is the 1-hop degeneracy.
On top of the Bron-Kerbosch algorithm, \citep{zhou2020enumerating} devised an improved algorithm with a pivot heuristic. Their algorithm runs in $O^*(\zeta_s^{d_1\Delta})$ time, while $\zeta_s$ is a constant larger than our $\gamma_s$.  Next, \citep{dai2022scaling} improve the time complexity to $O^*(\zeta_s^n)$.
Independently, \citep{wang2022listing} proposed the ListPlex algorithm, which further reduces the time complexity to $O^*(\zeta_s^{d_1})$.  However, using {\em ListPlex} to search $s$-biplexes needs  $O^*(\zeta_s^{\Delta d_1})$ time due to the additional time for considering vertices in opposite side.
More recently, \citep{wu2024efficient} extended the task of enumerating maximal $s$-plexes to temporal graphs.

\noindent {\bf Listing bicliques and $s$-biplexes on bipartite graphs.}
Cliques in bipartite graphs are called Bicliques, where each vertex is adjacent to all vertices on the opposite side.
$s$-Plex in bipartite graphs are called $s$-biplexes where 
each vertex is not adjacent to at most $s$ vertices on the opposite side.
% \subsubsection{Similar Work On Bipartite Graphs}
Existing research on listing bicliques has explored various approaches, including searching for maximal bicliques with the most edges \citep{sozdinler2018finding,lyu2020maximum,shaham2016finding}, finding maximal bicliques with the most vertices \citep{garey1979computers}, and identifying maximal balanced bicliques where the left-side size equals the right-side size \citep{zhou2018towards,zhou2017combining,wang2018new}.
Compared to approximation algorithms \citep{wang2018new}, branch-and-bound algorithms \citep{zhou2018towards} are preferable since they guarantee the optimal solution.
% Exact algorithm and approximate algorithm have been developed.  For example, exact methods include the branch-and-bound algorithm for the maximum balanced biclique proposed by \citep{zhou2018towards}, while approximate methods include the local search framework for an approximate solution of the maximum balanced biclique presented by \citep{wang2018new}.

For biplexes, \citep{luo2022maximum} proposed {\em MBS-Core} to find the maximal $s$-biplex with most edges in $O^*(2^{n})$ time. Their main techniques are a progressive bounding framework {\em Core} and a branch-and-bound enumeration algorithm {\em MBS}.
On top of this work, \citep{yu2023maximum} proposed FastBB to list $k$ maximal $s$-biplexes with most edges in $O^*(\gamma_s^{\Delta^3})$ time, where $\Delta$ is the maximum vertex degree and $\gamma_s<2$ is a constant. 
This paper further improves the time complexity to $O^*(\gamma_s^{d_2})$, where $d_2<\Delta^2$ is the largest 2-hop degeneracy. 

% In recent years, much research has focused on the problem of searching maximal bilicuqe, including searching maximum edges biclique \citep{sozdinler2018finding,lyu2020maximum,shaham2016finding}, namely, the maximal biclique with most edges, searching the maximum balanced biclique \citep{zhou2018towards,zhou2017combining,wang2018new} that vertices number of the left side is equal to the right side, and searc,hing the maximum biclique \citep{garey1979computers}, 
% which can be solved in polynomial time. Besides, the first two problems are NP-hard. To address these challenges, both the exact algorithm and the approximate algorithm have been developed.  For example, exact methods include the branch-and-bound algorithm for the maximum balanced biclique proposed by \citep{zhou2018towards}, while approximate methods include the local search framework for an approximate solution of the maximum balanced biclique presented by \citep{wang2018new}.

\section{A Basic Branching algorithm}
In this section, we propose  a branching algorithm  for the
{\em top-$k$ $s$-biplex search (TBS)} problem
% finding top-$k$ largest maximal $s$-biplexes 
and analyse its time complexity. The pseudo-code is given in Algorithm~\ref{alg:mvbp_topk}.

\begin{algorithm}[ht!]
\caption{MVBP} \label{alg:mvbp_topk}
 \begin{algorithmic}[1]
 \item[INPUT:] a graph $G(L\cup R,E)$, four integers $s,\theta_L,\theta_R$, and $k$.
 \item[OUTPUT:] top-$k$ maximal $s$-biplexes with most vertices in $\mathcal{S}$ or all $s$-biplexes if $|\mathcal{S}|\leq k$.
% \State initialise $e_{num}\leftarrow 0$ as a global variable
% \State Preprocessing $G$ using Decomposition.
% \State $\eta=\{v_1,\ldots,v_n\} \leftarrow $ sort $ L\cup R$ by 2-hop degeneracy order
% \For{$v_i\in L\cup R $ by $\eta$}
% \State Build $G_i=G[\{v_i\}\cup N_{>\eta}(v_i)\cup  N^2_{>\eta}(v_i)\cup N^3_{>\eta}(v_i)]$
% \For{each $S\subseteq N^3_{>\eta}(v_i)$ with $|S|\leq s$}
\State initialise $\mathcal{S}$ with $k$ empty sets as $k$ $s$-biplexes at hand, { $lb_L\leftarrow \theta_L,lb_R\leftarrow \theta_R$}
% , $t\leftarrow \omega_P$ {\color{blue} ??} 
% \While{$|\mathcal{S}|\leq k$}
% \State initialise single-sided bounds according to \eqref{equ:initilizeBounds}
% \State $r\leftarrow n-t$
% \State $lb_R\leftarrow \theta_R, ub_L\leftarrow min(d_R^{lb_R}+s,t-lb_R)$ as global variables
% \State $lb_L\leftarrow \theta_L, ub_R\leftarrow min(d_L^{lb_L}+s,t-lb_L)$ as global variables 
% \State $\mathcal{S}\leftarrow \mathcal{S}\cup$ 
% \emph{MVBPF}$(G,s,n-t,\emptyset,L\cup R,\emptyset)$
% \State $t\leftarrow t-1$
% \EndWhile
% \State {\color{red} initialie $lb_L\leftarrow \theta_L,lb_R\leftarrow \theta_R
% ,ub_L\leftarrow \Delta^{(\theta_R)}_R+s,ub_R\leftarrow \Delta^{(\theta_L)}_L+s
% $}
\State run {\emph MVBPF$(G,s,\emptyset,L\cup R,\emptyset)$}  and update $\mathcal{S}$
% \EndFor
% \EndFor
\Statex {\bf Procedure} {\emph MVBPF$(G,s,P,C,X)$} 
 % \State $C' \gets \{v: v\in C  \text{ and } G[\{v\}\cup{P}] \text{is an $s$-biplex} \}$ 
 \State $C\gets \{v: v\in C  \text{ and } G[\{v\}\cup{P}] \text{is an $s$-biplex} \}$
% \State $r'=r-|C\setminus C'|$
% \State $C\gets C'$, and 
\State $X \gets \{v: v\in X  \text{ and } G[\{v\}\cup{P}] \text{ is an $s$-biplex} \}$ 
\Statex //* Pruning
\If{$X=\varnothing$ and $C=\varnothing$} update $\mathcal{S}$ and {\bf return}  \Comment{PR1}
\EndIf
\If{ $\forall S\in \mathcal{S}$ it holds that $|P\cup C|\leq |S|$} {\bf return} \Comment{PR2}
\EndIf
% \For{each $u\in X$}  
 \If{{ there exists $u\in X$ such that $|\overline{N}_{P\cup C}(u)|\leq s$, and for each $v\in P\cup C$ with $|\overline{N}_{P\cup C}(v)|\geq s$ it holds that $(u,v)\in E$}}
 {\bf return}   \Comment{PR3}  \label{line:maximal exclusion}
 \EndIf

\If{for each $u\in P\cup C$, it holds that $|\overline{N}_{P\cup C}(u)|\leq  s$} 
update $\mathcal{S}$ and
 {\bf return}  
 \Comment{PR4}
 \EndIf

 % \EndFor
 \If{$|P_L\cup C_L|<lb_L$ or $|P_R\cup C_R|<lb_R$} {\bf return} \Comment{PR5}
 \EndIf 
  \Statex //* Branching
 \If{$\exists u\in P$, $|\overline{N}_{P\cup C}(u)|\geq s+1$} \Comment{BR1} \label{line:uinP}
 \State 
  select pivot $u_p\leftarrow arg \max_{v\in P,|\overline{N}_{P\cup C}(v)|\geq s+1}|\overline{N}_{P\cup C}(v)|$ 
 \State let $u_1,\ldots,u_b$ be vertices in $\overline{N}_C(u_p)$ \label{line:ordering}
 \State $s'\leftarrow{s-|\overline{N}_P(u_p)|}$
 \State run \emph{MVBPF}$(G,s,P,C\backslash \{u_1\},X\cup \{u_1\})$
 \For{$i=2,\ldots,s'$}
 \State \emph{MVBPF}$(G,s,P\cup \{u_1,\ldots,u_{i-1}\},  C\backslash \{u_1,\ldots,u_{i}\}, X\cup \{u_i\})$
 \EndFor
  \State \emph{MVBPF}$(G,s,P\cup \{u_1,\ldots,u_{s'}\}, C\backslash \{u_1,\ldots,u_b\}, ${ $X\cup\{u_{s'+1},\ldots,u_b\})$}
 \Else  \Comment{BR2} \label{line:uinC}
 \State 
 select pivot $u_p\leftarrow arg \max_{v\in C}|\overline{N}_{P\cup C}(v)|$

  \State \emph{MVBPF}$(G,s,P\cup \{u_p\},C\backslash \{u_p\},X)$
 \State apply PR6 and update $P$ \Comment{PR6}
 \State \emph{MVBPF}$(G, s, P, C\backslash \{u_p\}, X\cup \{u_p\})$
 \EndIf

 \end{algorithmic}
 \end{algorithm}

\subsection{The Basic Algorithm}
The proposed algorithm {\emph MVBPF} is a depth-first branching algorithm following the setting of the basic Bron-Kerbosch algorithm \citep{bron1973algorithm}.
We employ three disjoint sets  $P$, $C$ and $X$ in each branch.
$P=P_L\cup P_R$ denotes the \textit{partial solution set}, which contains vertices that must be included in every subsequent branch.
$C=C_L\cup C_R$ denotes the \textit{candidate set} which contains vertices that can be included in $P$.
$X=X_L\cup X_R$ denotes the \textit{exclusion set} whose vertices must not be included in any subsequent branches. 
In the following of the paper, the subscripts $L$ and $R$ identify the vertices on the left side and the right side, respectively.

{
% Notice that vertices in the exclusion set $X$ are possible to be added in $P$ to form an $s$-biplex. But to avoid duplication of branches, we forbid the vertices in $X$ to be included in subsequent branches. This "forbidden" might result in the $s$-biplex in $P$ is not a maximal $s$-biplex. For checking the maximality, we maintain all vertices in $X$ that can be added in $P$ to form an $s$-biplex (line 4). In particular, $X=\emptyset$ means that the current $s$-biplex is maximal (line 5).
Note that a vertex in the exclusion set $X$ can form a larger $s$-biplex with $P$ and has been visited in previous branches. 
When $X=\varnothing$, then the current $s$-biplex $P$ is not visited before (line 5).
%However,  vertices in $X$ are forbidden in subsequent branches, which may result in the $s$-biplex formed by $P$ not being maximal. 
Therefore, we  maintain $X$ in order to avoid revisiting the $s$-biplex (line 4).
%, we maintain $X$ as the set of vertices that can be added to $P$ to form an $s$-biplex (line 4). 
%Specifically, 
}

%Like the Bron-Kerbosch algorithm, 
In each branch, it works on the following subproblem. Given a partial solution set $P=P_L\cup P_R$, a candidate set $C =C_L\cup C_R$ and an exclusion set $X=X_L\cup X_R$, enumerate the vertex set $ P\subseteq B \subseteq P\cup C$ such that any vertex $u\in X$ cannot form an $s$-biplex with $B$. 
When such a vertex set $B$ is found, it means that a maximal $s$-biplex is found. 
Because we are only interested in the $k$ largest solutions, we use $\mathcal{S}$ to keep track of the largest $k$ solutions found so far.
% {\color{red}[Don't use $\mathcal{G}$ to denote the $k$ best-known solutions. Use other letters like $\mathcal{B}, \mathcal{S}\ldots$]}
In fact, this branching search scheme has been widely used to enumerate different maximal subgraphs \citep{bron1973algorithm,tomita2006worst,conte2018d2k,zhou2020enumerating}.

In each branch, {\emph MVBPF} consists of a pruning component and a branching component. The pruning component tries to reduce the candidate set $C$ or halt the search. The branching component test different possibilities that a vertex of $C$ is in the maximal $s$-biplex or not.
%In the following section, we introduce them one by one.
We first present six basic pruning principles used in the pruning component. 
%Any satisfaction with the five pruning rules will result in pruning branches. 
% {\color{red}[I explained the principles of the algorithm! Not really satisifed with this part, so please improve this after all the rest are done!]}

\begin{itemize}[leftmargin=*] 
    \item {\bf PR1.} If both the exclusion set $X$ and the candidate set $C$ are empty, then no more vertices can be further included in $P$. In this case, $P$ has been a maximal $s$-biplex.
    
    \item {\bf PR2.} If the union of partial set $P$ and candidate set $C$ is not larger than the current $k$-th $s$-biplex, then the top-$k$ maximal $s$-biplexes will not appear in all subsequent branches. In this case, all subsequent branches can be pruned. % {\color{red}[How?]}
    
    \item {\bf PR3.} If a vertex $u$ in exclusion set $X$ has less than $s$ anti-neighbours in $P\cup C$, and links to all vertices in $P\cup C$ that have no less than $s$ anti-neighbours in $P\cup C$, then the vertex $u$ can be included by every $s$-biplexes found in $P\cup C$. 
    This means that all these $s$-biplexes can not be maximal, and, therefore, can not be one of the top-$k$ maximal $s$-biplexes. In this case, all subsequent branches can be pruned.

    \item {\bf PR4.} If $P\cup C$ induces an $s$-biplex, then the subsequent branches cannot get maximal $s$-biplexes. 
    Furthermore, not returning via PR3 ensures $P\cup C$ induces a maximal $s$-biplex.

    \item {\bf PR5.} Let $lb_L$ (resp. $lb_R$) denote the lower bound of the number of vertices in $L\cap P$ (resp. $R\cap P$).
    If the total number of vertices $|P_L\cup C_L|$ (resp.$|P_R\cup C_R|$ ) has already been smaller than the lower bound $lb_L$ (resp. $lb_R$), then subsequent branches cannot get $s$-biplexes that satisfy lower bounds, and thus can be pruned.
\end{itemize}

After the pruning above, we can guarantee that there exists a vertex in $P\cup C$ with at least $s+1$ anti-neighbours in $G[P\cup C]$. 

\begin{itemize}[leftmargin=*]
    \item {\bf PR6.} Given that $|\overline{N}_{P\cup C}(u)| \leq s$  for each $u\in P$, if the pivot $u_p\leftarrow arg \max_{v\in C}|\overline{N}_{P\cup C}(v)|$ has no anti-neighbour in $P$ and has exactly $s+1$ anti-neighbours in $C$,  then every anti-neighbour of $u_p$ has at most $s+1$ anti-neighbours in $P\cup C$, and has at most $s$ anti-neighbours in $P\cup C\setminus \{u_p\}$. Therefore, each maximal $s$-biplex that excludes $u_p$ must includes $\overline{N}_C(u_p)$. 
\end{itemize}
% The single-side bounds are initialised as:
% \begin{align} \label{equ:initilizeBounds}
% \begin{split}
%     lb_R\leftarrow \theta_R, ub_L\leftarrow d_R^{lb_R}-s  \\
%     lb_L\leftarrow \theta_L, ub_R\leftarrow d_L^{lb_L}-s
% \end{split}
% \end{align}

% {\color{red} branching}
% \noindent
The branching component consists of two branching rules.
\begin{itemize} [leftmargin=*] 
    \item {\bf BR1.} If there exists a vertex in $P$ such that $|\overline{N}_{P\cup C}(u_p)|\geq s+1$, then we select pivot $u_p\leftarrow$
    { $arg \max_{v\in P, |\overline{N}_{P\cup C}(v)|\geq s+1}|\overline{N}_{P\cup C}(v)|$} and generate $s'+1$ branches where $s'=s-|\overline{N}_P(u_p)|$. There are at most $s'$ vertices in  $\overline{N}_C(u_p)$ that can be included in $P$.  
    % Denote $\overline{N_P}(u_p)$ as $a$, and 
    Let $\{u_1,u_2\ldots,u_b\}$ be an arbitrary order of vertices in $\overline{N}_C(u_p)$. { Note that $|\overline{N}_C(u_p)|=|\overline{N}_{P\cup C}(u_p)|-|\overline{N}_P(u_p)|\geq s'+1$.} we generate $s'+1$ branches:\par
    (1) In the first branch, 
    % we move $\emptyset$ from $C$ to $P$, and 
    move $u_1$ from $C$ to $X$;\par
    (2) For $2\leq i\leq s'$, move $\{u_1,\ldots,u_{i-1}\}$ from $C$ to $P$, and move $u_i$ from $C$ to $X$.\par
    (3) In the last branch, move $\{u_1,\ldots,u_{s'}\}$ from $C$ to $P$ and $\{u_{s'+1},\ldots,u_b\}$ from $C$ to $X$ (i.e., all vertices in $\overline{N}_C(u_p)$ are removed from $C$ ).\par

    \item {\bf BR2.} If for each $u\in P$ we have $ |\overline{N}_{P\cup C}(u)| \leq s$, then we select a pivot vertex $u_p$ with the maximum anti-neighbours in $G[P\cup C]$ and generate two branches: move $u_p$ from $C$ to $X$, move $u_p$ from $C$ to $P$.
\end{itemize}

% After the pruning above, we can guarantee that there exists a vertex in $P\cup C$ with at least $s+1$ anti-neighbours in $G[P\cup C]$. Next, we select a pivot vertex $u_p$ with maximum non-neighbours in $G[P\cup C]$ and do branching based on it:
% \begin{itemize}
%     \item If $u_p\in C$, we generate two distinct branches: move $u_p$ from $C$ to $X$, and move $u_p$ from $C$ to $P$. %The latter case will fall into the next case.
%     \item If $u_p\in P$, denote $\overline{N_P}(u_p)$ as $a$, and $\overline{N_C}(u_p)$ as $b$. Obviously, $a+b>s$. Let $s'=s-|\overline{N_P}(u_p)|$. Then, there are at most $s'$ vertices in  $\overline{N_C}(u_p)$ can be included in $P$. Let $\{u_1,\ldots,u_b\}$ be an arbitrary order of vertices in $\overline{N_C}(u_p)$. we generate $s'+1$ branches:\par
%     (1) In the first branch, 
%     % we move $\emptyset$ from $C$ to $P$, and 
%     move $u_1$ from $C$ to $X$;\par
%     (2) For $2\leq i\leq s'$, move $\{u_1,\ldots,u_{i-1}\}$ from $C$ to $P$, and move $u_i$ from $C$ to $X$.\par
%     (3) In the last branch, move $\{u_1,\ldots,u_{s'}\}$ from $C$ to $P$ and $\{u_{s'+1},\ldots,u_b\}$ from $C$ to $X$ (i.e.,all vertices in $\overline{N_C}(u_p)$ are removed from $C$ ).\par
% \end{itemize}

\begin{figure}[ht!]
\centering 
\includegraphics[width=0.6\linewidth]{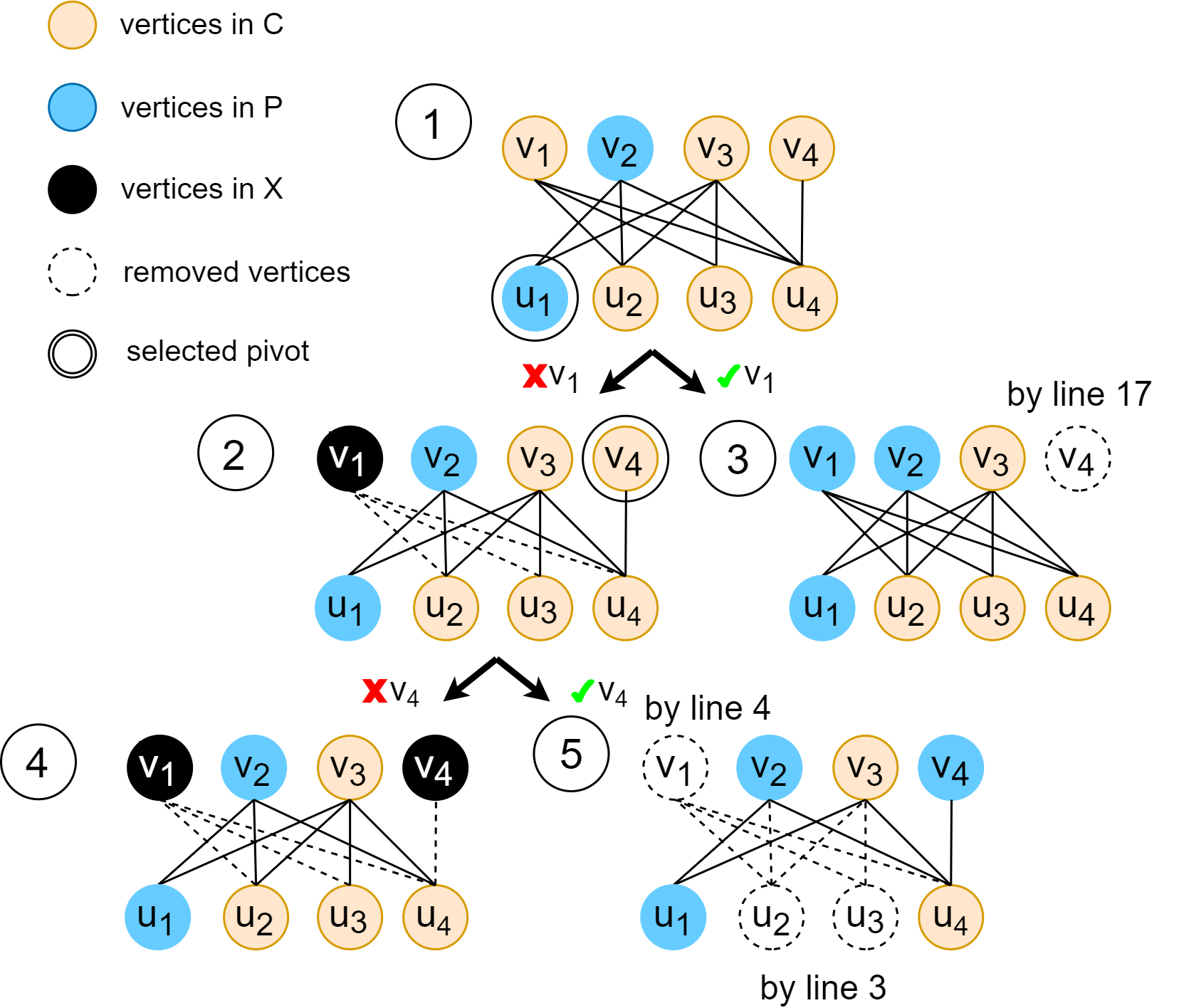}
\caption{An example of Branching where $\theta_L=\theta_R=0,s=1$. Blue, yellow and black vertices represent vertices in $P,C$ and $X$.
In \textcircled{1}, pivot is $u_1\in P$ (line~\ref{line:uinP}), and $\overline{N}_C(u_1)=\{v_1,v_4\},\overline{N}_P(u_1)=\emptyset $, so we create $s'+1=s-|\overline{N}_P(u_1)|+1=2$ branches (BR1), i.e., \textcircled{2} (line 14) and \textcircled{3} (line 17). 
In \textcircled{2}, the vertex $v_4\in C$ is selected to be the pivot (line~\ref{line:uinC}). We then creates 2 branches (BR2), i.e.,
\textcircled{4} by moving $v_4$ from $C$ to $X$ (line 22) and \textcircled{5} by moving $v_4$ from $C$ to $P$ (line 20). 
In \textcircled{3} and \textcircled{5}, $P\cup C$ has been a $1$-biplex, so we stop the branching (line 8). In \textcircled{4}, vertex $v_1$ in $X$ can be added to form a larger $1$-biplex, so we stop the branching by PR3 (line 7).
}
\label{fig:branchingStrategy}
\end{figure}

The branching above can cover all potential branches.
Firstly, for each vertex in $C$, the procedure branches to move it to $X$ or to $P$ which covers all possible placement of the vertex. 
% {\color{red} please explain the branches when $u_p\in P$}
Next, for each vertex in $P$ with more than $s$ anti-neighbours, the procedure cover the branches that exclude $u_i$ in (1) and (2), and cover the brances that include $u_i$ in (2) and (3). 
An example of the branching strategy is shown in Fig.~\ref{fig:branchingStrategy}.
\subsection{Time Complexity of MVBP}\label{BaseTimeComplexity}
In this section, we analyse the worst-case running time of Algorithm~\ref{alg:mvbp_topk}. The final time complexity results are shown in the following theorem.

\begin{theorem} \label{thm:timemvbp}
Given a bipartite graph $G=(L\cup R,E)$, MVBP finds top-$k$ maximal $s$-biplexes in time
$O(n\Delta\gamma_s^{n})$, where $\Delta$ is the maximum degree of $G$, and $\gamma_s$ is the largest real root of $x^{s+4} - 2x^{s+3} + x^2 -x +1 =0$, which is strictly less than 2. { For $s=0,1,2,3$, $\gamma_s=1.465,1.754,1.888, 1.947$, respectively.}
\end{theorem}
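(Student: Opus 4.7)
I would follow the standard branching-algorithm recipe: bound the total running time by (number of recursive calls) $\times$ (cost per call). The natural measure is $c:=|C|$, which starts at $n$ and strictly decreases along every branch; writing $T(c)$ for the worst-case number of recursive invocations out of a state of size $c$, the problem reduces to deriving a branching recurrence for $T(c)$ and reading off $\gamma_s$ as the dominant root of its characteristic polynomial.

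The per-call cost is the easy half: recomputing the anti-neighborhoods in lines 3--4, evaluating PR1--PR6, and selecting the pivot all take $O(n\Delta)$ time via adjacency-list look-ups. For BR1 let $u_p\in P$ be the pivot, $p=|\overline{N}_P(u_p)|\leq s$, $s'=s-p$, and $b=|\overline{N}_C(u_p)|\geq s'+1$. The $s'+1$ branches shrink $c$ by $1,2,\ldots,s',b$, and the worst sub-case $s'=s$, $b=s+1$ produces $T(c)\leq\sum_{i=1}^{s+1}T(c-i)$, whose characteristic equation $x^{s+2}-2x^{s+1}+1=0$ has root strictly below $\gamma_s$. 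So BR1 alone is never the bottleneck.

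For BR2 every vertex of $P$ has at most $s$ anti-neighbors in $P\cup C$, forcing the pivot $u_p\in C$ to have $\geq s+1$ (else PR4 would have fired). I would split on whether PR6 applies. When it does ($p=0$, $b=s+1$), the $X$-branch absorbs $\overline{N}_C(u_p)$ into $P$ and yields the benign recurrence $T(c)\leq T(c-1)+T(c-s-2)$. Otherwise either (2a) $p\geq 1$, or (2b) $p=0$ and $b\geq s+2$. In case 2a, after $u_p$ joins $P$ the next call fires BR1 with $s'\leq s-1$, and the combined worst recurrence turns out to coincide with the BR1 bound above. Case 2b is the bottleneck: BR2 costs $1$ in $c$, then the forced BR1 has $s'=s$ and $b\geq s+2$, spawning sub-branches of decreases $1,2,\ldots,s,s+2$; together with the sibling $X$-branch this produces
\begin{equation*}
T(c)\ \leq\ T(c-1)\ +\ \sum_{i=2}^{s+1}T(c-i)\ +\ T(c-s-3).
\end{equation*}

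The final step is algebraic: the characteristic equation $x^{s+3}=\sum_{i=2}^{s+2}x^i+1$, multiplied through by $(x-1)$, telescopes the geometric tail and yields exactly $x^{s+4}-2x^{s+3}+x^2-x+1=0$. Evaluating this polynomial at $x=1$ and $x=2$ shows it has a root in $(1,2)$, and direct numerical root-finding gives $\gamma_1\approx 1.754$, $\gamma_2\approx 1.888$, $\gamma_3\approx 1.947$. The hard part is the case analysis inside BR2: one must verify that case 2b is genuinely worse than both case 2a and every alternative interleaving of BR1, BR2, and PR6, since it is exactly this asymmetry that produces the unusual $x^2-x+1$ tail in the polynomial. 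Multiplying the resulting branching factor by the per-call cost then delivers the claimed $O(n\Delta\,\gamma_s^{\,n})$ bound.
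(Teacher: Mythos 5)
Your proposal is correct and follows essentially the same route as the paper's proof: measure $|C|$, per-branch cost $O(n\Delta)$, a case split on whether the BR2 pivot has an anti-neighbour in $P$ and on $|\overline{N}_C(u_p)|$, identification of the case $p=0$, $b\ge s+2$ as the bottleneck recurrence $T(c)\le\sum_{i=1}^{s+1}T(c-i)+T(c-s-3)$, and the same $(x-1)$-multiplication yielding $x^{s+4}-2x^{s+3}+x^2-x+1=0$. The only cosmetic difference is in the PR6 case: you use the uncombined recurrence $T(c-1)+T(c-s-2)$ and dispose of it by an easy dominance check, whereas the paper combines it with the subsequent BR1 branching and compares the two characteristic polynomials explicitly via the root $\kappa_s$; both verifications are routine and lead to the same bound.
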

{Because MVBP is a typical exponential branching algorithm, we apply the standard techniques of bounding the complexity of a branching algorithm. Specifically, the complexity of a branching algorithm is  bounded by its number of branches in the search tree multiplied by the time of generating the branches \footnote{Interested readers can refer to Chapter 2 in the monograph \citep{fomin2013exact} for details}.} 
\begin{proof}
% The proof of the theorem is as follows.
% As we mentioned in section 3.1, for $u_p\in P$, in the first branch, we move 0 vertex to the vertex set $P$. Then in the $i$-th branch, we move i-1 vertices to $P$. Finally, in the last branch, we move $s'$ vertices to $P$.\par
Denote $T(|C|)$ as the total number of leaf nodes in the branching tree. Note that $|C|=|L\cup R|=n$ (the set $C$ will be reduced in our improved algorithm). 
% Write $T(n)$ as the number of leaf nodes in the branching tree. 

For $u_p\in P$ (in Line~\ref{line:uinP}), there are $s'+1$ branches. In particular, in the $i$-th branch where $i=1,\ldots,s'$, $i$ vertices are removed from $C$ for each branch. And for the last branch, there are $b$ vertices removed from $C$. Therefore, $T(n)\leq \sum_{i=1}^{s'}T(n-i)+T(n-b)$.
% where $b=|\overline{N}_C(u_p)|>s$.
{ Given $s\geq s'$ and $b=|\overline{N}_C(u_p)|=|\overline{N}_{P\cup C}(u_p)|-|\overline{N}_{P}(u_p)|\geq s+1-|\overline{N}_{P}(u_p)|= s'+1$},
% Notice that $s'\leq s$, and $b\geq s+1${\color{red} and $b\geq s'+1$}, 
we have $T(n)\leq \sum_{i=1}^{s'+1}T(n-i)\leq \sum_{i=1}^{s+1}T(n-i)$.

For $u_p\in C$, there are two branches. In the first branch, $u_p$ is removed from $C$. The second branch moves $u_p$ from $C$ to $P$. Notice that $u_p\in P$ further creates $s'+1$ branches as mentioned above. Therefore, there are at most $s'+2$ branches on $u_p$. And we have $T(n)\leq \sum_{i=1}^{s'+1}T(n-i)+T(n-b-1)$, where $s'=s-|\overline{N}_P(u_p)|$ and $b=|\overline{N}_C(u_p)|$.

The following discuss the bounds of $s'$ and $b$. 
There are three cases.

\noindent
\textbf{Case 1:} $u_p\in C$ has at least one anti-neighbour in $P$.
In this case, $|\overline{N}_P(u_p)|\geq 1$ and thus 
$s'=s-|\overline{N}_P(u_p)|\leq s-1$. 
% By the branching rules, 
% we have $T(n)\leq \sum_{i=1}^{s'+1}T(n-i)+T(n-b-1)$, where $s'=s- (|P|-|N_P(u_p)|) < s$. 
Notice that 
$b=|\overline{N}_C(u_p)|\geq s'+1$
% {\color{red}  $b=|\overline{N}_C(u_p)|\geq s'+1$. In summary, $T(n)\leq \sum_{i=1}^{s'+2}T(n-i)\leq \sum_{i=1}^{s+1}T(n-i)$}.
In summary, $T(n)\leq \sum_{i=1}^{s'+2}T(n-i)
% \leq \sum_{i=1}^{s}T(n-i)+T(n-s-2)
\leq \sum_{i=1}^{s+1}T(n-i)$.

\noindent 
\textbf{Case 2:} $u_p\in C$ has no anti-neighbour in $P$, and has at least $s+2$ anti-neighbours in $C$.
In this case, $s'=s-|\overline{N}_P(u_p)|=s$ and $b=|\overline{N}_C(u_p)|\geq s+2$. Therefore $T(n)\leq \sum_{i=1}^{s+1}T(n-i)+T(n-s-3)$.

\noindent
\textbf{Case 3:} $u_p\in C$ has no anti-neighbour in $P$ and has exactly $s+1$ anti-neighbours in $C$.
In this case, $s'=s-|\overline{N}_P(u_p)|=s$ and $b=|\overline{N}_C(u_p)|=s+1$. Therefore $T(n)\leq \sum_{i=1}^{s+2}T(n-i)$.
Moreover, by PR6, we can reduce $s+2$ vertices in the branch, excluding $u_p$, by including all anti-neighbours of $u_p$.
In summary, we have $T(n)\leq \sum_{i=2}^{s+2}T(n-i)+T(n-s-2)$.
% We now discuss the largest upper bound of $T(n)$. 

Observe that the number of branches in Case $u_p\in P$ and Case 1 are bounded by that in Case 2.
In the following, we compare the number of branches in Case 2 and Case 3.

In order to bound the number of branches of case 2, we get $T(n)\leq \gamma_s^n$ where $\gamma_s$ is the largest real root of the equation $x^n - x^{n-1}-x^{n-2}- \ldots - x^{n-s-1} - x^{n-s-3}=0$, that is, the largest real root of $x^{s+4}-2x^{s+3}+x^{2}-x+1=0$.

By applying the same techniques in case 3, we get
$T(n)\leq \kappa_s^n$, where $\kappa_s$ is the largest real root of the equation $x^n - x^{n-2}-x^{n-3}- \ldots - x^{n-s-2} - x^{n-s-2}=0$, that is, the largest real root of $x^{s+3}-x^{s+2}-x^{s+1}-x+2=0$.

Consider the function $f(x)=(x^n - x^{n-2}-x^{n-3}- \ldots - x^{n-s-2} - x^{n-s-2}) - (x^n - x^{n-1}-x^{n-2}- \ldots - x^{n-s-1} - x^{n-s-3})=x^{n-s-3}+x^{n-1}-2x^{n-s-2}.$ 
Notice that $f(x)=x^{n-s-3}*(x^{s+2}+1-2x)\geq x^{n-s-3}*(x^{2}-2x+1)\geq 0$ for $s \geq 0$ and $x>1$.
And $x=1$ is a root for both $x^{s+4}-2x^{s+3}+x^{2}-x+1=0$ and $x^{s+3}-x^{s+2}-x^{s+1}-x+2=0$. we conclude that $\gamma_s\geq \kappa_s$.
Therefore, $T(n)\leq \gamma_s^n$ for all cases.
% , ending the proof of the claim.

% The proof of this claim is
% deferred to Appendix~\ref{app:claim}.
Due to the fact that each branch costs at most $O(n\Delta)$ time, we conclude that the total time complexity of MVBP is bounded by $O( n\Delta\gamma_s^n)$. 
\end{proof}

\section{An Improved Algorithm}
\label{section:techniques}
In this section, we provide an improved algorithm, as shown in Algorithm~\ref{alg:fastmvbp}, by employing three techniques: 2-hop decomposition (line~\ref{line: neighbourhood begin}-\ref{line: decomposition end}), reduction with single-side bounds (line~\ref{line: reduction_fastmvbp}), and progressive search (line~\ref{line: PB}-\ref{line: PB end}). { The value ${\Delta_{R}}^{(\theta_R)}$ (resp. ${\Delta_{L}}^{(\theta_L)}$) denote the $\theta_R$th (resp. $\theta_L$th) largest degree of vertex among vertices in $R$ (resp. $L$). An illustration of key components in Algorithm~\ref{alg:fastmvbp} is shown in Figure~\ref{fig:fastmvbp framework}}.

In the following subsections, we describe the three techniques one-by-one.

\begin{algorithm}[ht!]
\caption{FastMVBP}
\label{alg:fastmvbp}
\begin{algorithmic}[1]
\item[INPUT:] a graph $G(L\cup R,E)$, four integer $s,\theta_L,\theta_R$, and $k$.
\item[OUTPUT:] all biplexes in $\mathcal{S}$
\State initialise $\mathcal{S}$ with $k$ empty sets as top-$k$ maximal $s$-biplexes
\State let $lb$ denote the size of smallest set in $\mathcal{S}$
\State update $\mathcal{S}$ and $lb$ during the running of {\emph MVBPF} 
\State  $ub_L\leftarrow \Delta^{(\theta_R)}_R+s$,  $lb_R\leftarrow \theta_R$, $ub_R\leftarrow \Delta^{(\theta_L)}_L+s$
\State { $\Phi \leftarrow \max(ub_L/2,\theta_L)$}, $lb_L \leftarrow \Phi$ 
% \Comment{search in $[ub_L/8,ub_L]$}
\State run $computeDegeneracy$ and get
$\eta=\{v_1,\ldots,v_n\}$
\While{$lb_L > \theta_L$}  \label{line: PB}
%\State $lb_L\leftarrow \Phi$, $lb_R\leftarrow { max(lb+1-ub_L,\theta_R)}$
\State $lb_L\leftarrow \Phi$
\State {\emph MVBPD$(G,s,\eta)$}
% \If{{\color{red} $\forall Y\in \mathcal{S}$ we have $Y\neq \varnothing$ and $|Y\cap L|>ub_L$}}  break
% \EndIf
\If{{ there are $k$ non-empty sets in $\mathcal{S}$}}
{  $ub_L \leftarrow lb_L$} and break
\EndIf
\State { $ub_L\leftarrow \Phi$,}$\Phi\leftarrow \max(\frac{\Phi}{2},\theta_L)$
\EndWhile

% \If{$\forall Y\in \mathcal{S}$ we have $ Y\neq \varnothing$ and $lb_L\neq \theta_L$}
% \State $ub_L \leftarrow lb_L, lb_L\leftarrow \theta_L$, $lb_R\leftarrow \max(lb+1-ub_L,\theta_R)$
%     \State {\emph MVBPD$(G,s,\eta)$} \label{line: PB end}
% \EndIf
\State $lb_L\leftarrow \theta_L$, $lb_R\leftarrow \max(lb+1-ub_L,\theta_R)$
\State {\emph MVBPD$(G,s,\eta)$} \label{line: PB end}
\Statex {\bf Procedure} {\emph MVBPD$(G,s,\eta)$} 
\For{$i=1,\ldots,n$} \label{line: neighbourhood begin}
    \State $G_i\leftarrow G[\{v_i\}\cup N_{>\eta}(v_i)\cup  N^2_{>\eta}(v_i)\cup N^3_{>\eta}(v_i)]$ \label{line:neighbourhood decomposition}
        \For{each $S\subseteq N^3_{>\eta}(v_i) $ with $|S|\leq s$} \label{line: enumeration decomposition}
            \State run $updateBounds$ to update $ub_L,ub_R,lb_L,lb_R$
            \State apply reduction rules on $G_i$. \label{line: reduction_fastmvbp}
            \State $MVBPF(G_i,s,\{v_i\}\cup S,N_{>\eta}(v_i)\cup  N^2_{>\eta}(v_i),\newline \{v_1,\ldots,v_{i-1}\}\cup N^3_{>\eta}(v_i)\setminus S)$  \label{line: decomposition end}
        \EndFor
    \EndFor

\end{algorithmic}
\end{algorithm}

\begin{figure}
    \centering
    \includegraphics[width=\linewidth]{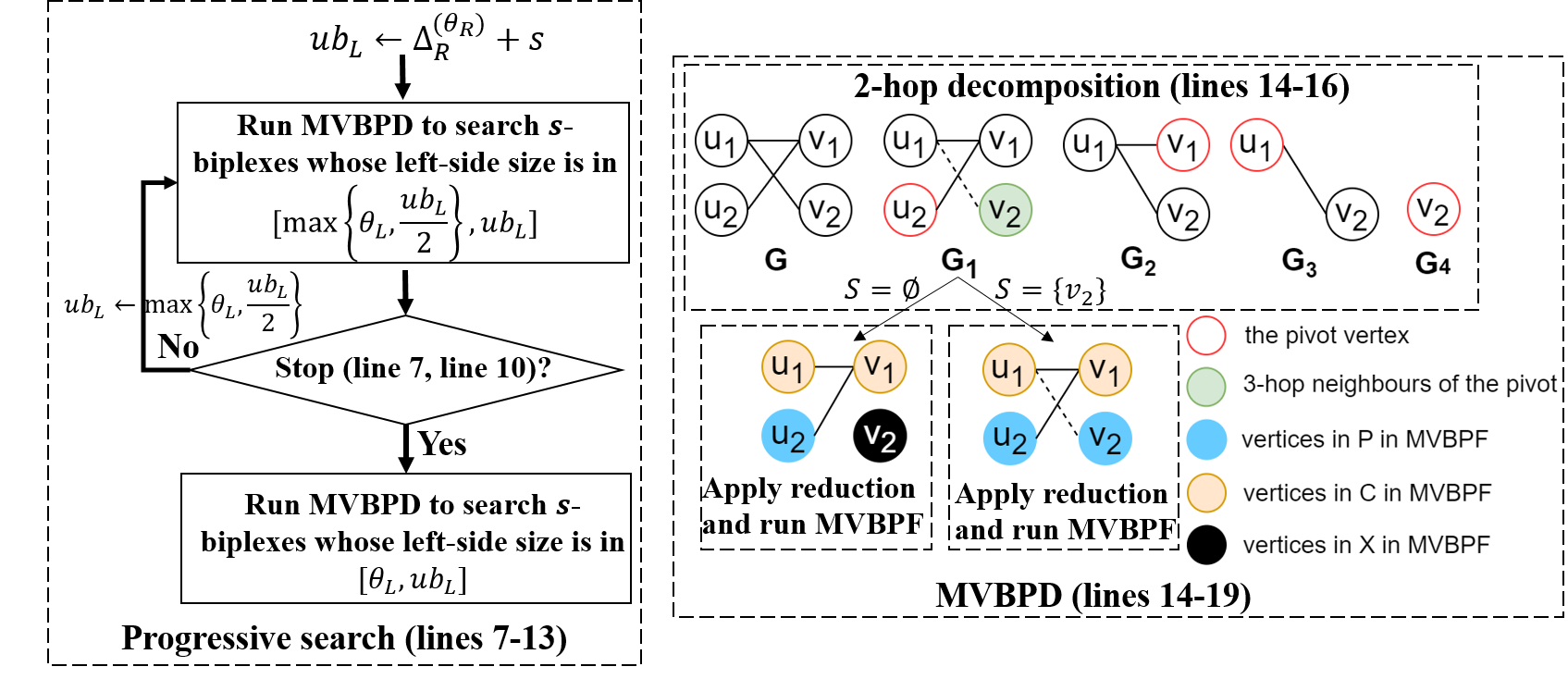}
    \caption{An illustration of key components in the FastMVBP algorithm.}
    \label{fig:fastmvbp framework}
\end{figure}

\subsection{2-Hop Decomposition} \label{sec: 2-hop decomposition}
Towards minimizing the running time, the graph decomposition technique has been widely applied in enumerating and finding subgraph structures \citep{conte2018d2k,zhou2020enumerating,wang2022listing}.
Essentially, it partitions the global graph to multiple local neighbourhoods which allows overlaps. Then the subgraph search algorithm is run on each neighbourhood. 
The final output is the best solutions among all runs. 
% However, existing graph decomposition techniques on non-bipartite graphs \citep{} cannot be directly applied on bipartite graphs.

Traditionally, the 2-hop neighbourhoods are used to find cliques and bicliques \citep{chen2021efficient, yang2021p, bonchi2019distance}, due to the fact that the diameter of a clique or biclique is upper-bounded by $2$.
For finding biplexes, however, the 2-hop neighbourhoods are not sufficient because the diameter of a biplex might be larger than $2$. Instead, 
\citep{yu2023maximum} introduced a technique to decompose the graph into 3-hop neighbourhoods of vertices whose sizes (number of vertices) are bounded by $\Delta^3$. However, $\Delta^3$ might be a large value, not reducing the running time.

In the following, we provide a novel method to decompose the graph by 2-hop neighbourhoods of vertices. 
The proposed method, named \textit{2-hop decomposition}, is based on \emph{2-hop degeneracy ordering}.
%$\eta=\{v_1,v_2,\ldots,v_n\}$. 
Particularly, the new method can decompose the graph into $n(d_2\Delta)^s$ subgraphs whose sizes are bounded by $d_2+s+1$. $d_2$ is the \textit{largest 2-hop degeneracy} which is even smaller than $\Delta^2$.

\subsubsection{2-hop degeneracy ordering}
%  {\color{blue} I think the following paragraph should be removed. }

% Degeneracy ordering creates an ordering of vertices by iteratively  popping the vertex with the minimum degree.
% The ordering has shown its ability to reduce running time in many subgraph search tasks \citep{giatsidis2014corecluster,wen2018efficient,pashanasangi2021faster}.
% We extend the degeneracy ordering from non-bipartite graphs to bipartite graphs by introducing the 2-hop neighbourhoods.

A 2-hop degeneracy ordering of graph $G$ is an order of vertices $\eta=\{v_1,\ldots,v_n\}$ such that
each vertex $v_i$ has the minimum 2-hop degree (i.e.,$|N(v_i)\cup N^2(v_i)|$) in the induced subgraph $G[{v_i,\ldots,v_n}]$.
% Given a 2-hop degeneracy ordering $\eta=v_1,\ldots,v_n$, we call the size of 2-hop subgraph of $v_i$ in $G[{v_i,\ldots,v_n}]$ as the 2-hop core number of $v_i$.
The following introduces the notation based on 2-hop degeneracy ordering.
Denote $d_2$ as the largest 2-hop degeneracy (i.e., $(N(v_i)\cup N^2(v_i))\cap \{v_{i+1},\ldots,v_n\}$) among $i=1,2,\ldots,n$.
Given two vertices $v_i$ and $v_j$, write $v_i <_\eta v_j$ if $v_i$ precedes $v_j$ in $\eta$. Given a vertex $v_i$, define $N_{>_\eta}(v_i):=N(v_i)\cap \{v_{i+1},\ldots,v_n\}$ and $N_{<_\eta}(v_i):=N(v_i)\cap \{v_1,\ldots,v_{i-1}\}$. Also, $N^k_{>_\eta}:=N^k(v_i)\cap \{v_{i+1},\ldots,v_n\}$ and $N^k_{<_\eta}:=N^k(v_i)\cap \{v_1,\ldots,v_{i-1}\}$.

The computation of 2-hop degeneracy can be implemented by iteratively identifying and removing the vertex $v_i$ that has the minimum 2-hop degree in the residual subgraph. Typically, the order in which these vertices are removed corresponds to the 2-hop degeneracy ordering. This process is similar to the computation of 1-hop degeneracy ordering \citep{matula1983smallest}.
% {\color{red} replace this citation by Smallest-last ordering and clustering and graph coloring algorithms.}
 % {\color{blue} Meanwhile, we exert the following technique to ensure that the update of 2-hop degree only need to decrease the 2-hop degree of vertices in $N^2(v_i) \cup N(v_i)$ by one. 
 % After removing the vertex $v_i$ with the minimum 2-hop degree, the 2-hop degeneracy computation only needs to decrease the degree of its 2-hop neighbours by one, if $v_i$ has the minimum degree among those vertices with the same 2-hop degree. Consequently, the worst-case computational complexity becomes $O(\Delta^2)$ for each removal (or we can say the worst-case computational complexity becomes "$O(\sum_{i=1}^n |N^2(v_i) \cup N(v_i)|)$" for the entire ordering). The pseudo-code is given in Algorithm~\ref{alg:computeDegeneracy} (Do we really need a pseudo-code to explain a technique which we don't propose?).
 % }
However, we point out that the computation of 2-hop degeneracy is more complex. Specifically, after removing a vertex $v_i$, the 1-hop degeneracy computation only needs to decrease the degree of its 1-hop neighbours by one, with a computational complexity of $O(\Delta)$ for each removal. 
In contrast, the update of 2-hop degree requires not only decreasing the 2-hop degree of vertices in $N^2(v_i) \cup N(v_i)$ by one, but also checking whether a vertex in $N(v_i)$ is no longer a 2-hop neighbour of another vertex in $N(v_i)$. This is because the removal of $v_i$ may increase the distance between its neighbours. 
Consequently, the worst-case computational complexity becomes $O(\Delta^2)$ for each removal. The pseudo-code is given in Algorithm~\ref{alg:computeDegeneracy}.

\begin{algorithm}[ht!]
\caption{computeDegeneracy}
\label{alg:computeDegeneracy}
\begin{algorithmic}[1]
\item[INPUT:] $G(L\cup R,E)$, 2-hop degree $TD:L\cup R\rightarrow \mathbb{N}$
\item[OUTPUT:] 2-degeneracy ordering $\eta$
\State initialise $\eta=\{\}$
\For{$i=1,\ldots,n$}
 \State $v_i\leftarrow arg \min_{v\in L\cup R} TD(v)$
 \For{each $u\in N(v_i)\cup N^2(v_i)$}  $TD(u)\leftarrow TD(u)-1$
 \EndFor
 \For{each pair $s,t\in N(v_i)$}
 \If{$|N(s)\cap N(t)|=1$}
\State $TD(s)\leftarrow TD(s)-1$, $TD(t)\leftarrow TD(t)-1$
 \EndIf
 \EndFor
 \State append $v_i$ to the end of $\eta$ and remove it from $G$
\EndFor
\end{algorithmic}
\end{algorithm}

% However, to compute 2-hop degeneracy ordering, we should consider 2-hop paths. This 
% It is worth noticing that the computation of 2-hop degeneracy is different from the computation of 1-hop degeneracy.
% A counter-intuitive fact is that, after removing a vertex $u$, we can not simply decrease the degree of all its 2-hop neighbours by one, which set it apart from the computation of 1-hop degeneracy.

% Notice that there is a key difference between the computation of degeneracy ordering and 2-hop degeneracy ordering.
% To compute degeneracy ordering, after removing a vertex $u$, we only need to decrease the degree of all its neighbours {\color{red}[I suggest to use word "neighbour" all over the paper. I guess American english is more often used in IEEE\ldots]} by one.
% However, to compute 2-hop degeneracy ordering, we should consider 2-hop paths.
% After removing a vertex $u$, apart from reducing the 2-hop subgraph of $N^2(u)\cup N(u)$ by one, we also need to consider length-3 paths between $u$'s neighbours. {\color{red}[Not clear because the length-3 path has not been introduced before. Is it possible to give pseudo-codes of this decomposition algorithm?]}
% If there is a pair of vertices that take $u$ as their unique 2-hop intermediary, then we should also update both $v$'s and $w$'s 2-hop neighbourhood as they are no longer 2-hop neighbours of each other. 
% {\color{red}[The above two sentences are not clear. A better way is to use pseudo-codes with examples.]}

\subsubsection{Decomposition} 
Our decomposition is based on the degeneracy ordering and the fact that the diameter of an $s$-biplex is not larger than 3.
We first decompose the graph $G$ into $n$ local neighbourhoods $G_i$. 
Each neighbourhood is further decomposed into $(d_2\Delta)^s$ subgraphs.
The following introduces the decomposition one by one.

 % {\color{blue} Note that 2-hop degeneracy ordering can not be exerted directly due to the fact that the diameter of an $s$-biplex is not larger than 3 ( is mentioned in the start of the paragraph). we need to a ordering based on 3-hop neighbourhoods for our basic decomposition "Neighbourhood-based decomposition".}

 % {\color{blue} I think we should say "Given a 3-hop degeneracy ordering" in the following paragraph, but we did not mention the 3-hop degeneracy ordering}

\noindent {\bf Basic decomposition.}
Given a 2-hop degeneracy ordering $\eta$,  the graph $G$ is decomposed into $n$ subgraphs where each subgraph $G_i$ is a local neighbourhood of a vertex $v_i$, i.e., $\{v_i\}\cup {N_>}_\eta(v_i)\cup {N^2_>}_\eta(v_i) \cup {N^3_>}_\eta(v_i)$. A pseudo-code is shown in line~\ref{line: neighbourhood begin}-\ref{line:neighbourhood decomposition} of Algorithm~\ref{alg:fastmvbp}.

It is clear that $|{N^3_>}_\eta(v_i)|$ is bounded by $\min (\Delta^3, n)$, which could be large.
% \st{However, in any $s$-biplex containing $v_i$, at most $s$ vertices in ${N^3_>}_\eta(v_i)$ can be included, which are exactly the anti-neighbours of $v_i$.
% Given that $s$ is often a small parameter, we shift our focus from
% the large set ${N^3_>}_\eta(v_i)$ to small subsets by enumerating all subsets $S$ with $S\subseteq N^3_{>\eta}(v_i)$ and $|S|\leq s$. }
% {\color{red}
In fact, any $s$-biplex containing $v_i$ has at most $s$ anti-neighbours of $v_i$ in set ${N^3_>}_\eta(v_i)$. Because $s$ is often a small value, we can treat it as a constant. Hence, it is computational affordable to enumerate all potentials of $v_i$'s anti-neighbours. 

\noindent {\bf 2-hop decomposition.} We further decompose $G_i$ into smaller subgraphs by enumerating all potentials of $v_i$'s anti-neighbours $S$ such that $S\subseteq N^3_{>\eta}(v_i)$ and $|S|\leq s$. 
Each decomposed subgraph, i.e., $\{v_i\}\cup {N_>}_\eta(v_i)\cup {N^2_>}_\eta(v_i) \cup S$, focus on $s$-biplexes that include $S\cup \{v_i\}$. 
A pseudo-code is shown in Line~\ref{line: enumeration decomposition}-\ref{line: decomposition end} of Algorithm~\ref{alg:fastmvbp}.

In the following, we discuss the number and sizes of these decomposed subgraphs.
The size of each decomposed subgraph is $1+s+|{N_>}_\eta(v_i)\cup {N^2_>}_\eta(v_i)|$ which is bounded by $d_2+s+1$.
The number of decomposed subgraphs in $G_i$ is related to the combinatorial number of selecting at most $s$ elements from $|{N^3_>}_\eta(v_i)|$ elements. Note that $|{N^3_>}_\eta(v_i)|+1\leq d_2\Delta$. Thus, the total number of enumerated subgraphs is bounded by $(d_2\Delta)^s$.
Consequently, the branching algorithm becomes more efficient, as it operates within smaller subgraphs albeit a small cost of repetitions.
An example of the decomposition algorithm is shown in Fig.~\ref{fig:decomposition}.

\begin{figure}[ht!]
\centering 
\includegraphics[width=0.8\linewidth]{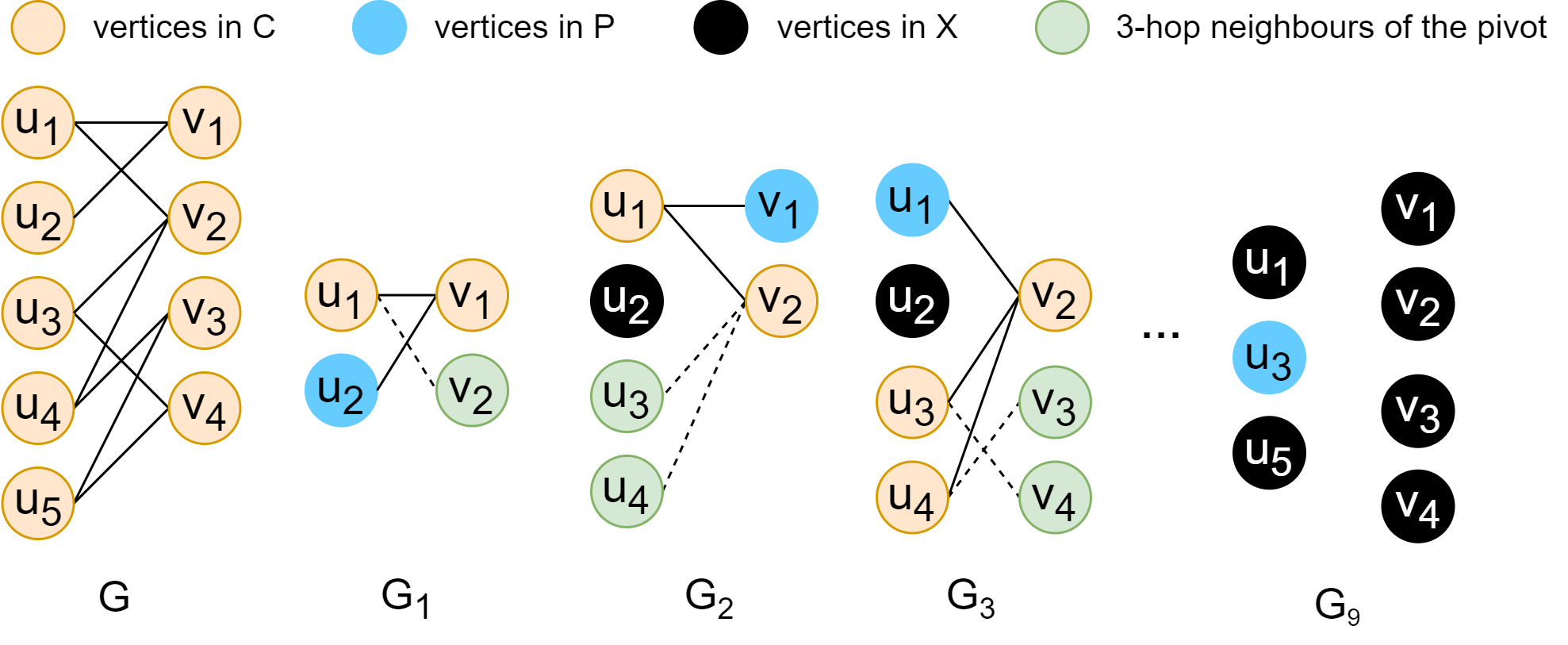}
\caption{An example of 2-hop decomposition. The yellow, blue, and black vertices represent the vertices in $C,P$ and $X$,  respectively. The green vertices represent the 3-hop neighbours of the pivot (the blue vertex).
We set $s=1$ to find $1$-biplexes. 
In graph $G$, $u_2$ has the minimum 2-hop neighbourhood $|N(u_2)\cup N^2(u_2)|=2$ and $N^3(u_2)=\{v_2\}$.
By enumerating $S=\{v_2\}$ and $S=\varnothing$, we get $G_1=\{u_2\}\cup N(u_2)\cup N^2(u_2)\cup S$. 
Next, we move $u_2$ from $C$ to $X$ and then $v_1$ becomes the vertex with the minimum 2-hop neighbourhood in $G[C]$. 
By enumerating $S=\{\{u_3\},\{u_4\},\varnothing \}$
We get $G_2$ by the same operations above. 
Then, we move $v_1$ from $C$ to $X$. 
We repeat the operations above and get $G_1,G_2,G_3,\cdots, G_9$ until $C$ is empty. Finally, the 2-hop degeneracy order is $\eta=\{u_2,v_1,u_1,u_4,v_3,u_5,v_2,v_4,u_3\}$.}
\label{fig:decomposition}
\end{figure}

Note that to exclude the $s$-biplexes that are not maximal, it is necessary to check if there exists a vertex in $v_1,\ldots,v_{i-1}$ can be added to the found $s$-biplex to form a larger $s$-biplex. If such a vertex exists, then the found $s$-biplex is not maximal, and thus the found $s$-biplex will be excluded. We thus add $v_1,\ldots,v_{i-1}$ into the exclusion set $X$, as shown in Line~\ref{line: decomposition end}.
The maximality will be checked according to PR3, as shown in Line~\ref{line:maximal exclusion} of MVBPF.
% This exclusion will not exclude any maximal $s$-biplex, because that the maximal $s$-biplex containing a vertex $v_j$ where $j<i$, has been visited in the branches with pivot vertex $v_j$.

\subsection{Reduction with Single-side Bounds} \label{sec:reductions with single side bounds}
% For ease of explanation, this section will continue to use the descriptions from Section 3.2 (i.e., $P$, $C$, $X$), and distinguish between left and right side by adding the subscripts $L$ and $R$.\newline

Kernelization is a crucial method in solving NP-hard problems for large instances \citep{cygan2015parameterized}. The goal of kernelization is to reduce the candidate size without loss of optimality by employing problem-specific {\em reduction rules}.
Related works have validated the effectiveness of reduction rules in finding cohesive subgraphs  \citep{lyu2022maximum,yao2022identifying}. However, most reduction rules cannot be directly applied in TBS problem. Moreover, existing reduction rules often employ a single bound for vertices in two sides \citep{lyu2020maximum,dai2024efficient}.
In this section, we introduce a single-side bound for each side of vertices.

Given a partial solution $P=P_L\cup P_R$, recall that our branching algorithm searches solutions $B$ that $B\supseteq P$ from the candidate set $C=C_L\cup C_R$.
In the following, we introduce the ideas to reduce the size of $C$.
Given a partial solution $P$, for all $s$-biplex induced by $B\supseteq P$,
we consider the bounds on the sizes of $B_L=B\cap L$ and of $B_R=B\cap R$.
% we consider the upper bound and lower bound of 
% we first define four bounds $ub_L, ub_R, lb_L$ and $lb_R$ on potential solutions $B\supseteq P$. 
% Specifically, $ub_L$ (resp. $ub_R$) denotes the upper bound on the size of $L\cap B$ (resp. $R\cap B$), and $lb_L$ (resp. $lb_R$) denotes the lower bound on the size of $L\cap B$ (resp. $R\cap B$). 
% We call these bounds single-sided bounds as they bound the number of vertices in either $B_L$ or $B_R$.

% {\color{red}\st{In this section, we design methods to update single-side  bounds {\color{red}??} and we show reduction rules based on the bounds.
% The update of single-side bounds provides tighter bounds for $s$-biplexes, which is used to prune more branches and, as a result, reduce the running time of \emph{MVBP}.}}

\subsubsection{Single-side Bounds}
We focus on four bounds: $ub_L, ub_R, lb_L$ and $lb_R$. 
Specifically, $ub_L$ (resp. $ub_R$) denotes the upper bound on the size of $L\cap B$ (resp. $R\cap B$), and $lb_L$ (resp. $lb_R$) denotes the lower bound on the size of $L\cap B$ (resp. $R\cap B$). 
We call these bounds single-sided bounds as they bound the number of vertices in either $B_L$ or $B_R$.

To determine the value of these bounds, we provide the following three lemmas.
% We provide three single-side bounds based on the following three observations.

% {\color{red}[Really, what is single-side bound? No formal definition is given!]}
% \noindent {\bf Single-side upper bounds.}The single side upper bound is comprised of the left upper bound ($ub_L$) and the right upper bound ($ub_R$), where $ub_L$ (resp.$ub_R$) represents the maximum possible number of vertices on the left (resp. right) for current partial solution $P$.  

% \begin{lemma} \label{lem:observation1}
% Given a partial solution $P=P_L\cup P_R$, each $s$-biplex that contains $P$ has its number of left side vertices upper-bounded by $min_{u\in P_R} N_{G}(u) + s$, and of right side vertices upper-bounded by $min_{u\in P_L} N_{G}(u) + s$. In other words, $ub_L\leq min_{u\in P_R} N_{G}(u) + s$, and $ub_R\leq min_{u\in P_L} N_{G}(u) + s$.
% \end{lemma}

% {\color{red}[I reformulate the Lemma 1.]
\begin{lemma}\label{lem:observation1}
Given a partial solution $P=P_L\cup P_R$, each $s$-biplex induced by $B\supseteq P$ satisfies
$|B_L|\leq \min_{u\in P_R} |N_{G}(u)| + s$ and $|B_R|\leq \min_{u\in P_L} |N_{G}(u)| + s$.
\end{lemma}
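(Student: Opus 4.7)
The plan is to prove the two inequalities by exploiting the defining condition of an $s$-biplex, namely that each vertex has at most $s$ anti-neighbors on the opposite side, together with the obvious monotonicity of neighborhoods under taking induced subgraphs. I read the right-hand side as $|N_G(u)|+s$ (a typographical slip in the statement).

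I will prove the first inequality $|B_L|\leq \min_{u\in P_R}|N_G(u)|+s$; the second follows by swapping the roles of $L$ and $R$. Fix an arbitrary $u\in P_R$. Since $B\supseteq P$, we have $u\in B_R$, and since $B$ induces an $s$-biplex, the definition gives $|B_L|-|N_{B_L}(u)|\leq s$, i.e.\ $|B_L|\leq |N_{B_L}(u)|+s$. Now $u\in R$, so every neighbor of $u$ in $G$ lies in $L$; in particular, $N_{B_L}(u)=N_G(u)\cap B_L\subseteq N_G(u)$, and therefore $|N_{B_L}(u)|\leq |N_G(u)|$. Combining gives $|B_L|\leq |N_G(u)|+s$. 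Since $u\in P_R$ was arbitrary, the inequality holds with $|N_G(u)|$ replaced by $\min_{u\in P_R}|N_G(u)|$.

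The bound $|B_R|\leq \min_{u\in P_L}|N_G(u)|+s$ follows by the symmetric argument: take any $u\in P_L\subseteq B_L$, use the $s$-biplex condition to get $|B_R|\leq |N_{B_R}(u)|+s$, observe that $N_{B_R}(u)\subseteq N_G(u)$ (now a subset of $R$), and minimize over $u\in P_L$.

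There is no real obstacle here; the whole argument is a direct application of the $s$-biplex definition combined with the fact that neighborhoods shrink under vertex deletion. The only subtlety to flag is the slight abuse of notation in the statement, which I interpret as cardinalities; once this is clarified, the lemma reduces to a one-line chain of inequalities applied vertex-by-vertex on each side.
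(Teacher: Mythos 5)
Your proof is correct and follows essentially the same argument as the paper: apply the $s$-biplex condition to a vertex of $P$ on one side (at most $s$ anti-neighbours in $B$ on the other side), bound the remaining neighbours by the degree in $G$, and take the minimum over that side of $P$. The reading of $N_G(u)$ as $|N_G(u)|$ is the intended one.
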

% }
 
% \noindent \textbf{Observation 1.} Given a partial solution $P=P_L\cup P_R$, each $s$-biplex that contains $P$ has its number of left side vertices upper-bounded by $min_{u\in P_R} N_{G}(u) + s$, and of right side vertices upper-bounded by $min_{u\in P_L} N_{G}(u) + s$. In other words, $ub_L\leq min_{u\in P_R} N_{G}(u) + s$, and $ub_R\leq min_{u\in P_L} N_{G}(u) + s$.
\begin{proof}
% Consider an arbitrary maximal $s$-biplex induced by $B=B_L\cup B_R$ with $B\supseteq P$.
For each vertex $u\in P_L$, we have $|\overline{N}_{B_R}(u)|\leq s$. And thus $|B_R|=|N_{B_R}(u)|+|\overline{N}_{B_R}(u)|\leq |N_{G}(u)|+s$. Because that this inequation holds for each vertex $u\in P_L$, we conclude that  ${|B_R|}\leq \min_{u\in P_L} |N_{G}(u)| + s$. The same arguments hold for ${|B_L|}$.
\end{proof}
% To prove Observation 1, consider an arbitrary maximal $s$-biplex induced by $B=B_L\cup B\cap R$ with $B\supseteq P$.
% For each vertex $u\in P_L$, we have $|\overline{N}_{B\cap R}(u)|\leq s$. And thus $|B\cap R|=|N_{B\cap R}(u)|+|\overline{N}_{B\cap R}(u)|\leq N_{G}(u)+s$. Since this inequation holds for each vertex $u\in P_L$, we conclude that $ub_R\leq min_{u\in P_L} N_{G}(u) + s$. The same arguments hold for $ub_L$.
% {\color{red}[Why not use the proof environment? I think we can still call them lemma rather than observation.]}
% \begin{proof}
%     Given a partial solution $P=(P_L\cup P_R)$, we assume that there is a maximal $s$-biplex $P^*=(P^*_L\cup P^*_R) \supseteq P$. Then, for a vertex $u\in P_L$ with minimum degree, $\overline{N}_{P^*_R}(u)\leq s$. We have $|P^*_R|=N_{P^*_R}(u)+\overline{N}_{P^*_R}(u)\leq N_{G}(u)+s$, which means that $ub_R\leq min_{u\in P_L} N_{G}(u) + s$. The symmetric arguments hold for $ub_L$.
% \end{proof}

We define that $P_L$ is \textit{fixed} if $|P_L|=ub_L$ or $C_L=\varnothing$, and $P_R$ is \textit{fixed} if $|P_R|=ub_R$ or $C_R=\varnothing$.
In practice, each branch of Algorithm~\ref{alg:mvbp_topk} ends up with 
both $P_L$ is \textit{fixed} and $P_R$ is  \textit{fixed}.
This motivates us to introduce upper bounds when $P_L$ or $P_R$ is fixed.
% {\color{red} write definition.}
% {\color{red} max, adapt}

\begin{lemma} \label{lem:observation2}
Given a partial solution $P=P_L\cup P_R$ and $P_R$ (resp. $P_L$) is \textbf{fixed}, each $s$-biplex induced by $B\supseteq P$ satisfies $|B_L|\leq |P_L|+|\bigcap_{v\in P_R} N_{C_L}(v)|+\sum_{u_i\in P_R} \min\{s-|\overline{N}_{P_L}(u_i)|,|\overline{N}_{C_L}(u_i)|\}$ (resp. $|B_R|\leq |P_R|+|\bigcap_{v\in P_L} N_{C_R}(v)|+
\sum_{u_i\in L} \min\{s-|\overline{N}_{P_R}(u_i)|,|\overline{N}_{C_R}(u_i)|\}$).
% Given a partial solution $P=P_L\cup P_R$ and $P_R$ (resp. $P_L$) is \textbf{fixed}, the left side (resp. right side) of any $s$-biplex that contains $P_L$ (resp.$P_R$) in the candidate subgraph $G[P\cup C]$ is of size at most $|P_L|+|\bigcap_{v\in P_R} N_{C_L}(v)|+\sum_{u_i\in P_R} min\{s-|\overline{N}_{P_L}(u_i)|,|\overline{N}_{C_L}(u_i)|\}$
% (resp. $|P_R|+|\bigcap_{v\in P_L} N_{C_R}(v)|+
% \sum_{u_i\in L} min\{s-|\overline{N}_{P_R}(u_i)|,|\overline{N}_{C_R}(u_i)|\}$).
\end{lemma}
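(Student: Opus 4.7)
The plan is to exploit the assumption that $P_R$ is fixed: since $|P_R|=ub_R$ or $C_R=\varnothing$, any $B\supseteq P$ with $B\subseteq P\cup C$ must satisfy $B_R=P_R$. Hence the only freedom in extending $P$ to an $s$-biplex lies in choosing a subset $S\subseteq C_L$ to add to $P_L$, giving $B_L=P_L\cup S$, and the task reduces to bounding $|S|$ under the $s$-biplex constraints imposed by vertices in $P_R$.

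First I would partition $S$ into $S_1:=S\cap\bigcap_{v\in P_R}N_{C_L}(v)$, the vertices of $S$ adjacent to every vertex in $P_R$, and $S_2:=S\setminus S_1$, those vertices with at least one anti-neighbour in $P_R$. The trivial bound $|S_1|\leq \bigl|\bigcap_{v\in P_R}N_{C_L}(v)\bigr|$ accounts for the second summand of the claim, and requires no use of the $s$-biplex property.

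The substantive step is to bound $|S_2|$ by a double-counting argument over pairs $(u,v)\in S_2\times P_R$ with $(u,v)\notin E$. Each $u\in S_2$ contributes at least one such pair by definition, so $|S_2|\leq \sum_{u\in S_2}|\overline{N}_{P_R}(u)| = \sum_{v\in P_R}|\overline{N}_{S_2}(v)|$. Because $G[B]$ is an $s$-biplex, $|\overline{N}_{B_L}(v)|\leq s$ for every $v\in P_R$, which rearranges to $|\overline{N}_S(v)|\leq s-|\overline{N}_{P_L}(v)|$; and since $S_2\subseteq C_L$, also $|\overline{N}_{S_2}(v)|\leq |\overline{N}_{C_L}(v)|$. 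Each summand is therefore at most $\min\{s-|\overline{N}_{P_L}(v)|,|\overline{N}_{C_L}(v)|\}$, giving the third summand of the claim.

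Adding $|P_L|$, the bound on $|S_1|$, and the bound on $|S_2|$ yields the stated inequality; the symmetric statement for $|B_R|$ then follows by exchanging the roles of $L$ and $R$. The main obstacle is spotting the pivot from counting candidate left vertices one by one to instead summing anti-neighbour slots on the fixed right side, together with correctly deducing $B_R=P_R$ from the definition of \emph{fixed}; once these are in place, the partition of $S$ and the switch of summation order deliver the two additive terms in the precise form required.
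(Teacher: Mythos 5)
Your proof is correct and follows essentially the same route as the paper's: the paper also splits the admissible left-side additions into common neighbours of $P_R$ and, for each $u_i\in P_R$, at most $\min\{s-|\overline{N}_{P_L}(u_i)|,|\overline{N}_{C_L}(u_i)|\}$ anti-neighbours in $C_L$. Your explicit partition into $S_1,S_2$ and the double-counting over non-edges merely formalizes the paper's informal counting, so the two arguments coincide in substance.
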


% \noindent
% \textbf{Observation 2.} Given a partial solution $P=P_L\cup P_R$ and $P_R$ (resp. $P_L$) is \textbf{fixed}, the left side (resp. right side) of any $s$-biplex that contains $P_L$ (resp.$P_R$) in the candidate subgraph $G[P\cup C]$ is of size at most $|P_L|+|\bigcap_{v\in P_R} N_{C_L}(v)|+\sum_{u_i\in P_R} min\{s-|\overline{N}_{P_L}(u_i)|,|\overline{N}_{C_L}(u_i)|\}$
% (resp. $|P_R|+|\bigcap_{v\in P_L} N_{C_R}(v)|+
% \sum_{u_i\in L} min\{s-|\overline{N}_{P_R}(u_i)|,|\overline{N}_{C_R}(u_i)|\}$).
\begin{proof}
Given that $P_R$ is fixed. We consider the maximum number of vertices that can be included in $P_L$. 
Firstly, the common neighbours $\bigcap_{v\in P_R} N_{C_L}(v)$ can be safely included in $P_L$ without deteriorating the $s$-biplex property.
Secondly, each $u_i\in P_R$ has at most $\min\{s-|\overline{N}_{P_L}(u_i)|, |{ \overline{N}_{C_L}(u_i)}|\}$ anti-neighbours in $C_L$ that can be further included by $P_L$.
In summary, the size of $B_L$ is bounded by $|P_L|+|\bigcap_{v\in P_R} N_{C_L}(v)|+\sum_{u_i\in P_R} \min\{s-|\overline{N}_{P_L}(u_i)|,|\overline{N}_{C_L}(u_i)|\}$ from above.
The same arguments hold for $B_R$ when $P_L$ is fixed.
\end{proof}

\begin{lemma} \label{lem:observation3}
% Write $lb$ as the size of $k$-th largest maximal $s$-biplex at hand. 
Given a partial solution $P$ and a positive integer $lb$, each $s$-biplex induced by $B\supseteq P$ such that $|B|>lb$ satisfies $|B_L|\geq lb+1-ub_R$, and $|B_ R|\geq lb+1-ub_L$.

% Each top-$k$ maximal $s$-biplex that contains $P$ has its number of left side vertices lower bounded by $lb+1-ub_R$, and of right side vertices lower bounded by $lb+1-ub_L$.
% In other words, $lb_L\geq lb+1-ub_R$ and $lb_R\geq lb+1-ub_L$.

% Given a partial solution $P=P_L\cup P_R$, and $k$ $s$-biplexes, write $lb$ as the size of the smallest one among $k$ $s$-biplexes. 
% Each top-$k$ maximal $s$-biplex that contains $P$ has its number of left side vertices lower bounded by $lb+1-ub_R$, and of right side vertices lower bounded by $lb+1-ub_L$.
% In other words, $lb_L\geq lb+1-ub_R$ and $lb_R\geq lb+1-ub_L$.
\end{lemma}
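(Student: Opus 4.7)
The plan is to derive both bounds by a direct counting argument, using only the partition identity $|B| = |B_L| + |B_R|$ together with the single-side upper bounds $|B_L| \leq ub_L$ and $|B_R| \leq ub_R$ that follow from Lemma~\ref{lem:observation1} (or Lemma~\ref{lem:observation2} when one side is fixed). The statement is essentially a rearrangement, so the proof will be short.

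First I would fix an arbitrary $s$-biplex $B \supseteq P$ with $|B| > lb$, and write $|B_L| = |B| - |B_R|$. By definition of $ub_R$ as an upper bound on the right-side cardinality of any $s$-biplex containing $P$, we have $|B_R| \leq ub_R$, hence
\begin{equation*}
|B_L| \;=\; |B| - |B_R| \;\geq\; |B| - ub_R \;>\; lb - ub_R.
\end{equation*}
Since $|B_L|$, $lb$, and $ub_R$ are all integers, the strict inequality sharpens to $|B_L| \geq lb + 1 - ub_R$, which is the first claim. The bound $|B_R| \geq lb + 1 - ub_L$ follows by the symmetric argument, invoking $|B_L| \leq ub_L$ in place of $|B_R| \leq ub_R$.

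I do not foresee any genuine obstacle here; the only point worth being explicit about is the integrality step that turns the strict inequality $|B_L| > lb - ub_R$ into the weak inequality $|B_L| \geq lb + 1 - ub_R$. Everything else is subsumed by the meaning of $ub_L$ and $ub_R$ as valid single-side upper bounds, established in the preceding two lemmas.
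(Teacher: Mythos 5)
Your argument is correct and is essentially the same as the paper's: both use the partition $|B| = |B_L| + |B_R|$ together with $|B_R| \leq ub_R$ (resp.\ $|B_L| \leq ub_L$) and an integrality step, the only cosmetic difference being that the paper applies integrality first (turning $|B| > lb$ into $|B| \geq lb+1$) while you apply it at the end. No gap; nothing further needed.
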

\begin{proof}
% Consider one of top-$k$ maximal $s$-biplex induced by $B=B\cap L\cup B\cap R$ with $B\supseteq P$. 
% Notice that if $|B|\leq lb$, then $B$ is not a top-$k$ maximal $s$-biplex. Therefore,  $|B|\geq lb+1$. 
% In this case,  we have $|P_L|+|P_R|\geq lb+1$, then $|P_L|\geq lb+1-|P_R|\geq lb+1-ub_R$ as $|P_R|\leq ub_R$.
{  Since $|B_L|+|B_R|=|B|\geq lb+1$ and $|B_R|\leq ub_R$, we have $|B_L|\geq lb+1-|B_R|\geq lb+1-ub_R$.} Consequently, any $s$-biplex of size at least $lb+1$ has at least $lb+1-ub_R$ vertices in the left side, meaning $lb_L\geq lb+1-ub_R$. The symmetric arguments hold for $lb_R$.
\end{proof}
% \noindent \textbf{Observation 3.} Given a partial solution $P=P_L\cup P_R$ and $k$ $s$-biplexes, write $lb$ as the size of the smallest one among $k$ $s$-biplexes. 
% Each top-$k$ maximal $s$-biplex that contains $P$ has its number of left side vertices lower bounded by $lb+1-ub_R$, and of right side vertices lower bounded by $lb+1-ub_L$.
% In other words, $lb_L\geq lb+1-ub_R$ and $lb_R\geq lb+1-ub_L$.

% To prove Observation 3, consider one of top-$k$ maximal $s$-biplex induced by $B=B_L\cup B\cap R$ with $B\supseteq P$. Notice that if $|B|\leq lb$, then $B$ is not a top-$k$ maximal $s$-biplex. Therefore,  $|B|\geq lb+1$. In this case, $|P_L|\geq lb+1-|P_R|\geq lb+1-ub_R$. 
% \begin{proof}
%     If there is an $s$-biplex $P=(P_L\cup P_R)$ of size at least $lb+1$, we have $|P_L|+|P_R|\geq lb+1$, then $|P_L|\geq lb+1-|P_R|\geq lb+1-ub_R$ as $|P_R|\leq ub_R$. Consequently, any $s$-biplex of size at least $lb+1$ has at least $lb+1-ub_R$ vertices in the left side, meaning $lb_L\geq lb+1-ub_R$. The symmetric arguments hold for $lb_R$
% \end{proof}

During the running of MVBP, we iteratively update the single-side bounds, as described in Algorithm~\ref{alg:updatebounds}. In particular, we update $ub_L$ and $ub_R$ with Lemma~\ref{lem:observation1} and Lemma~\ref{lem:observation2}.
Suppose that $\mathcal{S}$ is the largest $k$ $s$-biplexes found so far, and that the size of the smallest set in the  $\mathcal{S}$ is $lb$, we update $lb_L$ and $lb_R$ with Lemma~\ref{lem:observation3}.

% -----------------------------------------------

% It is not hard to prove that $ub_L\leq |P_L\cup C_L|$ and $ub_R\leq |P_R\cup C_R|$. 
% Then we will introduce an easy but efficient single side upper bound based on degree.\par

% \subsubsection{Single Side Lower Bounds.} Similarly, the single side lower bounds consist of a left lower bound ($lb_L$) and a right lower bound ($lb_R$), where $lb_L$ (resp.$lb_R$) represents the minimum number of vertices on the left (resp. right) for current partial solution $P$. Write $lb$ be the size of the currently found minimum $s$-biplex. 
% Then we will introduce a single side lower bound based on $lb$ and single side upper bound.
% \par
% \textbf{SLB.} Given a partial solution $P$, any $s$-biplex containing $P$ in the subgraph $G[P\cup C]$ has $lb_L\geq lb+1-ub_R$ and $lb_R\geq lb+1-ub_L$.
% the left side of  is of size at least $lb+1-ub_R$, while the right side of any $s$-biplex containing $P$ in $G'$ is of size at least $lb+1-ub_L$.

% { We do not have $t$ now. Why not using the SUB and SLB to initialise the bounds?}

\begin{algorithm}
\caption{updateBounds}
\label{alg:updatebounds}
\begin{algorithmic}[1]
\item[INPUT:] $G(L\cup R,E)$,$lb$,$ub_L$,$ub_R$,$lb_L$,$lb_R$
\item[OUTPUT:] updated $ub_L$,$ub_R$,$lb_L$,$lb_R$
 \State find the vertex $v$ with the minimum $|N_R(v)|$ in $L$, and the vertex $u$ with the minimum $|N_L(v)|$ in $R$
    \State $ub_R\leftarrow \min(Update(G,P_L,P_R,C_L,C_R),N_R(v)+s)$
    \State $ub_L\leftarrow \min(Update(G,P_R,P_L,C_R,C_L),N_L(u)+s)$
    % \If{$C_R=\emptyset$ or $P_R=ub_R$}
    %     \State $\bigcap_{v\in P_R} N_{C_L} \leftarrow 0$
    %     \For{$u\in C_L$}
    %         \If{$N_R(u)=|R|$}
    %             \State $con_R^{C_L}\leftarrow con_R^{C_L}+1$
    %         \EndIf
    %     \EndFor
    % \State $sum\leftarrow \sum_{u_i\in R} min(s-|\overline{N_{P_L}}(u_i)|,|\overline{N_{C_L}}(u_i)|)$
    % \State $ub_L\leftarrow min(ub_L,|P_L|+\bigcap_{v\in P_R} N_{C_L}(v)+sum)$
    % \EndIf
    % \If{$C_L=\emptyset$ or $P_L=ub_L$}
    %     \State $\bigcap_{v\in P_L} N_{C_R} \leftarrow 0$
    %     \For{$u\in C_R$}
    %         \If{$N_L(u)=|L|$}
    %             \State $con_L^{C_R}\leftarrow con_L^{C_R}+1$
    %         \EndIf
    %     \EndFor
    % \State $sum \leftarrow \sum_{u_i\in L} min(s-|\overline{N_{P_R}}(u_i)|,|\overline{N_{C_R}})(u_i)|$
    % \State $ub_R\leftarrow min(ub_R,|P_R|+\bigcap_{v\in P_L} N_{C_R}(v)+sum)$
    % \EndIf
    \State $lb_L\leftarrow \max(lb_L,lb+1-ub_R)$
    \State $lb_R\leftarrow \max(lb_R,lb+1-ub_L)$
\Statex {\bf Procedure} {\emph Update$(G,P_{\mu},P_{\nu},C_{\mu},C_{\nu})$} 
    \If{$C_{\mu}=\emptyset$ or $P_{\mu}=ub_{\mu}$}
    \State $sum \leftarrow \sum_{u_i\in P_{\mu}\cup C_{\mu}} \min(s-|\overline{N}_{P_{\nu}}(u_i)|,|\overline{N}_{C_{\nu}})(u_i)|$
    \State \bf {return} $\min(ub_{\nu},|P_{\nu}|+|\bigcap_{v\in P_{\mu}} N_{C_{\nu}}(v)|+sum)$
    \Else   \bf{ return} $P_{\nu}\cup C_{\nu}$
    \EndIf
\end{algorithmic}
\end{algorithm}

% \begin{algorithm}[ht!]
%     \DontPrintSemicolon
%     % \scriptsize
%         \caption{updateBounds}
%         updateBounds($G$,$lb$,$v$,$ub_L$,$ub_R$,$lb_L$,$lb_R$);
%         \Begin{
%             \If{$v\in L$}{
%                 $ub_R\leftarrow N_{G}(v)+s$\newline
%                 $lb_L\leftarrow max(lb_L,lb+1-ub_R)$\newline
%             }\Else{
%                 $ub_L\leftarrow N_{G}(v)+s$\newline
%                 $lb_R\leftarrow max(lb_R,lb+1-ub_L)$\newline
%             }
%         }
% \end{algorithm}
\subsubsection{Reduction Rules with Single-side Bounds.} In this section, we introduce three reduction rules highlighting the single-side bounds.
Given a bipartite graph $G=(L\cup R,E)$, partial solution $P=P_L\cup P_R$, candidate set $C=C_L\cup C_R$, exclusion set $X$, and the single-side bounds $lb_L,lb_R,ub_L,ub_R$.

% \begin{reduction}
%     If there exists a vertex $v \in C_L$ (resp. $v \in C_R$) such that $|N(v)|+s<lb_R$ (resp.  $|N(u)|+s<lb_L$), then $v$ is not included in any $s$-biplexes $B$ with $|B_R|\geq lb_R$ (resp. $|B_L|\geq lb_L$).
%     % we can remove $v$ (resp. $u$) from G as the right side(resp. left side) of all $s$-biplexes containing v (resp. $u$) will be of size at most $lb_R-1$ (resp. $lb_L-1$).
% \end{reduction}

\noindent 
\textbf{Reduction 1.} 
\label{redu:1}
If there exists a vertex $u \in C_L$ (resp. $u \in C_R$) such that $|N(u)|+s<lb_R$ (resp.  $|N(u)|+s<lb_L$), then $u$ is not included in any $s$-biplexes $B$ with $|B_R|\geq lb_R$ (resp. $|B_L|\geq lb_L$).

This is because that $lb_R\leq |B_R|\leq |N(v)|+s$ for each vertex $v\in B_L$. Similarly, $lb_L\leq |B_L|\leq |N(v)|+s$ for each vertex $v\in B_R$.

\noindent 
\textbf{Reduction 2.} If there exists a vertex $u\in P_L$ (resp. $v \in P_R$) such that $|N(u)|=lb_R-s$ (resp. $|N(v)|=lb_L-s$), then $N(u)$ (resp. $N(v)$) must be included in each $s$-biplex that contains $P$. Moreover, given that $P\supseteq N(u)$, 
each vertex $x\in C_L$ (resp. $x \in C_R$) such that 
$|\overline{N}_{P_R}(x)|>s$ is not included in any $s$-biplex that contains $P$.

% To satisfy the single-side lower bound $lb_R$, there are a total of $lb_R$ vertices to be included in $P_R$. 
The proof of Reduction 2 is two-fold.
Firstly, the result that $N(u)$ must be included directly follows from the property of $s$-biplex and the definition of $lb_R$.
Secondly, the vertex $x$ is not included in any $s$-biplex as $x$ has more than $s$ anti-neighbours in partial solution $P$.

The following two reductions directly follow from the definition of single-side bounds.

\noindent 
\textbf{Reduction 3.} If $|P_R|+|C_R|=lb_R$ (resp. $|P_L|+|C_L|=lb_L$), then $P_R\cup C_R$ (resp. $P_L\cup C_L$) must be included by any $s$-biplex that contains $P$.

\noindent \textbf{Reduction 4.} If either $lb_L>ub_L$ or $lb_R>ub_R$, then there does not exist an $s$-biplex that contains $P$ and meets the constraints.

% We further consider the user parameter $\theta_L$ and $\theta_R$.
The following reduction follows from the Section 4.2 of \citep{yu2023maximum}.

\noindent \textbf{Reduction 5.}
\label{redu:5}
% Given an bipartite graph $G=(L\cup R)$, an $s$-biplex $P$ (induced subgraph of $G$) containing $v_i$,
Each pair of vertices $u,v\in L$ (resp. $u,v\in R$ ) with $|N(u)\cap N(v)|<lb_R-2s$ (resp. $|N(u)\cap N(v)|<lb_L-2s$) can not be simultaneously included in any $s$-biplex that meets the constraints.

The reduction rules above are applied in line~\ref{line: reduction_fastmvbp} in Algorithm~\ref{alg:fastmvbp}.

\subsection{Progressive Search} \label{sec: progressive search}
In top-$k$ cohesive subgraph search problems, an algorithm often find many small subgraphs which do not appear in the final solutions.
To defer the search of these small subgraphs, the progressive search technique is proposed \citep{lyu2020maximum}.
The main idea is to prioritize the search of large cohesive subgraphs by heuristically adjusting the constraints, including both size lower bounds and size upper bounds.
The aforementioned reduction techniques can be combined with the constraints to further reduce the graph size, resulting in less running time of the branching algorithms.
% that helps reduce the running time by iteratively adjusting the 
% { motivation}

In this section, we introduce the progressive search technique to partition the bounds on left-side vertices  $[\theta_L,\Delta^{(\theta_R)}_R+s]$ and the bounds on right-side vertices $[\theta_R,\Delta^{(\theta_L)}_L+s]$ into disjoint sub-ranges. 
% The union of these sub-ranges is exactly $[lb_L,ub_L]$ (resp. $[lb_R,ub_R]$)  which guarantees the correctness.

Initialise five variables $lb\leftarrow 0$,
$lb_L\leftarrow \theta_L$, $ub_L\leftarrow \Delta^{(\theta_R)}_R+s$, $lb_R\leftarrow \theta_R$ and $ub_R\leftarrow \Delta^{(\theta_L)}_L+s$.
We partition the search range into multiple sub-ranges by iteratively updating the variables:

% (i) The first sub-range is $[max(ub_L/8,\theta_L),ub_L]$ for left side, and $[lb_R,ub_R]$ for right side. We run {\emph MVBPD} to search $s$-biplexes that have at least $max(ub_L/8,\theta_L)$ left-side vertices, and at least $\theta_R$ right-side vertices. 
% After the running of {\emph MVBPD}, if we have found $k$ maximal $s$-biplexes, then we set $ub_L$ as the size of the $k$-th largest one among these biplexes, and go to (iv).
% Otherwise, go to (ii).

% \noindent
(i) Run {\emph MVBPD} to search $s$-biplexes whose number of left-side vertices fits in $[\Phi,ub_L]$ where $\Phi=\max (ub_L/2,\theta_L)$, and number of right-side vertices fits in $[\theta_R,\Delta^{(\theta_L)}_L+s]$.
After this run, if we have found $k$ maximal $s$-biplexes, then update $lb\leftarrow$ the size of the smallest one among these biplexes, $ub_L\leftarrow { lb_L}$, and go to (iii).
Otherwise, go to (ii).

% \noindent
% (ii) If $ub_L/2\geq \theta_L$, then set $ub_L\leftarrow ub_L/2$ and go to (i). Otherwise, terminate. 

(ii) If $lb_L> \theta_L$, then set $ub_L\leftarrow \Phi, \Phi\leftarrow max(\frac{\Phi}{2},\theta_L), lb_L\leftarrow \Phi$ and go to (i). Otherwise, go to (iii). 

% \noindent
(iii) Run {\emph MVBPD} to search $s$-biplexes whose number of left-side vertices fit in $[\theta_L,ub_L]$, and number of right-side vertices fit in $[\max(lb+1-ub_L,\theta_R),ub_R]$. After this run, we terminate.

Notice that the union of the disjoint sub-ranges for left-side vertices is exactly $[\theta_L,\Delta^{(\theta_R)}_R+s]$. And we search each disjoint sub-range with fixed right-side range $[\theta_R,\Delta^{(\theta_L)}_L+s]$. These two setups  guarantee to cover all possible maximal $s$-biplexes.
We employ an early stop procedure once $k$ $s$-biplexes have been found.
% with at least $ub_L$ left-side vertices. 
% In this case, there does not exist larger $s$-biplexes with a smaller left-side upper bound due to the shared right-side range.
In this case, we move to search larger $s$-biplexes whose number of left-side vertices fits in $[\theta_L, ub_L]$. In addition, by applying Lemma~3, we improve the range for right-side vertices from $[\theta_R,\Delta^{(\theta_L)}_L+s]$ to $[\max(lb+1-ub_L,\theta_R),\Delta^{(\theta_L)}_L+s]$. The Progressive Search technique is employed in line~\ref{line: PB}-\ref{line: PB end} in Algorithm~\ref{alg:fastmvbp}.

\subsection{Time Complexity of FastMVBP}
In this section, we analyse the worst-case time complexity of the improved algorithm FastMVBP. 
\begin{theorem}
Given a bipartite graph $G=(L\cup R,E)$,
FastMVBP finds top-$k$ maximal $s$-biplexes in
$O(n\Delta^{s+1}d_2^{s+1} {\gamma_s}^{d_2})$ time, where $\gamma_s$ is the largest real root of $x^{s+4} - 2x^{s+3} + x^2 -x +1 =0$, which is strictly less than 2.
% For $s=1,2,3$, $\gamma_s=1.754,1.888, 1.947$, respectively.
{ For $s=0,1,2,3$, $\gamma_s=1.465,1.754,1.888, 1.947$, respectively.}
% Given a bipartite graph $G=(L\cup R,E)$,
% FastMVBP finds $k$ maximum $s$-biplexes in time
% $O(nm + (|L|(\Delta_L)^{s+1}+|R|(\Delta_R)^{s+1})\newline (d_2)^{s+1} {\gamma_s}^{d_2})$, where $\gamma_s$ is the largest real root of $x^{s+4} - 2x^{s+3} + x^2 -x +1 =0$, which is strictly less than 2. For $s=1,2,3$, $\gamma_s=1.754,1.888$ and 1.947 respectively.
\end{theorem}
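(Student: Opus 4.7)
The plan is to combine the counting bounds already established for the 2-hop decomposition with the single-branching-tree bound from Theorem~\ref{thm:timemvbp}, and then argue that all remaining work (degeneracy computation, reduction rules, progressive search) is of lower order.

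\textbf{Step 1: count the subinstances.} First I would recall from the decomposition subsection that \emph{MVBPD} creates one outer instance per vertex $v_i$ in the 2-hop degeneracy ordering, and then, inside each outer instance, enumerates every $S \subseteq N^3_{>\eta}(v_i)$ with $|S|\le s$. Since $|N^3_{>\eta}(v_i)|+1 \le d_2\Delta$, the number of such $S$ is $O((d_2\Delta)^s)$, so the total number of calls to \emph{MVBPF} is $O(n(d_2\Delta)^s)$. The induced graph that \emph{MVBPF} then runs on has at most $1+s+|N_{>\eta}(v_i)\cup N^2_{>\eta}(v_i)| \le d_2+s+1$ vertices.

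\textbf{Step 2: apply Theorem~\ref{thm:timemvbp} locally.} Applying Theorem~\ref{thm:timemvbp} to an instance of size $N=d_2+s+1$ with maximum degree at most $\Delta$ bounds the cost of a single \emph{MVBPF} call by $O(N\Delta\gamma_s^{N}) = O(d_2\Delta\gamma_s^{d_2})$, where I have absorbed the constant $s+1$ into the exponent using $\gamma_s^{s+1}=O(1)$. Multiplying the per-call cost by the number of calls from Step~1 gives
\[
O\bigl(n(d_2\Delta)^s \cdot d_2\Delta\gamma_s^{d_2}\bigr) \;=\; O\bigl(n\Delta^{s+1}d_2^{s+1}\gamma_s^{d_2}\bigr),
\]
which is the claimed bound.

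\textbf{Step 3: preprocessing and progressive-search overhead.} I would then argue that the auxiliary work does not dominate. Algorithm~\ref{alg:computeDegeneracy} runs in $O(n\Delta^2)$ time as noted in the text; \emph{updateBounds} and the five reduction rules execute in time polynomial in $n$, $\Delta$, and $d_2$ per call, so their total contribution is absorbed into the dominant exponential factor. The progressive-search loop halves $\Phi$ each iteration and therefore invokes \emph{MVBPD} at most $O(\log\Delta)$ times, which contributes only a logarithmic multiplicative factor that is subsumed by the polynomial prefactor in the stated bound (one may absorb it into $\Delta^{s+1}$).

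\textbf{Main obstacle.} The only non-routine point is justifying that the branching analysis of Theorem~\ref{thm:timemvbp}, which was stated for the whole graph, transfers cleanly to a single decomposed subgraph: one must observe that the recurrences in the proof of Theorem~\ref{thm:timemvbp} depend only on the size of the current candidate set $C$ (and not on a global quantity), so they are valid when applied inside $G_i$ with the initial candidate set $N_{>\eta}(v_i)\cup N^2_{>\eta}(v_i)$. Once this is noted, the rest of the argument is a direct multiplication of bounds.
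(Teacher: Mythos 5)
Your proposal is correct and follows essentially the same route as the paper: bound the number of decomposed subinstances by $n(d_2\Delta)^s$, observe that each call to \emph{MVBPF} starts with a candidate set of size at most $d_2$ so the branching bound of Theorem~\ref{thm:timemvbp} applies with exponent $d_2$ and polynomial per-branch cost $O(d_2\Delta)$, and note that the $O(n\Delta^2)$ degeneracy computation and the other overheads are lower order. Your explicit remark that the recurrence depends only on $|C|$ is exactly the (implicit) justification the paper uses when it writes the per-subgraph cost as $O(\gamma_s^{d_2})$.
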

According to Section$~\ref{BaseTimeComplexity}$, the time complexity of MVBP is $O(\gamma_s^{|C|})$, where $|C|\leq n$. Next, we introduce how FastMVBP reduce $|C|$ to $d_2$. 
The FastMVBP algorithm computes the 2-hop degeneracy order once in $O(n\Delta^2)$ time.  
Given the 2-hop degeneracy order and a vertex $v_i$, we enumerate all subsets $S\subseteq N^3_{>\eta}(v_i)$ with $|S|\leq s$. In particular, the number of enumerated subsets is bounded by $|N^3_{>\eta}(v_i)|^s\leq (d_2\Delta)^s$.
In each decomposed subgraph $\{v_i\}\cup {N_>}_\eta(v_i)\cup {N^2_>}_\eta(v_i) \cup S$, we initialise with $|C|=|{N_>}_\eta(v_i)\cup {N^2_>}_\eta(v_i)|\leq d_2$, and thus FastMVBP runs in $O({\gamma_s}^{d_2})$. 
Additionally, the time for checking the maximality of found $s$-biplexes is $O(d_2+d_2\Delta)$. 
The total running time is
$O(n\Delta^2+ (d_2+d_2\Delta)\sum_{v_i\in V}|N^3_{>\eta}(v_i)|^s{\gamma_s}^{d_2})$ which is bounded by
$O(n\Delta^2 + n\Delta^{s+1}d_2^{s+1}{\gamma_s}^{d_2})$. This result is particularly useful as the parameter $s$ is often small and fixed in real-world tasks.
\section{Experiments}
\subsection{Experiment Setup}
\par
% {\color{blue} Add descriptions to each table, emphasizing our strengths and the variations across different graphs and parameters.}
\begin{table}[h]
% \scriptsize
\centering
  \caption{Dataset statistics}
  \label{tab:tbl-data}
  \resizebox{1.0\columnwidth}{!}{
      \begin{tabular}{c|c|c|c|c|c}
        \toprule[2pt]
        Dataset & $|L|$ & $|R|$ & $|E|$ & $\Delta_L$ &$\Delta_R$ \\
        \hline
        YouTube & 94,238 & 30,087 & 293,360 & 1,035 & 7,591 \\
        Location & 172,091 & 53,407 & 293,697 & 28 & 12,189 \\
        CiteULike & 153,277 & 731,769 & 2,338,554 & 189,292& 1264\\
        Twitter & 175,214 & 530,418 & 1,890,661 & 968 & 19,805\\
        IMDB & 896,302 & 303,617 & 3,782,463 & 1,334 & 1,590 \\
        Team & 901,166 & 34,461 & 1,366,466 & 17 &2,671  \\
        AmazonRatings & 2,146,057 & 1,230,915 & 5,743,258 & 12,180 & 3,096 \\
        Trackers & 27,665,730 & 12,756,244 & 140,613,762 & 1,100,065 & 11,571,952 \\
		\bottomrule[2pt]
\end{tabular}
}
\end{table}

\noindent {\bf Datasets} We evaluate our algorithm on eight real-world graphs collected from KONECT (\url{http://konect.cc/}).
The number of edges of these datasets ranges from $293K$ to $140M$. The maximum degree of vertices of these datasets ranges from $17$ to $11M$.  
The statistics of these datasets are shown in Table~\ref{tab:tbl-data}.
These datasets cover a wide range of real-world scenarios including social networks, geometric location networks, citation networks, web networks and cooperation networks.  
% {[To Zhenxiang: check if your data covers all these senarios]}

% , including the number of vertices in left side $L$ (resp. in right side $R$), the number of undirected edges $m$, maximum degree in left side $\Delta_L$ (resp. in right side $\Delta_R$). 
% When dealing with the statistics of these graphs, we directly ignore those with multiple edges and timestamps. 

% We conducted experiments on various bipartite graphs, including both relatively balanced and extremely unbalanced, with vertex counts ranging from hundreds of thousands to hundreds of millions and edge counts from hundreds of thousands to billions.

\smallskip
\noindent {\bf Benchmarks.}
%\paragraph{Benchmarks}
We compare FastMVBP with two state-of-the-art algorithms.
\begin{itemize}[leftmargin=*]
    \item \textit{FastBB.} The FastBB is designed for searching $s$-biplexes with top-$k$ largest number of edges \citep{yu2023maximum}.
    To solve the TBS problem, we modified the code of FastBB  by removing reductions related to most edges, changing progressively search to focus on most vertices, and keeping track of solutions with most vertices.
    % We adapted this algorithm to search $s$-biplexes with top-$k$ largest number of vertices. {[Don't use the word "adapt". Say how you modify the algorithm.]}
    
    \item \textit{ListPlex.}The ListPlex is known to be one of the most efficient algorithms for enumerating all maximal $s$-plexes from a normal graph \citep{wang2022listing}. 
    To list $s$-biplexes in bipartite graphs, we modify ListPlex by considering both left side and right side vertices, removing reductions that could not be applied in bipartite graphs,  and keeping track of $s$-biplexes with most vertices.
    % {\color{blue}\citep{wang2022listing}, where other efficient algorithms such as FP\citep{dai2022scaling}, do not perform significantly better than ListPlex on the TBS problem. Therefore, we used ListPlex as the benchmark in our experiments.}
    % {\color{blue} We applied almost all the techniques of ListPlex to MVBP to solve the TBS problem on bipartite graphs, incorporating modifications decomposition algorithm and some reductions to TBS problem, removing reductions that could not be applied to the TBS problem even with modifications, and keeping track on $s$-biplexes instead of $s$-plexes.}
    % We extended this algorithm to bipartite graphs. {[The same issue. How do we extend?]}
\end{itemize}
% {[To zhenxiang: there are some crucial issues regarding the two benchmark algorithms. Please elaborate them well. Reviewers are always quite critical regarding the benchmark algorithms.]}

\smallskip
\noindent {\bf Implementation.}
The code is implemented using C++. All experiments are conducted on a single machine with an Intel(R) Xeon(R) Platinum 8360Y CPU @ 2.40GHz. All source codes are available at \url{https://github.com/PlutoAiyi/FastMVBP}. \par

\smallskip
\noindent {\bf Settings.} 
We conduct seven experiments.
In the first three experiments, we test the running times of different algorithms on real-world datasets and investigate the effects of different parameters.
We also conduct two experiments on synthetic ER models\citep{bollobas1998random}
% {[What is synthetic datasets??]} 
to investigate the effect of the datasets themselves.
In the last two experiments, we carried out case studies to discuss the effect of different components in FastMVBP and showcase the difference between solutions of TBS and of TBSE.
% {In all the experiments, we set the default parameters to $k=10$, $s=1$, and $\theta_L=\theta_R$.[Check! Is it the setting of all experiments. If not, we dont need this section.]}
We set the time limit as 24 hours (86400 seconds). 
If an instance runs more than 24 hours, we mark the instances \textit{OOT} as out of time. 
In our experiments, we count the 2-degeneracy $d_2$ after applying reduction rules. 
% In particular, we list the parameters in different experiments: 
% (1) We vary $s=1,3,5$ for dense bipartite graphs, and $s=1,2$ for sparse or unbalanced bipartite graphs aiming to analyze the effect of $s$. We fix $k=10$ and adjust $\theta_L=\theta_R$ to make sure the existence of at least one $s$-biplex that both sides contains no less than $\theta_L$ and $\theta_R$ vertices respectively and at least one algorithm can list top-$k$ maximal $s$-biplexes in time limit.
% (2) We investigate the effect of $\theta_L$ and $\theta_R$. We vary $\theta_L$ and $\theta_R$ for different graphs, mainly because all three algorithms cannot list the top-$k$ maximal $s$-biplexes in 24 hours even when $k=1$.
% (3) We vary $k=100,1000,10000,100000$ to investigate the effect of $k$.

{ 
In real-world applications, the parameter $s$ is often set to a small value aiming to find more cohesive and dense subgraphs. Notice that $\theta_L$ and $\theta_R$ indicate the size lowerbounds of left side and right side of $s$-biplexes. When $\theta_L$ and $\theta_R$ are large, few $s$-biplexes that can satisfy the size lowerbounds. Therefore, these two parameters are also set to small values. These parameter settings are widely adopted by $s$-plex search papers such as \citep{yu2023maximum,wang2022listing,dai2024efficient}. 
}

% { Parameter Settings. The FastMVBP requires two parameters:
% the search depth $L$ and the perturbation parameter $\epsilon\in (0,1)$.
% We set $L=10|V|$ where $|V|$ is the number of vertices of the input graph and $\epsilon= 0.2$. This setting is obtained using the general IRACE automatic parameter configuration tool \citep{lopez2016irace}. For large instances like {\em soc-pokec, LiveJ} and  {\em uk2002}, we decrease the search depth as $L=0.5|V|$ to encourage explorations within the cutoff time. } 

\subsection{Experiment Results}
\begin{table}[ht!]
    \centering
    % \footnotesize
	\caption{The running time of listing top-10 maximal $s$-biplexes in real-world datasets with varying $s$}
	\label{tab:tbl-community1}
    \resizebox{0.8\textwidth}{!}{
    \begin{tabular}{c|c|c|c|c|c}
	\toprule[1.5pt]
	\multirow{2}{*}{\tabincell{c}{Graph\\ $(|L|,|R|,|E|)$}} & \multirow{2}{*}{$s$}    & \multirow{2}{*}{$\theta_L,\theta_R$}  &  \multicolumn{3}{c}{The running time (s)}
	\\
	\cline{4-6}
	& & &  FastBB & ListPlex & \textbf{FastMVBP}\\
	\hline
 	\multirow{3}{*}{\shortstack{YouTube \\$(94238,30087,293360)$}} & 1 & 3,3 & 532.359 &14260.9 & \textbf{10.644}\\
    \cline{2-6}
	& 3 & 15,15 & OOT & 17100.4 & \textbf{196.715}\\
     \cline{2-6}
	& 5 & 20,20 & OOT & 61343.9 &\textbf{2665.68} \\
	\cline{1-6}
  \multirow{2}{*}{\shortstack{Location\\ $(172091,53407,293697)$} } & 1 & 3,3 & 61.555 & 3769.85 &\textbf{2.107}\\
 \cline{2-6}
	& 2 & 5,5 & \textbf{0.117} &5004.44 &0.293\\
 \cline{1-6}
	\multirow{3}{*}{\shortstack{CiteULike\\ $(153277,731769,2338554)$}} & 1 & 11,11 & 2165.52 & 501.524 & \textbf{132.364} \\
     \cline{2-6}
	& 3 & 20,20 & OOT &57.912 &\textbf{17.23} \\
      \cline{2-6}
	& 5 & 24,24 & OOT & 3015.6 &\textbf{153.243} \\
	\cline{1-6}
	\multirow{3}{*}{\shortstack{Twitter\\ $(175214,530418,1890661)$}} & 1 & 9,9 & 240.734 & 93.153& \textbf{6.461}\\
      \cline{2-6}
	& 3 & 16,16 & 21121.6 & 7900.65 &\textbf{5.869}\\
      \cline{2-6}
	& 5 & 21,21 & 19438.6 & 548.69 &\textbf{4.322}\\
	\cline{1-6}
	\multirow{3}{*}{\shortstack{IMDB\\ $(896302,303617,3782463)$}} & 1 & 3,3 & 208.019 & OOT&\textbf{125.608}\\
 \cline{2-6}
	& 3 & 13,13 & 33601.2 &OOT &\textbf{112.613}\\
 \cline{2-6}
	& 5 & 25,25 & 3450.89 & 104.08 &\textbf{19.090}\\
	\cline{1-6}
 	\multirow{2}{*}{\shortstack{Team\\ $(901166,34461,1366466)$}} & 1& 3,3 & 4901.18 & 20175.7&\textbf{765.875}\\
	\cline{2-6}
	& 2 & 7,7 & 892.157 &247.579 &\textbf{39.671} \\
 \cline{1-6}
	% \multirow{3}{*}{\shortstack{AmazonRatings\\ $(2146057,1230915,5743258)$}} & 1 & 4,4 & 433.151 &OOT &\textbf{14.769}&537\\
 	\multirow{3}{*}{\shortstack{AmazonRatings\\ $(2146057,1230915,5743258)$}} & 1 & 7,7 & 191.635 &657.838 &\textbf{9.567}\\
 \cline{2-6}
	& 3 & 11,11 & OOT &OOT &\textbf{22.221}\\
 \cline{2-6}
	& 5 & 17,17 & OOT &OOT &\textbf{962.669}\\
    \cline{1-6}
	\multirow{2}{*}{\shortstack{Trackers\\$(27665730,12756244,140613762)$}} & 1 & 150,150 & 1338.76 & 11504.9 &\textbf{27.457}\\
 \cline{2-6}
	& 2 & 150,150 & 7631.21 &OOT &\textbf{24.683}\\
    \bottomrule[1.5pt]
    \end{tabular}
}
\end{table}

\noindent
{\bf Varying $s$.} We vary $s=1,3,5$ for dense bipartite graphs, and $s=1,2$ for sparse or unbalanced bipartite graphs aiming to analyze the effect of $s$. Note that a larger $s$ is meaningless as it may result in sparse subgraphs. We fix $k=10$ and adjust $\theta_L=\theta_R$ to make sure the existence of at least one maximal $s$-biplex.
% and at least one algorithm can list top-$k$ maximal $s$-biplexes before OOT.
% We search top-$10$ maximal $s$-biplexes with varying varying $s$.

The results are shown in Table~\ref{tab:tbl-community1}. We observe that FastMVBP is the only algorithm that completes within the time limit across all experimental scenarios. 
In all cases, FastMVBP outperforms ListPlex by at least one order of magnitude. This finding underscores the importance of designing algorithms tailored specifically for bipartite graphs.
In almost all cases, FastMVBP outperforms FastBB by up to three orders of magnitude. The only exception is that FastBB is marginally faster than FastMVBP by 0.2 seconds on the sparse Location dataset with parameters $\theta_L=\theta_R=5$. This is due to the sparse nature of the Location dataset, where the basic branching framework MVBP has already efficiently solved the TBS problem. Further details will be provided in our ablation study.
% We emphasize that the time difference between algorithms is minor in the  sparse location dataset when $\theta_L, \theta_R$ is large. 
% In particular, FastMVBP differs with FastBB in {\color{blue} less than $0.2$ }$0.1$ seconds in the Location dataset with $s=2$. 
% {As $s$ increases, to ensure that meaningful solutions are obtained while preventing the all three algorithms from becoming OOT, we typically increase $\theta$ in different datasets.[I think we should report our findings. ]} This explains the decreasing trend in both 2-hop degeneracy $d_2$ and running time.

% A counter-intuitive observation is that $d_2$ decrease as $s$ increases.

\begin{table}[ht!]
    \centering
	\caption{The running time of listing top-10 maximal $s$-biplexes from real-world graphs by different algorithms with varying $\theta=\theta_L=\theta_R$ ($s=1$)}
	\label{tab:tbl-community2}
    \resizebox{0.8\textwidth}{!}{
    \begin{tabular}{c|c|c|c|c}
	\toprule[1.5pt]
	\multirow{2}{*}{\tabincell{c}{Graph\\ $(|L|,|R|,|E|)$}} &  \multirow{2}{*}{$\theta_L,\theta_R$}  &  \multicolumn{3}{c}{The running time (s)}
	\\
	\cline{3-5}
	& &  FastBB & ListPlex & \textbf{FastMVBP} \\
	\hline
  	\multirow{3}{*}{\shortstack{YouTube \\$(94238,30087,293360)$}} & \multirow{3}{*}{} 3,3 & 532.359 &1227.69& \textbf{10.644}\\
   &  4,4& 84.391 & 28.7365 & \textbf{1.807}\\
   &  5,5& 97.369 & 44.8598 & \textbf{4.286}\\
	\cline{1-5}
  \multirow{3}{*}{\shortstack{Location\\ $(172091,53407,293697)$} } &  \multirow{3}{*}{}  3,3 & 61.555 & 3769.85 &\textbf{2.107}\\
      &  4,4& 0.259 & 2.130& \textbf{0.162} \\
   &  5,5& 0.014 & \textbf{0.012} & 0.050\\
   \cline{1-5}
	\multirow{3}{*}{\shortstack{CiteULike\\ $(153277,731769,2338554)$}} & \multirow{3}{*}{}  11,11 & 2165.52 &501.524 & \textbf{132.364}\\
   &  12,12& 4162.89 & 302.58 & \textbf{137.948}\\
   &  13,13& 3300.24 & 128.764 & \textbf{75.204}\\
	\cline{1-5}
 	\multirow{3}{*}{\shortstack{Twitter\\ $(175214,530418,1890661)$}} & \multirow{3}{*}{}  7,7 & 13412 &22451.8 & \textbf{146.788}\\
     &  8,8& 73.313 &359.531 & \textbf{16.335}\\
    &  9,9& 240.734 & 93.153 & \textbf{6.461} \\
   	\cline{1-5}
	\multirow{3}{*}{\shortstack{IMDB\\ $(896302,303617,3782463)$}} & \multirow{3}{*}{}  3,3 & 208.019 & OOT&\textbf{125.608}\\
    &  4,4& 82.473 &OOT & \textbf{15.280}\\
   & 5,5& 8.600 & OOT& \textbf{4.926}\\
	\cline{1-5}
 	\multirow{3}{*}{\shortstack{Team\\ $(901166,34461,1366466)$}} & \multirow{3}{*}{} 3,3 & 4901.18 &20175.7 &\textbf{765.875} \\
      &  4,4& 317.176 & 1075.24& \textbf{108.801} \\
   &  5,5& 27.364 & 83.493 & \textbf{19.194}\\
 \cline{1-5}
 	\multirow{3}{*}{\shortstack{AmazonRatings\\ $(2146057,1230915,5743258)$}} & \multirow{3}{*}{}  7,7 & 191.635 & 657.838 &\textbf{9.567}\\
     &  8,8& 88.912 & 199.321 & \textbf{19.271}\\
   &  9,9& 50.689 & 70.139 & \textbf{16.923}\\
    \cline{1-5}
	\multirow{3}{*}{\shortstack{Trackers\\$(27665730,12756244,140613762)$}} & \multirow{3}{*}{}  145,145 & 2789.44 &OOT &\textbf{27.980}\\
      &  150,150& 1338.76 & 11504.9 & \textbf{27.457} \\
   & 155,155& 568.854 & 357.266& \textbf{90.131}\\
    \bottomrule[1.5pt]
    \end{tabular}
}
\end{table}

\noindent
{\bf Varying $\theta_L$ and $\theta_R$.} 
This experiment investigates the effect of $\theta_L$ and $\theta_R$. We fix $k=10$, $s=1$ and vary $\theta_L$ and $\theta_R$ for different graphs. 
Note that small $\theta_L$ and $\theta_R$ may result in extremely unbalanced subgraphs.
Therefore, we adjust $\theta_L$ and $\theta_R$ with large values and make sure the existence of at least one maximal $s$-biplex.
% {mainly because all three algorithms cannot list the top-$k$ maximal $s$-biplexes before OOT even when $k=1$??}{\color{blue} to prevent the scenario where even with $k=1$, none of the three algorithms can solve the TBS problem before OOT under lower  $\theta_L$ and $\theta_R$.}.
As shown in Table~\ref{tab:tbl-community2}, 
FastMVBP outperforms both benchmarks in almost all cases, which is similar to the results in the {\em varying $s$} experiment.
We observe a clear trend that the running time of all algorithms decreases as $\theta_L$ and $\theta_R$ go large. A possible reason is that large $\theta_L, \theta_R$ additionally add constraints in the TBS search, resulting in more reductions in the graph size. 
This finding highlights the necessity of the progressive search technique, which prioritises the search of large subgraphs. 
% synchronously, which is highly consistent with our theory. 
% The same observation does not hold for FastBB and ListPlex. This may be due to these two benchmarks not being sensitive to the increase of lower bounds, which highlights the importance of single-side bounds. 
% Additionally, due to the enhancement of constraints by $IB$, $d_2$ is significantly smaller on most datasets than we presented. 

\begin{figure*}[ht!]
	\centering
	\begin{minipage}{0.32\linewidth}
		\centering
        \includegraphics[width=\linewidth]{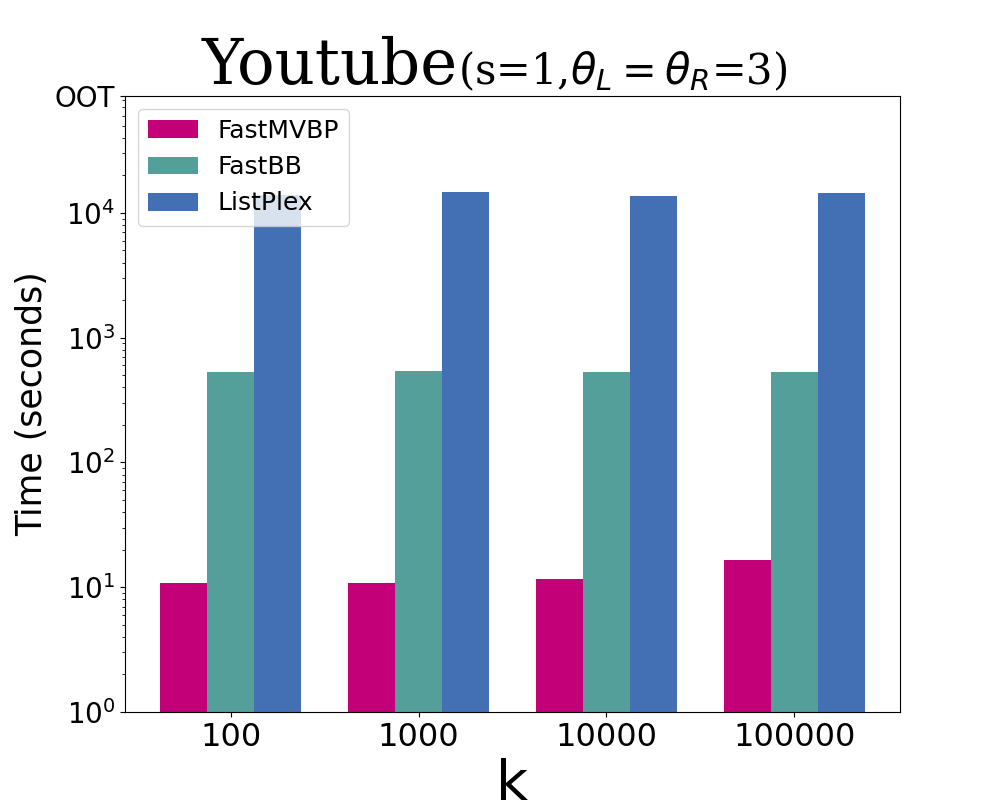}
		% \caption{Youtube}
		% \label{Youtube}%文中引用该图片代号
	\end{minipage}
	\begin{minipage}{0.32\linewidth}
		\centering
		\includegraphics[width=\linewidth]{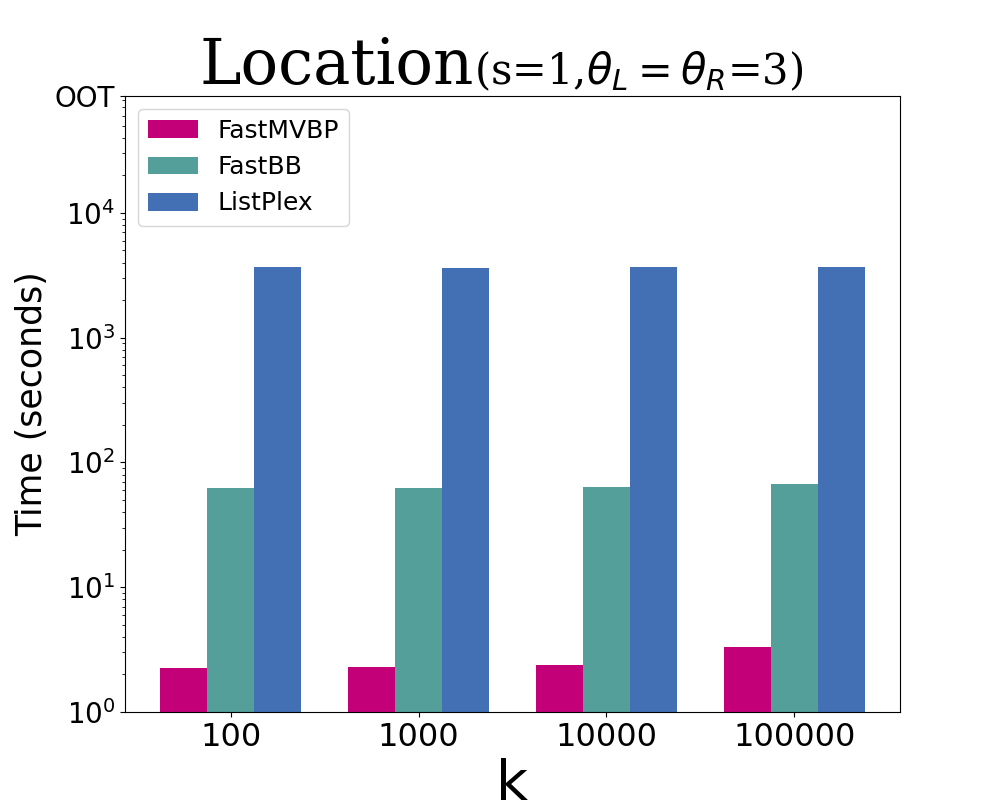}
		% \caption{Location}
		% \label{Twitter}%文中引用该图片代号
	\end{minipage}
	%\qquad
	%让图片换行，
	\begin{minipage}{0.32\linewidth}
		\centering
		\includegraphics[width=\linewidth]{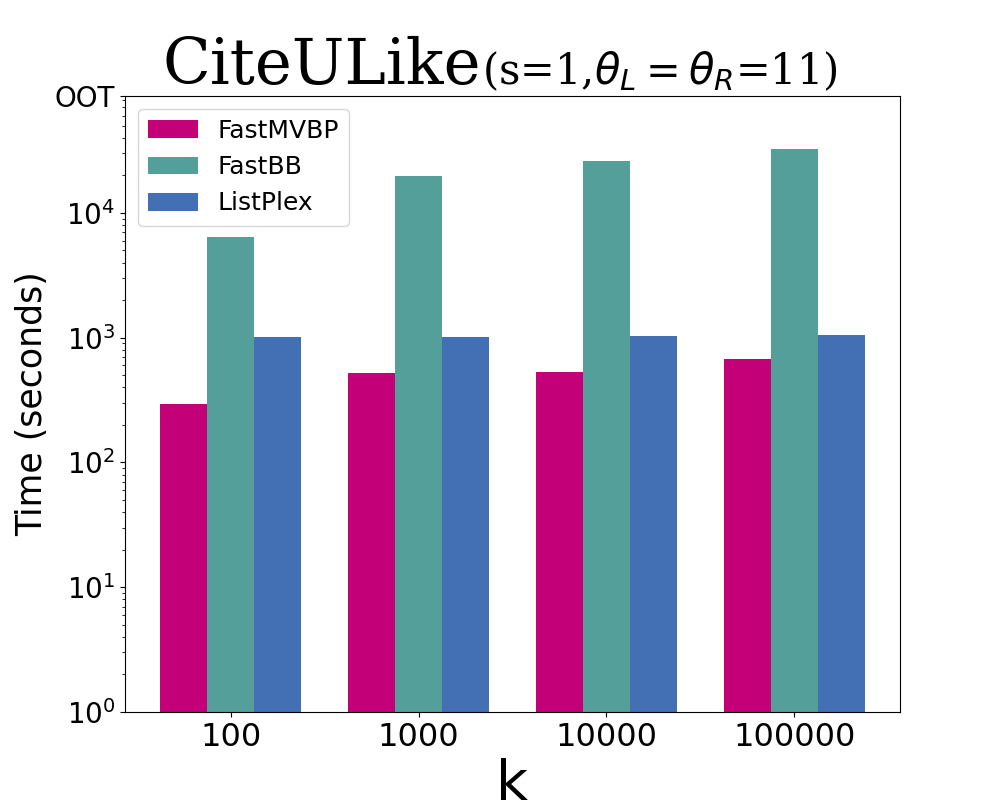}
		% \caption{CiteULike}
		% \label{CiteULike}%文中引用该图片代号
	\end{minipage}
 
	\begin{minipage}{0.32\linewidth}
		\centering
		\includegraphics[width=\linewidth]{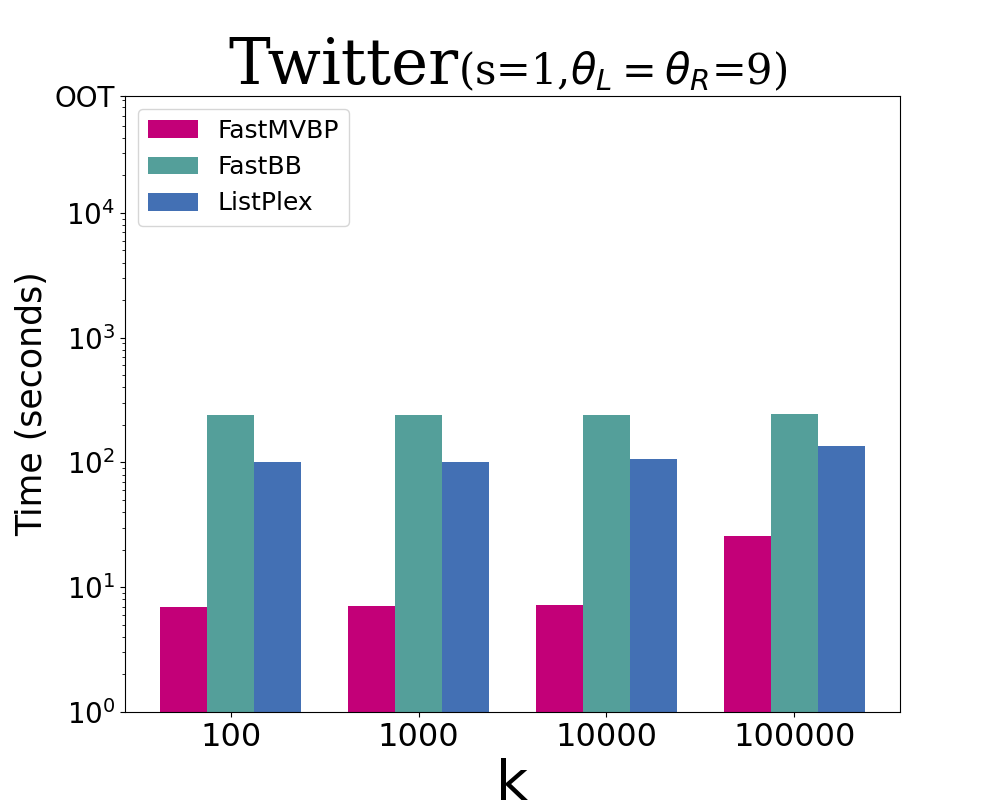}
		% \caption{Twitter}
		% \label{AmazonRatings}%文中引用该图片代号
	\end{minipage}
	\begin{minipage}{0.32\linewidth}
		\centering
		\includegraphics[width=\linewidth]{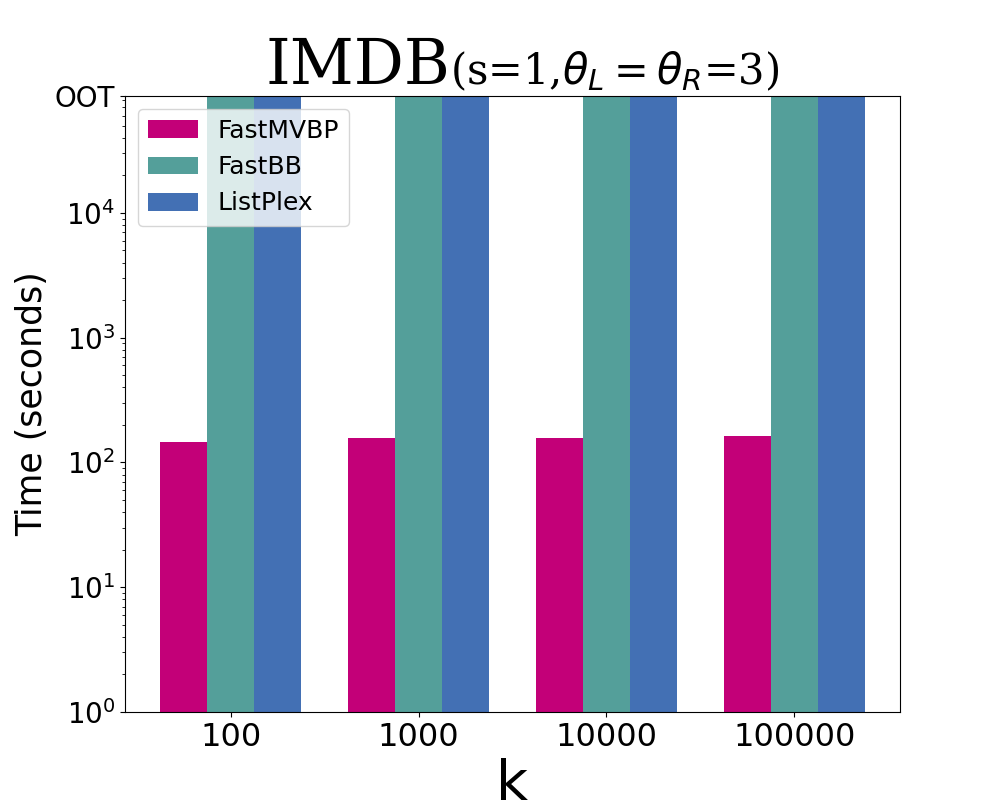}
		% \caption{IMDB}
		% \label{Location}%文中引用该图片代号
	\end{minipage}
	\begin{minipage}{0.32\linewidth}
		\centering
		\includegraphics[width=\linewidth]{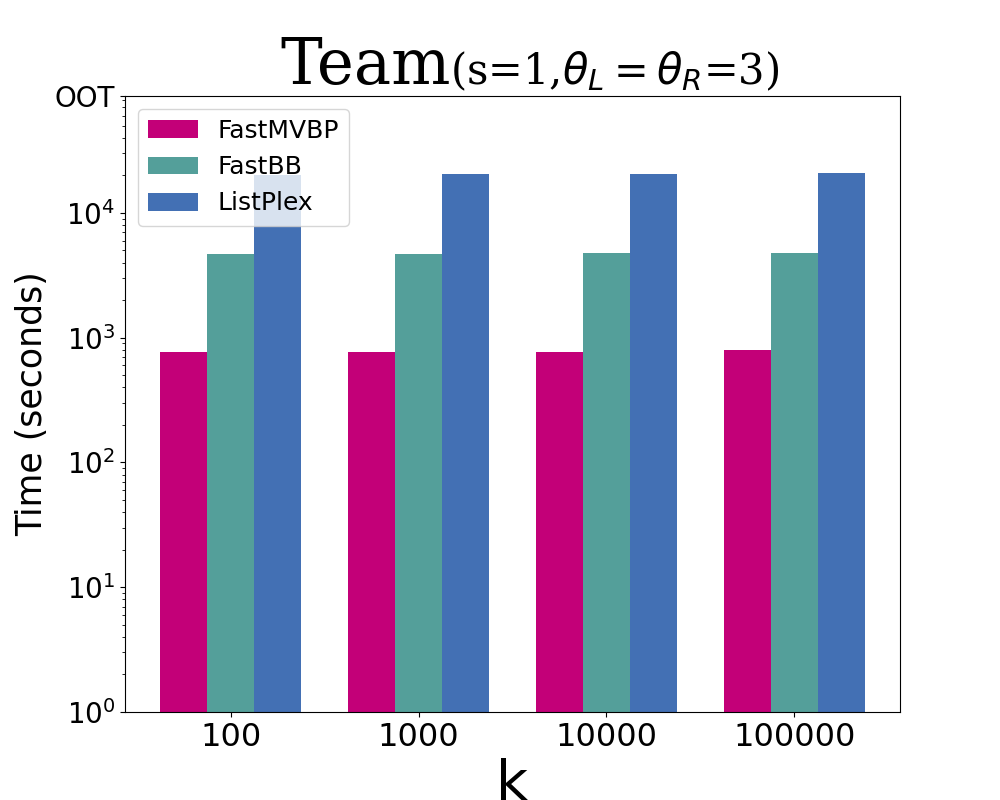}
		% \caption{DiscogsAritst}
		% \label{discogs}%文中引用该图片代号
	\end{minipage}
 
	%\qquad
	%让图片换行，
	\begin{minipage}{0.32\linewidth}
		\centering
		\includegraphics[width=\linewidth]{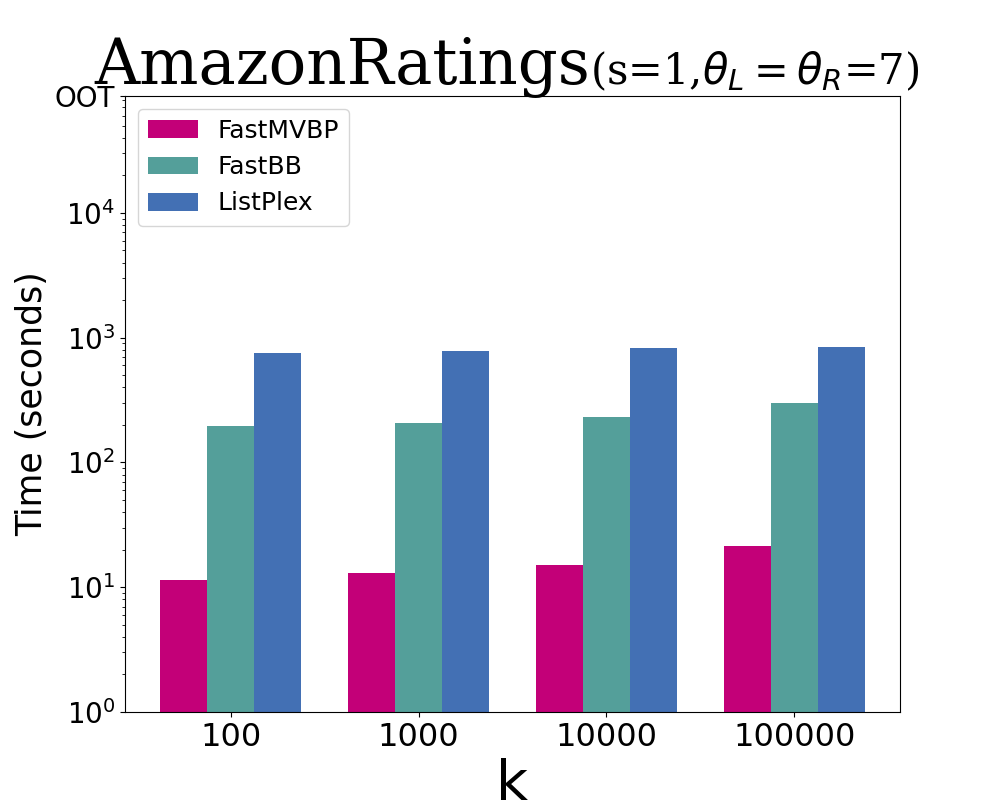}
		% \caption{AmazonRatings}
		% \label{IMDB}%文中引用该图片代号
	\end{minipage}
	\begin{minipage}{0.32\linewidth}
		\centering
		\includegraphics[width=\linewidth]{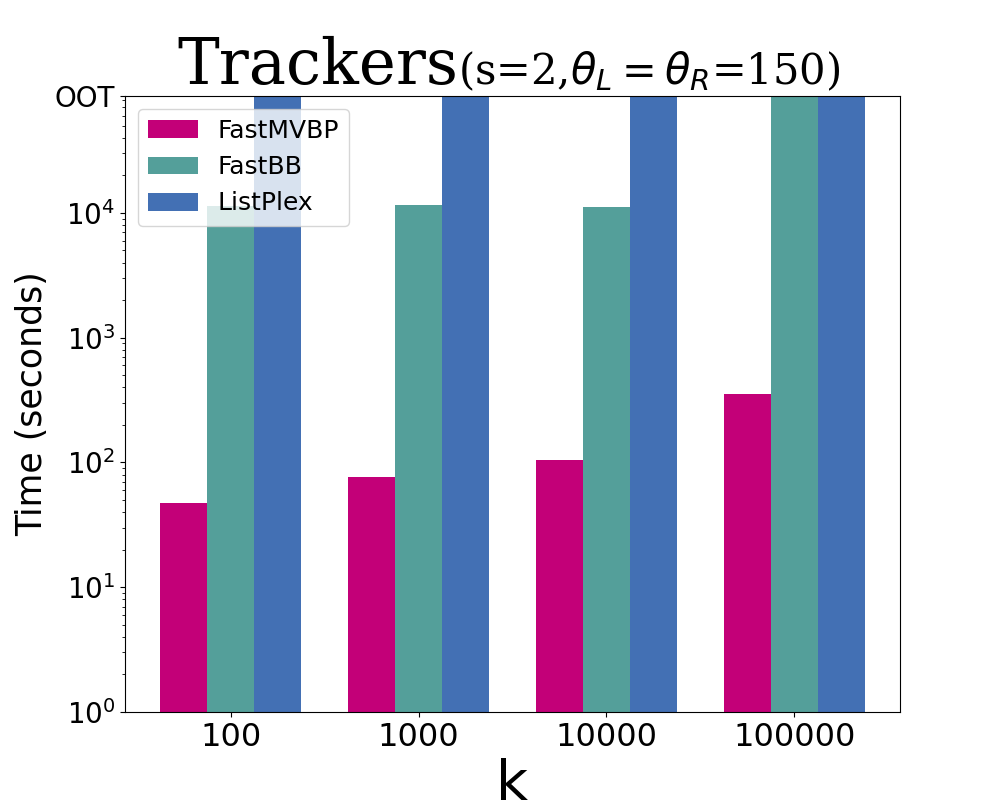}
		% \caption{Trackers}
		% \label{Trackers}%文中引用该图片代号
	\end{minipage}

 \caption{The running time of listing top-$k$ maximal $s$-biplexes with varying $k$}
 \label{fig:varying_k}
\end{figure*}
% By the way, we find ListPlex is much slower than our algorithm, due to the fact that ListPlex do not consider the properties of bipartite graphs.\par

\noindent
{\bf Varying $k$.}  We fix $s=1, \theta_L=\theta_R$, and vary $k=100,1000,10000,100000$ to investigate the effect of $k$.
As shown in Fig.~\ref{fig:varying_k}, FastMVBP significantly outperforms all benchmarks in all cases.
We observe that both ListPlex and  FastBB failed in all scenarios in the IMDB dataset, while FastMVBP successfully found solutions in several minutes. This finding highlights the efficiency of our proposed FastMVBP algorithm. We also observe significant differences in running times across various datasets. This variation can be attributed to the differing sizes and edge densities of these datasets.
% We also observe that FastMVBP is more sensitive to the increase of $k$ than the benchmarks. A possible reason might be the single-side bounds are sensitive to $k$, as a larger $k$ often implies a lower $lb$, which further affects the pruning effectiveness. Yiping: I cannot observe the mentioned sensitivity in the figures.

\smallskip
\noindent {\bf Synthetic Datasets.}
To investigate the effect of dataset statistics, we generate synthetic datasets using the well-known ER (Erdös-Réyni) model \citep{bollobas1998random}. 
The density of an ER bipartite graph is defined as $\frac{2m}{|L|+|R|}$.
We control the number of vertices and the density to generate different datasets.

\noindent
{\bf Varying $\# vertices$.}  We conduct experiments on synthetic datasets generated using the ER model.
We fix $k=10, s=1, \theta_L=\theta_R=3$, the dataset density as $20$ and vary the number of vertices from $100$ to $1M$ to investigate the effect of dataset size.
As shown in Fig.~\ref{fig:synthetic} (left), FastMVBP outperforms all benchmarks in all cases.
We observe that FastMVBP consistently outperforms all benchmarks across all scenarios. Furthermore, the running time for all algorithms increases exponentially as the number of vertices grows. This implies that running algorithms on smaller input graphs can significantly reduce running time. These findings underscore the importance of employing both reduction and graph decomposition techniques, which are capable of significantly reducing graph size.
% This might be because that power-law distribution is not held in ER graphs, resulting in a small gap between the size of large $s$-biplexes and of small $s$-biplexes.

\noindent
{\bf Varying $density$.} We then fix $k=10, s=1, \theta_L=\theta_R=3$, the dataset size as $100,000$ and vary dataset density from $5$ to $40$ to investigate the effect of density in the synthetic ER graphs.
As shown in Fig.~\ref{fig:synthetic} (right), 
FastMVBP consistently outperforms all benchmarks. In addition, we observe that the running time of all algorithms increases as the density goes large. A possible reason is that there are more branches to be considered in dense graphs.
In particular, we observe that FastBB and ListPlex become OOT when density is $40$.
The findings above imply the pruning principles of FastMVBP are more effective.  
% In particular, the benchmarks fail to solve {\em TBS} problem in 24 hours even when $n=100k$ and $density=40$, which highlights the efficiency of FastMVBP.

\begin{figure}[ht!]
	\centering
	\begin{minipage}{0.4\linewidth}
		\centering
		\includegraphics[width=\linewidth]{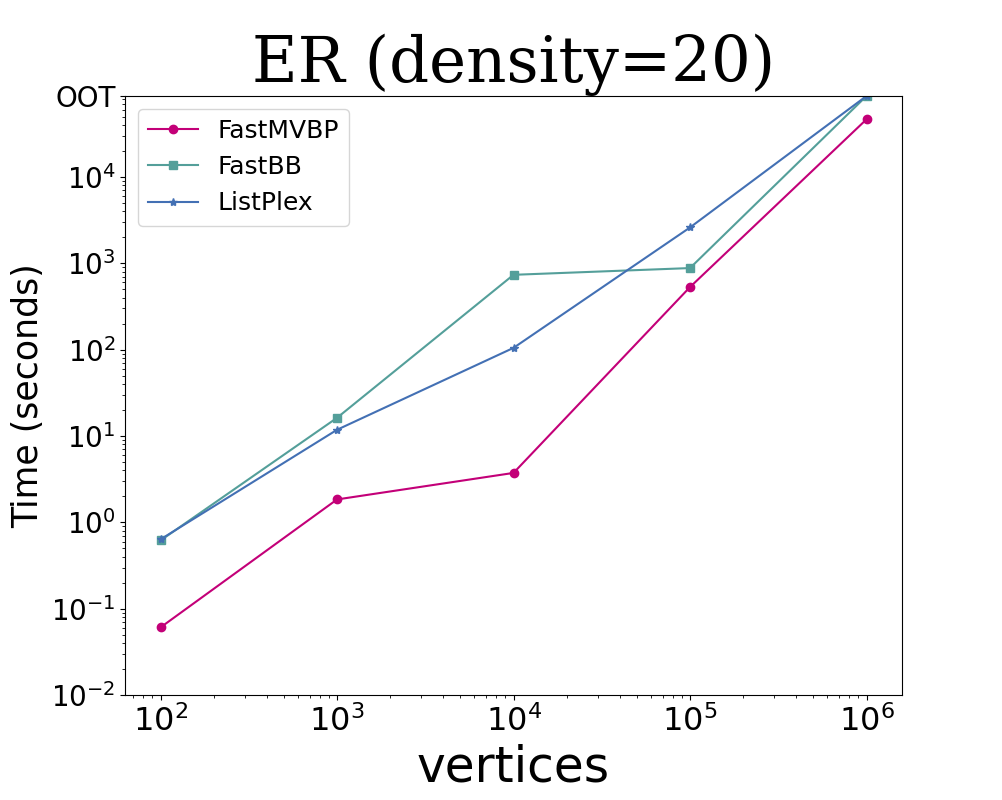}
		% \caption{Youtube}
		% \label{Youtube}
	\end{minipage}
	\begin{minipage}{0.4\linewidth}
		\centering
		\includegraphics[width=\linewidth]{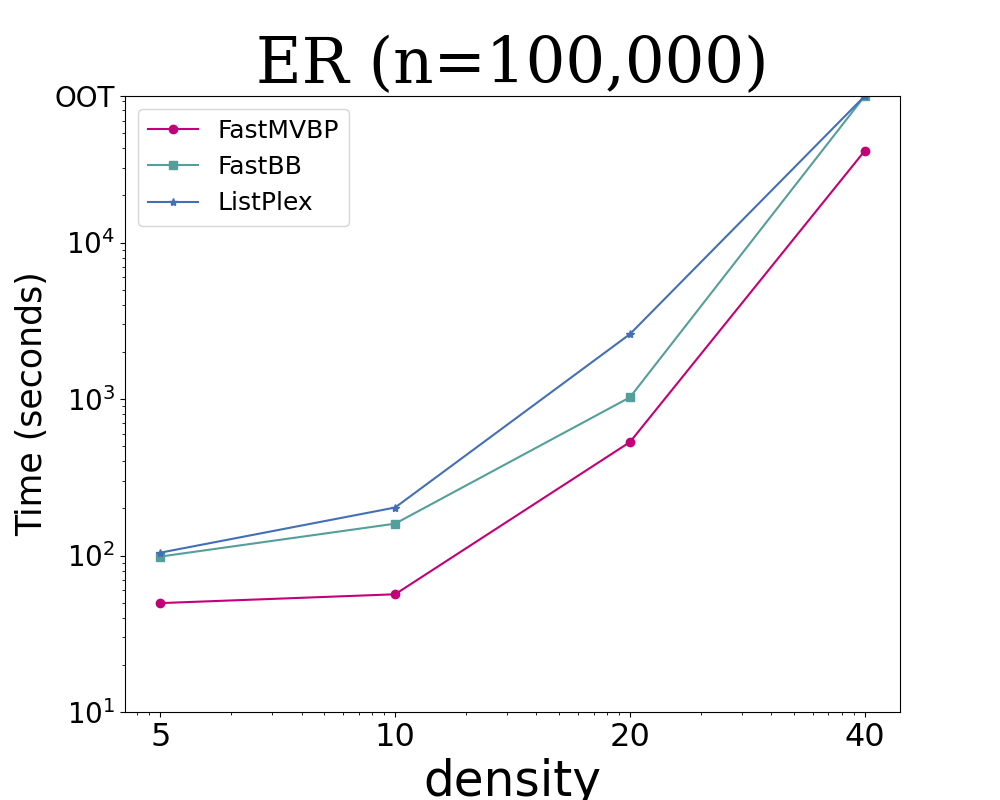}
		% \caption{Location}
		% \label{Twitter}
	\end{minipage}
 \caption{The running time of listing top-$10$ maximal $s$-biplexes in different synthetic graphs}
 \label{fig:synthetic}
\end{figure}

\noindent
\subsection{Abalation study}
%{\bf Abalation study.} 
FastMVBP is equipped with three practical techniques: 2-hop decomposition, reduction with single-side bounds, and progressive search, as discussed in Section~\ref{section:techniques}.
In this study, we investigate the significance of each technique.

First, we observe that the single-side bounds play an indispensable role for solving the problems practically. Without using this technique in the reduction, the algorithm becomes OOT for almost all instances.
% {the algorithm becomes OOT for almost all instances. [Check my revision!]}
Therefore, we mainly investigate the decomposition and progressive search in the remaining.
We compare combinations of these two techniques by designing three variants:
{\em MVBP+R} denotes MVBP with single reduction technique, {\em MVBP+RD} employs both reduction and 2-hop decomposition, {\em MVBP+RP} utilises both reduction and progressive search, and FastMVBP employs all three techniques.

The results are given in Table~\ref{tab:abalation}. 
% Note that the running time is influenced by both the {target graph size and the number of pruning branches[??? What do you mean? ]}. 
% {Generally, to avoid out-of-time (OOT) errors, the presence of the 2-hop decomposition technique is significant relevant in comparison to the progressive search technique.}
In general, to avoid out-of-time (OOT) errors, the 2-hop decomposition technique is more critical than the progressive search technique.
This might be attributed to that the 2-hop degeneracy $d_2$ is often small after applying reductions, which means the decomposition can prune a significant number of branches. 
% {\color{blue}[may be we should only state observation instead of conjecture] 
We observe that algorithms utilizing the 2-hop decomposition technique(MVBP+RD and FastMVBP) often perform well on graphs with a small $d_2$.
The only exception is found in the Trackers dataset, where MVBP+RD performs worse than MVBP+RP. We also notice that a large $d_2$ value of 1850 is present in this dataset.
% {[To all: be careful in distinguishing your conjecture and observation. ]}
As for the progressive search technique, it tends to be effective in scenarios with small $\theta_L$ and $\theta_R$.
For instance, in the Youtube dataset, MVBP+RP outperforms other benchmarks when $\theta_L=\theta_R=3$, and is outperformed when $\theta_L$ and $\theta_R$ are large. 
This could be explained by that in these scenarios, there are a significant number of maximal $s$-biplexes that do not appear in the final solution set. 
% As for the progressive search technique, it tends to be effective when there is a significant number of $s$-biplexes whose (single side) sizes are much larger than the given lower bounds $\theta_L$ or $\theta_R$. 
% For instance, there are many{[don't use many again, use the number.]} $1$-biplexes with $\theta_L = \theta_R = 3$ in the YouTube dataset, making MVBP+RP outperform other ablation algorithms {[Another case that you use your observation as a evident result. Rephrase, we only need to imply the relation.]} {\color{blue}[I am not sure if the "As for" sentence is a conjecture or not] For instance, we notice that MVBP+RP performs better than other ablation algorithms when searching for 1-biplexes with $\theta_L = \theta_R = 3$ on the Youtube dataset, which contains a large number of 1-biplexes with $\theta_L = \theta_R = 3$. However, when $\theta_L = \theta_R = 5$, the number of $1$-biplexes is limited, and the progressive search} .
% However, when $\theta_L = \theta_R = 5$, the number of $1$-biplexes is limited, making the progressive search less efficient than other algorithms. 
% {[The above description is to subjective!]}
By integrating the advantages of the three techniques, FastMVBP can efficiently solve instances that are OOT for ablation algorithms. 
% {[Otherwise means without any of these techniques or just without a part?]}
% We can see that for a small subset of graphs and parameters, progressive search actually increases the running time, which is consistent with our previous analysis.
% However, considering that the extended time due to progressive search is acceptable and it also has a significant acceleration effect, we believe that using progressive search is reasonable. 
% We also noticed that the effect of 2-hop decomposition on many graphs is not as significant as the effect of 1-hop decomposition on non-bipartite graphs, which may be because the three-hop neighbourhood is much larger than the two-hop neighbourhood.

\begin{table}[ht!]
    \centering
	\caption{ABALATION STUDY: The running time of listing top-10 maximal $s$-biplexes in real-world datasets}
	\label{tab:abalation}
    \resizebox{0.9\textwidth}{!}{
    \begin{tabular}{c|c|c|c|c|c|c|c}
	\toprule[1.5pt]
	\multirow{2}{*}{\tabincell{c}{Graph\\ $(|L|,|R|,|E|)$}} & \multirow{2}{*}{$s$}    & \multirow{2}{*}{$\theta_L,\theta_R$}  &  \multicolumn{4}{c|}{The running time (s)} & \multirow{2}{*}{$d_2$}
	\\
	\cline{4-7}
	& & &  MVBP+R & MVBP+RD & MVBP+RP &\textbf{FastMVBP} &\\
	\hline
 	\multirow{4}{*}{\shortstack{YouTube \\$(94238,30087,293360)$}} & \multirow{2}{*}{1} &  3,3 &OOT  &39256.1 & 236.063 & \textbf{10.644} &792\\
  & & 5,5 & OOT &306.333 &  914.491 &\textbf{4.286} &181\\
  \cline{2-8}
	 & 3 & 15,15 & 1168.08 & 257.585 & 880.095 & \textbf{196.715} &67\\
     \cline{2-8}
	& 5 & 20,20 & 14495.3 & \textbf{2641.18} & 14547.28& 2665.68&63 \\
	\cline{1-8}
  \multirow{3}{*}{\shortstack{Location\\ $(172091,53407,293697)$} } & \multirow{2}{*}{1} & 3,3 & OOT & 102.515 &20.007&\textbf{2.107}&695\\ 
    & & 5,5 & 1.209 &\textbf{0.020} &0.162 &0.050&32\\
 \cline{2-8}
	& 2 & 5,5 & 83.7543 & 87.108 & 11317.6 & \textbf{0.293}&32\\
 \cline{1-8}
	\multirow{4}{*}{\shortstack{CiteULike\\ $(153277,731769,2338554)$}} & \multirow{2}{*}{1} & 11,11 & OOT & 238.224 & 6764.79 & \textbf{132.364} &220\\
     & & 13,13 &54716 & 105.427& 12708.5&\textbf{75.204}&149\\
     \cline{2-8}
	& 3 & 20,20& 32271.8 & \textbf{17.23} & 7987.53 & 19.394 &149\\
      \cline{2-8}
	& 5 & 24,24 & 8732.66 & 529.221 & OOT & \textbf{267.457}&149\\
	\cline{1-8}
	\multirow{4}{*}{\shortstack{Twitter\\ $(175214,530418,1890661)$}} & \multirow{2}{*}{1} & 9,9& OOT & 105.737 & 11208.4 & \textbf{6.461}&171\\
      & & 11,11 &OOT &\textbf{16.191}& OOT&17.910& 96\\
      \cline{2-8}
	& 3 & 16,16& OOT & 31.321 & 7429.22 & \textbf{5.86} &53\\
      \cline{2-8}
	& 5 & 21,21 & 38705.6 & 4.527 & 3390.99 & \textbf{4.322} &55\\
	\cline{1-8}
	\multirow{4}{*}{\shortstack{IMDB\\ $(896302,303617,3782463)$}} & \multirow{2}{*}{1} & 3,3 & OOT & OOT & OOT & \textbf{125.608}&518\\
       & & 5,5 & OOT &703.135 &OOT &\textbf{4.926}&147\\
 \cline{2-8}
	& 3 & 13,13 & OOT & 74.96 & OOT & \textbf{112.613}&78\\
 \cline{2-8}
	& 5 & 25,25 & 2126.7 & \textbf{18.836} & 2355.18 & 19.090&59\\
	\cline{1-8}
 	\multirow{3}{*}{\shortstack{Team\\ $(901166,34461,1366466)$}} & \multirow{2}{*}{1}& 3,3 & OOT & 21203.7 & OOT & \textbf{765.875}&1114 \\
         & & 5,5 & OOT & 35.848 & OOT & \textbf{27.364}&234\\
	\cline{2-8}
	& 2 & 7,7 & OOT & \textbf{32.917} & OOT & 39.671 &239\\
 \cline{1-8}
	% \multirow{3}{*}{\shortstack{AmazonRatings\\ $(2146057,1230915,5743258)$}} & 1 & 4,4 & 433.151 &OOT &\textbf{14.769}&537\\
 	\multirow{4}{*}{\shortstack{AmazonRatings\\ $(2146057,1230915,5743258)$}} & \multirow{2}{*}{1} & 7,7 & OOT & 617.842 & OOT & \textbf{9.567}&67\\
         & & 9,9 & OOT& 59.342& OOT&\textbf{16.923}&56\\
 \cline{2-8}
	& 3 & 11,11 & OOT & 205.805 & OOT & \textbf{22.221}&64\\
 \cline{2-8}
	& 5 & 17,17 & OOT & \textbf{955.473} & OOT & 962.669&73\\
    \cline{1-8}
	\multirow{3}{*}{\shortstack{Trackers\\$(27665730,12756244,140613762)$}} & \multirow{2}{*}{1} & 150,150 & OOT & OOT & 28.576 & \textbf{27.457}&1850\\
          & & 155,155 &OOT & 244.45& \textbf{28.347}&90.131&1850\\
 \cline{2-8}
	& 2 & 150,150 & OOT & OOT & 27.421 & \textbf{24.683}&1850\\
    \bottomrule[1.5pt]
    \end{tabular}
}
\end{table}

\noindent
\subsection{Case Study: Social Recommendation}
%{\bf Case Study: Community Detection.} 
In this section, we study a real-world case where we apply the \emph{TBS} model in social recommendation. { The goal is to identify as many users as possible who share common interests and preferences, maximizing the coverage of relevant audiences. This enables more accurate recommendations and improves the effectiveness of targeted advertising.}
As a reference, we also test the top-$k$ $s$-biplex with most edges (TBSE) models \citep{yu2023maximum}. Recall that the TBSE asks for the $k$ maximal $s$-biplexs with most edges, where the maximality is still defined with vertex number.
%We compare two cohesive models: maximal $s$-biplex with the most vertices and maximal $s$-biplex with the most edges, in three social networks. 
Specifically, we use five social networks. 
The YouTube dataset contains users (w.l.o.g., denoted as \emph{left}) and their groups(denoted as \emph{right}).
The Twitter dataset contains users (left) and the tags (right).
% they mentioned in their posts. 
The IMDB dataset contains movies (right) and actors (left) who participated in those movies.
The AmazonRatings dataset contains users (left) and products (right) rated by those users.
% {\color{red}[Finish this section.]}
The Mooc dataset contains users (left) and course activities (right).
All these datasets are collected from SNAP\footnote{https://snap.stanford.edu/data} and KONECT\footnote{http://konect.cc}.
As shown in Table~\ref{tab:comprasion}, the TBS often identifies more users (left) who share common interests (right) than TBSE.
{ 
Specifically, in the Mooc dataset, TBS identifies additional 1114, 483 and 408 users than TBSE.}
% In contrast, TBSE often identifies a more dense subgraph by introducing more interests (right) than TBS. 
{ To maximize the coverage of potential audiences for social recommendation, TBS is more preferred.}
% To identify potential users for social recommendation, TBS is more preferred. 

\begin{table}[ht!]
    \centering
	\caption{Social Recommendation: identifying the users that share common interests ($k=1,s=1$)}
	\label{tab:comprasion}
    \resizebox{0.8\textwidth}{!}{
    \begin{tabular}{c|c|c|c|c|c|c|c}
	\toprule[2pt]
	% \multirow{2}{*}{\tabincell{c}{Graph\\ $(|L|,|R|,|E|)$}}    & \multirow{2}{*}{$\theta_L,\theta_R$}  &  
  \shortstack{Graph\\ $(|L|,|R|,|E|)$} & \shortstack{$\theta_L,\theta_R$} & 
 \shortstack{TBS\\(total)} & \shortstack{TBS\\(left)} & \shortstack{TBS\\(right)} &\shortstack{TBSE\\(total)}& \shortstack{TBSE\\(left)} &\shortstack{TBSE\\(right)}\\     
	\hline
 \multirow{3}{*}{\shortstack{Youtube\\ $(94238,30087,293360)$}} 
    & 8,8 & 36 &28& 8 & 34 & 9&25 \\ \cline{2-8}
	& 10,10 & 32 &10&22 & 31 & 11 &20\\ \cline{2-8}
	& 12,12 & 30&12&18& 30 & 14 &16\\  \cline{1-8}
\multirow{3}{*}{\shortstack{Twitter\\ 	 $(175214,530418,1890661)$}} 
    % &1 & 12,12& 46 & 44 & 34 & 29 &12&15\\ \cline{2-9}
    & 12,12& 46 & 34 & 12 & 44 & 29 &15\\ \cline{2-8}
	% &1 & 13,13 & 45 & 44 & 32 & 29&13&15\\ 	\cline{2-9}
    & 13,13 & 45 &32 &13  & 44 & 29 &15\\ 	\cline{2-8}
    & 14,14& 45 &31&14 & 44 & 29 &15\\       \cline{1-8}
    % &1 & 14,14& 45 & 44 & 31 & 29 &14 &15\\       \cline{1-9}
\multirow{3}{*}{\shortstack{IMDB\\ $(896302,303617,3782463)$}} 
    % &1 & 3,3 & 192 & 192 & 3 &3 &189 & 189  \\ \cline{2-9}
    & 3,3 & 192 &189 &3 & 192 &189 & 3  \\ \cline{2-8}
    & 4,4 & 129 &125 &4 & 95 &89 &6\\ \cline{2-8}
	% &1 & 4,4 & 129 & 95 & 4 &6 &125&89\\ \cline{2-9}
	& 5,5 & 111 & 106 &5 &95 &89 &6 \\ \cline{1-8}
\multirow{3}{*}{\shortstack{AmazonRatings\\ $(2146057,1230915,5743258)$}} 
    & 7,7 & 55 &48& 7 & 45 &33& 12\\ \cline{2-8}
	& 9,9 & 49 &40&9 & 45 & 33 &12\\ \cline{2-8}
 % &1 & 9,9 & 49 & 45 & 40 & 33 &9 &12\\ \cline{2-9}
	& 11,11 & 46 &35&11& 45 & 33 &12\\  \cline{1-8}
 % &1 & 11,11 & 46 & 45 & 35 & 33 &11 &12\\  \cline{1-9}
 \multirow{3}{*}{\shortstack{Mooc\footnotemark \\ $(7047,97,178443)$}} 
    & 3,3 & 5116 &5113& 3 & 4001 & 3997&4 \\ \cline{2-8}
	& 4,4 & 4484 &4480&4 & 4001 & 3997 &4\\ \cline{2-8}
	& 5,5 & 3850&3845&5& 3442 & 3437 &5\\  \cline{1-8}
    \bottomrule[2pt]
    \end{tabular}
}
\end{table} 
\footnotetext{To get TBSEs, we run the state-of-the-art algorithm in \citep{yu2023maximum} and report its best result before OOT.}

\section{Conclusion}
In this paper, we investigated the {\em TBS} problem aiming to find $k$ maximal $s$-biplexes with most vertices. There is a number of applications in graph data queries that ask for solving this problem efficiently.
We give a rigorous proof to show the NP-harness of the {\em TBS} problem.
An exact branching algorithm MVBP that solves the problem was presented afterward. 
We also investigated some effective techniques to improve the algorithm, including 2-hop decomposition, single-side bounds, and progressive search. Combining these techniques, we improved MVBP to FastMVBP which has a reduced time complexity both theoretically and empirically.
%reduces the time complexity from $O^*(\gamma_s^n)$ to $O^*(\gamma_s^{d_2})$ where $\gamma_s<2$ is a constant, and $d_2\ll n$ is the 2-hop degeneracy order.
Experiments show that FastMVBP outperforms other benchmark algorithms, and in several instances, the speedup can be up to three orders of magnitude.
It is clear that many sub-instances in our algorithm can be computed independently due to the power of decomposition and branching techniques. We plan to explore efficient parallel computation in the future. Also, the proposed algorithms, as well as the analysis techniques, can be easily generalized to other subgraph query problems.

\section*{Acknowledgments}
This work was supported by National Natural Science Foundation of China under grant nos. 62372093 and 61972070, and Natural Science Foundation of Sichuan Province of China under grants 2023NSFSC1415 and 2023NSFSC0059. 

%%
%% The next two lines define the bibliography style to be used, and
%% the bibliography file.
% \newpage

\bibliographystyle{elsarticle-num-names}
\bibliography{main}

\begin{thebibliography}{52}
\expandafter\ifx\csname natexlab\endcsname\relax\def\natexlab#1{#1}\fi
\providecommand{\url}[1]{\texttt{#1}}
\providecommand{\href}[2]{#2}
\providecommand{\path}[1]{#1}
\providecommand{\DOIprefix}{doi:}
\providecommand{\ArXivprefix}{arXiv:}
\providecommand{\URLprefix}{URL: }
\providecommand{\Pubmedprefix}{pmid:}
\providecommand{\doi}[1]{\href{http://dx.doi.org/#1}{\path{#1}}}
\providecommand{\Pubmed}[1]{\href{pmid:#1}{\path{#1}}}
\providecommand{\bibinfo}[2]{#2}
\ifx\xfnm\relax \def\xfnm[#1]{\unskip,\space#1}\fi
%Type = Article
\bibitem[{Maier and Simovici(2022)}]{maier2022bipartite}
\bibinfo{author}{C.~Maier}, \bibinfo{author}{D.~Simovici},
\newblock \bibinfo{title}{Bipartite graphs and recommendation systems},
\newblock \bibinfo{journal}{Journal of Advances in Information Technology-in
  print}  (\bibinfo{year}{2022}).
%Type = Article
\bibitem[{Wang et~al.(2022)Wang, Yang, Zhang, and Lin}]{wang2022efficient}
\bibinfo{author}{J.~Wang}, \bibinfo{author}{J.~Yang},
  \bibinfo{author}{C.~Zhang}, \bibinfo{author}{X.~Lin},
\newblock \bibinfo{title}{Efficient maximum edge-weighted biclique search on
  large bipartite graphs},
\newblock \bibinfo{journal}{IEEE Transactions on Knowledge and Data
  Engineering}  (\bibinfo{year}{2022}).
%Type = Article
\bibitem[{Zhao et~al.(2022)Zhao, Chen, Chen, Wang, Lin, and
  Zhang}]{zhao2022finding}
\bibinfo{author}{Y.~Zhao}, \bibinfo{author}{Z.~Chen},
  \bibinfo{author}{C.~Chen}, \bibinfo{author}{X.~Wang},
  \bibinfo{author}{X.~Lin}, \bibinfo{author}{W.~Zhang},
\newblock \bibinfo{title}{Finding the maximum $ k $-balanced biclique on
  weighted bipartite graphs},
\newblock \bibinfo{journal}{IEEE Transactions on Knowledge and Data
  Engineering}  (\bibinfo{year}{2022}).
%Type = Inproceedings
\bibitem[{Allahbakhsh et~al.(2013)Allahbakhsh, Ignjatovic, Benatallah,
  Beheshti, Bertino, and Foo}]{allahbakhsh2013collusion}
\bibinfo{author}{M.~Allahbakhsh}, \bibinfo{author}{A.~Ignjatovic},
  \bibinfo{author}{B.~Benatallah}, \bibinfo{author}{S.-M.-R. Beheshti},
  \bibinfo{author}{E.~Bertino}, \bibinfo{author}{N.~Foo},
\newblock \bibinfo{title}{Collusion detection in online rating systems},
\newblock in: \bibinfo{booktitle}{Web Technologies and Applications: 15th
  ASia-Pacific Web Conference, APWeb 2013, Sydney, Australia, April 4-6, 2013.
  Proceedings 15}, \bibinfo{organization}{Springer}, \bibinfo{year}{2013}, pp.
  \bibinfo{pages}{196--207}.
%Type = Inproceedings
\bibitem[{Beutel et~al.(2013)Beutel, Xu, Guruswami, Palow, and
  Faloutsos}]{beutel2013copycatch}
\bibinfo{author}{A.~Beutel}, \bibinfo{author}{W.~Xu},
  \bibinfo{author}{V.~Guruswami}, \bibinfo{author}{C.~Palow},
  \bibinfo{author}{C.~Faloutsos},
\newblock \bibinfo{title}{Copycatch: stopping group attacks by spotting
  lockstep behavior in social networks},
\newblock in: \bibinfo{booktitle}{Proceedings of the 22nd international
  conference on World Wide Web}, \bibinfo{year}{2013}, pp.
  \bibinfo{pages}{119--130}.
%Type = Inproceedings
\bibitem[{Gangireddy et~al.(2020)Gangireddy, P, Long, and
  Chakraborty}]{gangireddy2020unsupervised}
\bibinfo{author}{S.~C.~R. Gangireddy}, \bibinfo{author}{D.~P},
  \bibinfo{author}{C.~Long}, \bibinfo{author}{T.~Chakraborty},
\newblock \bibinfo{title}{Unsupervised fake news detection: A graph-based
  approach},
\newblock in: \bibinfo{booktitle}{Proceedings of the 31st ACM conference on
  hypertext and social media}, \bibinfo{year}{2020}, pp.
  \bibinfo{pages}{75--83}.
%Type = Incollection
\bibitem[{Yu and Long(2021)}]{yu2021graph}
\bibinfo{author}{K.~Yu}, \bibinfo{author}{C.~Long},
\newblock \bibinfo{title}{Graph mining meets fake news detection},
\newblock in: \bibinfo{booktitle}{Data Science for Fake News: Surveys and
  Perspectives}, \bibinfo{publisher}{Springer}, \bibinfo{year}{2021}, pp.
  \bibinfo{pages}{169--189}.
%Type = Inproceedings
\bibitem[{Murata(2004)}]{murata2004discovery}
\bibinfo{author}{T.~Murata},
\newblock \bibinfo{title}{Discovery of user communities from web audience
  measurement data},
\newblock in: \bibinfo{booktitle}{IEEE/WIC/ACM International Conference on Web
  Intelligence (WI'04)}, \bibinfo{organization}{IEEE}, \bibinfo{year}{2004},
  pp. \bibinfo{pages}{673--676}.
%Type = Inproceedings
\bibitem[{Maier and Simovici(2021)}]{maier2021biclique}
\bibinfo{author}{C.~Maier}, \bibinfo{author}{D.~Simovici},
\newblock \bibinfo{title}{On biclique connectivity in bipartite graphs and
  recommendation systems},
\newblock in: \bibinfo{booktitle}{Proceedings of the 2021 5th International
  Conference on Information System and Data Mining}, \bibinfo{year}{2021}, pp.
  \bibinfo{pages}{151--156}.
%Type = Inproceedings
\bibitem[{Cheng and Church(2000)}]{cheng2000biclustering}
\bibinfo{author}{Y.~Cheng}, \bibinfo{author}{G.~M. Church},
\newblock \bibinfo{title}{Biclustering of expression data.},
\newblock in: \bibinfo{booktitle}{Ismb}, volume~\bibinfo{volume}{8},
  \bibinfo{year}{2000}, pp. \bibinfo{pages}{93--103}.
%Type = Article
\bibitem[{Yang et~al.(2005)Yang, Wang, Wang, and Yu}]{yang2005improved}
\bibinfo{author}{J.~Yang}, \bibinfo{author}{H.~Wang},
  \bibinfo{author}{W.~Wang}, \bibinfo{author}{P.~S. Yu},
\newblock \bibinfo{title}{An improved biclustering method for analyzing gene
  expression profiles},
\newblock \bibinfo{journal}{International Journal on Artificial Intelligence
  Tools} \bibinfo{volume}{14} (\bibinfo{year}{2005}) \bibinfo{pages}{771--789}.
%Type = Article
\bibitem[{Bustamam et~al.(2020)Bustamam, Siswantining, Kaloka, and
  Swasti}]{bustamam2020application}
\bibinfo{author}{A.~Bustamam}, \bibinfo{author}{T.~Siswantining},
  \bibinfo{author}{T.~P. Kaloka}, \bibinfo{author}{O.~Swasti},
\newblock \bibinfo{title}{Application of bimax, pols, and lcm-mbc to find
  bicluster on interactions protein between hiv-1 and human},
\newblock \bibinfo{journal}{Austrian Journal of Statistics}
  \bibinfo{volume}{49} (\bibinfo{year}{2020}) \bibinfo{pages}{1--18}.
%Type = Article
\bibitem[{Dey and Mukhopadhyay(2020)}]{dey2020graph}
\bibinfo{author}{L.~Dey}, \bibinfo{author}{A.~Mukhopadhyay},
\newblock \bibinfo{title}{A graph-based approach for finding the dengue
  infection pathways in humans using protein--protein interactions},
\newblock \bibinfo{journal}{Journal of Computational Biology}
  \bibinfo{volume}{27} (\bibinfo{year}{2020}) \bibinfo{pages}{755--768}.
%Type = Inproceedings
\bibitem[{Li et~al.(2008)Li, Sim, Liu, and Wong}]{li2008maximal}
\bibinfo{author}{J.~Li}, \bibinfo{author}{K.~Sim}, \bibinfo{author}{G.~Liu},
  \bibinfo{author}{L.~Wong},
\newblock \bibinfo{title}{Maximal quasi-bicliques with balanced noise
  tolerance: Concepts and co-clustering applications},
\newblock in: \bibinfo{booktitle}{Proceedings of the 2008 SIAM International
  Conference on Data Mining}, \bibinfo{organization}{SIAM},
  \bibinfo{year}{2008}, pp. \bibinfo{pages}{72--83}.
%Type = Article
\bibitem[{Sim et~al.(2009)Sim, Li, Gopalkrishnan, and Liu}]{sim2009mining}
\bibinfo{author}{K.~Sim}, \bibinfo{author}{J.~Li},
  \bibinfo{author}{V.~Gopalkrishnan}, \bibinfo{author}{G.~Liu},
\newblock \bibinfo{title}{Mining maximal quasi-bicliques: Novel algorithm and
  applications in the stock market and protein networks},
\newblock \bibinfo{journal}{Statistical Analysis and Data Mining: The ASA Data
  Science Journal} \bibinfo{volume}{2} (\bibinfo{year}{2009})
  \bibinfo{pages}{255--273}.
%Type = Article
\bibitem[{Sim et~al.(2011)Sim, Liu, Gopalkrishnan, and Li}]{sim2011case}
\bibinfo{author}{K.~Sim}, \bibinfo{author}{G.~Liu},
  \bibinfo{author}{V.~Gopalkrishnan}, \bibinfo{author}{J.~Li},
\newblock \bibinfo{title}{A case study on financial ratios via cross-graph
  quasi-bicliques},
\newblock \bibinfo{journal}{Information Sciences} \bibinfo{volume}{181}
  (\bibinfo{year}{2011}) \bibinfo{pages}{201--216}.
%Type = Article
\bibitem[{Dhawan et~al.(2019)Dhawan, Gangireddy, Kumar, and
  Chakraborty}]{dhawan2019spotting}
\bibinfo{author}{S.~Dhawan}, \bibinfo{author}{S.~C.~R. Gangireddy},
  \bibinfo{author}{S.~Kumar}, \bibinfo{author}{T.~Chakraborty},
\newblock \bibinfo{title}{Spotting collective behaviour of online frauds in
  customer reviews},
\newblock \bibinfo{journal}{arXiv preprint arXiv:1905.13649}
  (\bibinfo{year}{2019}).
%Type = Article
\bibitem[{Wang and Chiu(2008)}]{wang2008recommending}
\bibinfo{author}{J.-C. Wang}, \bibinfo{author}{C.-C. Chiu},
\newblock \bibinfo{title}{Recommending trusted online auction sellers using
  social network analysis},
\newblock \bibinfo{journal}{Expert Systems with Applications}
  \bibinfo{volume}{34} (\bibinfo{year}{2008}) \bibinfo{pages}{1666--1679}.
%Type = Article
\bibitem[{Wu and Hao(2015)}]{wu2015solving}
\bibinfo{author}{Q.~Wu}, \bibinfo{author}{J.-K. Hao},
\newblock \bibinfo{title}{Solving the winner determination problem via a
  weighted maximum clique heuristic},
\newblock \bibinfo{journal}{Expert Systems with Applications}
  \bibinfo{volume}{42} (\bibinfo{year}{2015}) \bibinfo{pages}{355--365}.
%Type = Article
\bibitem[{Tan et~al.(2020)Tan, Yu, Wei, and Liu}]{tan2020top}
\bibinfo{author}{Z.~Tan}, \bibinfo{author}{H.~Yu}, \bibinfo{author}{W.~Wei},
  \bibinfo{author}{J.~Liu},
\newblock \bibinfo{title}{Top-k interesting preference rules mining based on
  maxclique},
\newblock \bibinfo{journal}{Expert Systems with Applications}
  \bibinfo{volume}{143} (\bibinfo{year}{2020}) \bibinfo{pages}{113043}.
%Type = Inproceedings
\bibitem[{Luo et~al.(2022)Luo, Li, Zhou, Gao, and Li}]{luo2022maximum}
\bibinfo{author}{W.~Luo}, \bibinfo{author}{K.~Li}, \bibinfo{author}{X.~Zhou},
  \bibinfo{author}{Y.~Gao}, \bibinfo{author}{K.~Li},
\newblock \bibinfo{title}{Maximum biplex search over bipartite graphs},
\newblock in: \bibinfo{booktitle}{2022 IEEE 38th International Conference on
  Data Engineering (ICDE)}, \bibinfo{organization}{IEEE}, \bibinfo{year}{2022},
  pp. \bibinfo{pages}{898--910}.
%Type = Article
\bibitem[{Yu and Long(2023)}]{yu2023maximum}
\bibinfo{author}{K.~Yu}, \bibinfo{author}{C.~Long},
\newblock \bibinfo{title}{Maximum k-biplex search on bipartite graphs: A
  symmetric-bk branching approach},
\newblock \bibinfo{journal}{Proceedings of the ACM on Management of Data}
  \bibinfo{volume}{1} (\bibinfo{year}{2023}) \bibinfo{pages}{1--26}.
%Type = Inproceedings
\bibitem[{Wang et~al.(2022)Wang, Zhou, Xiao, and Khoussainov}]{wang2022listing}
\bibinfo{author}{Z.~Wang}, \bibinfo{author}{Y.~Zhou},
  \bibinfo{author}{M.~Xiao}, \bibinfo{author}{B.~Khoussainov},
\newblock \bibinfo{title}{Listing maximal k-plexes in large real-world graphs},
\newblock in: \bibinfo{booktitle}{Proceedings of the ACM Web Conference 2022},
  \bibinfo{year}{2022}, pp. \bibinfo{pages}{1517--1527}.
%Type = Article
\bibitem[{Chen and Kanj(2003)}]{chen2003constrained}
\bibinfo{author}{J.~Chen}, \bibinfo{author}{I.~A. Kanj},
\newblock \bibinfo{title}{Constrained minimum vertex cover in bipartite graphs:
  complexity and parameterized algorithms},
\newblock \bibinfo{journal}{Journal of Computer and System Sciences}
  \bibinfo{volume}{67} (\bibinfo{year}{2003}) \bibinfo{pages}{833--847}.
%Type = Article
\bibitem[{Bron and Kerbosch(1973)}]{bron1973algorithm}
\bibinfo{author}{C.~Bron}, \bibinfo{author}{J.~Kerbosch},
\newblock \bibinfo{title}{Algorithm 457: finding all cliques of an undirected
  graph},
\newblock \bibinfo{journal}{Communications of the ACM} \bibinfo{volume}{16}
  (\bibinfo{year}{1973}) \bibinfo{pages}{575--577}.
%Type = Article
\bibitem[{Lyu et~al.(2020)Lyu, Qin, Lin, Zhang, Qian, and
  Zhou}]{lyu2020maximum}
\bibinfo{author}{B.~Lyu}, \bibinfo{author}{L.~Qin}, \bibinfo{author}{X.~Lin},
  \bibinfo{author}{Y.~Zhang}, \bibinfo{author}{Z.~Qian},
  \bibinfo{author}{J.~Zhou},
\newblock \bibinfo{title}{Maximum biclique search at billion scale},
\newblock \bibinfo{journal}{Proceedings of the VLDB Endowment}
  (\bibinfo{year}{2020}).
%Type = Article
\bibitem[{Dai et~al.(2024)Dai, Li, Cui, Liao, Qiu, and Wang}]{dai2024efficient}
\bibinfo{author}{Q.~Dai}, \bibinfo{author}{R.-H. Li}, \bibinfo{author}{D.~Cui},
  \bibinfo{author}{M.~Liao}, \bibinfo{author}{Y.-X. Qiu},
  \bibinfo{author}{G.~Wang},
\newblock \bibinfo{title}{Efficient maximal biplex enumerations with improved
  worst-case time guarantee},
\newblock \bibinfo{journal}{Proceedings of the ACM on Management of Data}
  \bibinfo{volume}{2} (\bibinfo{year}{2024}) \bibinfo{pages}{1--26}.
%Type = Article
\bibitem[{Yu et~al.(2021)Yu, Long, Deepak, and Chakraborty}]{yu2021efficient}
\bibinfo{author}{K.~Yu}, \bibinfo{author}{C.~Long},
  \bibinfo{author}{P.~Deepak}, \bibinfo{author}{T.~Chakraborty},
\newblock \bibinfo{title}{On efficient large maximal biplex discovery},
\newblock \bibinfo{journal}{IEEE Transactions on Knowledge and Data
  Engineering} \bibinfo{volume}{35} (\bibinfo{year}{2021})
  \bibinfo{pages}{824--829}.
%Type = Book
\bibitem[{Asratian et~al.(1998)Asratian, Denley, and
  H{\"a}ggkvist}]{asratian1998bipartite}
\bibinfo{author}{A.~S. Asratian}, \bibinfo{author}{T.~M. Denley},
  \bibinfo{author}{R.~H{\"a}ggkvist}, \bibinfo{title}{Bipartite graphs and
  their applications}, volume \bibinfo{volume}{131},
  \bibinfo{publisher}{Cambridge university press}, \bibinfo{year}{1998}.
%Type = Inproceedings
\bibitem[{Wang et~al.(2021)Wang, Zhang, Lin, Zhang, Qin, and
  Zhang}]{wang2021efficient}
\bibinfo{author}{K.~Wang}, \bibinfo{author}{W.~Zhang},
  \bibinfo{author}{X.~Lin}, \bibinfo{author}{Y.~Zhang},
  \bibinfo{author}{L.~Qin}, \bibinfo{author}{Y.~Zhang},
\newblock \bibinfo{title}{Efficient and effective community search on
  large-scale bipartite graphs},
\newblock in: \bibinfo{booktitle}{2021 IEEE 37th International Conference on
  Data Engineering (ICDE)}, \bibinfo{organization}{IEEE}, \bibinfo{year}{2021},
  pp. \bibinfo{pages}{85--96}.
%Type = Book
\bibitem[{Kleinberg and Tardos(2006)}]{kleinberg2006algorithm}
\bibinfo{author}{J.~Kleinberg}, \bibinfo{author}{E.~Tardos},
  \bibinfo{title}{Algorithm design}, \bibinfo{publisher}{Pearson Education
  India}, \bibinfo{year}{2006}.
%Type = Article
\bibitem[{Jin et~al.(2022)Jin, Xiong, He, Zhou, and Zhou}]{jin2022fast}
\bibinfo{author}{Y.~Jin}, \bibinfo{author}{B.~Xiong}, \bibinfo{author}{K.~He},
  \bibinfo{author}{Y.~Zhou}, \bibinfo{author}{Y.~Zhou},
\newblock \bibinfo{title}{On fast enumeration of maximal cliques in large
  graphs},
\newblock \bibinfo{journal}{Expert Systems with Applications}
  \bibinfo{volume}{187} (\bibinfo{year}{2022}) \bibinfo{pages}{115915}.
%Type = Inproceedings
\bibitem[{Conte et~al.(2018)Conte, De~Matteis, De~Sensi, Grossi, Marino, and
  Versari}]{conte2018d2k}
\bibinfo{author}{A.~Conte}, \bibinfo{author}{T.~De~Matteis},
  \bibinfo{author}{D.~De~Sensi}, \bibinfo{author}{R.~Grossi},
  \bibinfo{author}{A.~Marino}, \bibinfo{author}{L.~Versari},
\newblock \bibinfo{title}{D2k: scalable community detection in massive networks
  via small-diameter k-plexes},
\newblock in: \bibinfo{booktitle}{Proceedings of the 24th ACM SIGKDD
  International Conference on Knowledge Discovery \& Data Mining},
  \bibinfo{year}{2018}, pp. \bibinfo{pages}{1272--1281}.
%Type = Inproceedings
\bibitem[{Zhou et~al.(2020)Zhou, Xu, Guo, Xiao, and Jin}]{zhou2020enumerating}
\bibinfo{author}{Y.~Zhou}, \bibinfo{author}{J.~Xu}, \bibinfo{author}{Z.~Guo},
  \bibinfo{author}{M.~Xiao}, \bibinfo{author}{Y.~Jin},
\newblock \bibinfo{title}{Enumerating maximal k-plexes with worst-case time
  guarantee},
\newblock in: \bibinfo{booktitle}{Proceedings of the AAAI conference on
  Artificial Intelligence}, volume~\bibinfo{volume}{34}, \bibinfo{year}{2020},
  pp. \bibinfo{pages}{2442--2449}.
%Type = Inproceedings
\bibitem[{Dai et~al.(2022)Dai, Li, Qin, Liao, and Wang}]{dai2022scaling}
\bibinfo{author}{Q.~Dai}, \bibinfo{author}{R.-H. Li}, \bibinfo{author}{H.~Qin},
  \bibinfo{author}{M.~Liao}, \bibinfo{author}{G.~Wang},
\newblock \bibinfo{title}{Scaling up maximal k-plex enumeration},
\newblock in: \bibinfo{booktitle}{Proceedings of the 31st ACM International
  Conference on Information \& Knowledge Management}, \bibinfo{year}{2022}, pp.
  \bibinfo{pages}{345--354}.
%Type = Inproceedings
\bibitem[{Wu et~al.(2024)Wu, Sun, Wang, Zhang, Qin, Zhang, and
  Lin}]{wu2024efficient}
\bibinfo{author}{Y.~Wu}, \bibinfo{author}{R.~Sun}, \bibinfo{author}{X.~Wang},
  \bibinfo{author}{Y.~Zhang}, \bibinfo{author}{L.~Qin},
  \bibinfo{author}{W.~Zhang}, \bibinfo{author}{X.~Lin},
\newblock \bibinfo{title}{Efficient maximal temporal plex enumeration},
\newblock in: \bibinfo{booktitle}{2024 IEEE 40th International Conference on
  Data Engineering (ICDE)}, \bibinfo{organization}{IEEE}, \bibinfo{year}{2024},
  pp. \bibinfo{pages}{3098--3110}.
%Type = Inproceedings
\bibitem[{S{\"o}zdinler and {\"O}zturan(2018)}]{sozdinler2018finding}
\bibinfo{author}{M.~S{\"o}zdinler}, \bibinfo{author}{C.~{\"O}zturan},
\newblock \bibinfo{title}{Finding maximum edge biclique in bipartite networks
  by integer programming},
\newblock in: \bibinfo{booktitle}{2018 IEEE International Conference on
  Computational Science and Engineering (CSE)}, \bibinfo{organization}{IEEE},
  \bibinfo{year}{2018}, pp. \bibinfo{pages}{132--137}.
%Type = Inproceedings
\bibitem[{Shaham et~al.(2016)Shaham, Yu, and Li}]{shaham2016finding}
\bibinfo{author}{E.~Shaham}, \bibinfo{author}{H.~Yu}, \bibinfo{author}{X.-L.
  Li},
\newblock \bibinfo{title}{On finding the maximum edge biclique in a bipartite
  graph: a subspace clustering approach},
\newblock in: \bibinfo{booktitle}{Proceedings of the 2016 SIAM International
  Conference on Data Mining}, \bibinfo{organization}{SIAM},
  \bibinfo{year}{2016}, pp. \bibinfo{pages}{315--323}.
%Type = Book
\bibitem[{Garey and Johnson(1979)}]{garey1979computers}
\bibinfo{author}{M.~R. Garey}, \bibinfo{author}{D.~S. Johnson},
  \bibinfo{title}{Computers and intractability}, volume \bibinfo{volume}{174},
  \bibinfo{publisher}{freeman San Francisco}, \bibinfo{year}{1979}.
%Type = Article
\bibitem[{Zhou et~al.(2018)Zhou, Rossi, and Hao}]{zhou2018towards}
\bibinfo{author}{Y.~Zhou}, \bibinfo{author}{A.~Rossi}, \bibinfo{author}{J.-K.
  Hao},
\newblock \bibinfo{title}{Towards effective exact methods for the maximum
  balanced biclique problem in bipartite graphs},
\newblock \bibinfo{journal}{European Journal of Operational Research}
  \bibinfo{volume}{269} (\bibinfo{year}{2018}) \bibinfo{pages}{834--843}.
%Type = Article
\bibitem[{Zhou and Hao(2019)}]{zhou2017combining}
\bibinfo{author}{Y.~Zhou}, \bibinfo{author}{J.-K. Hao},
\newblock \bibinfo{title}{Tabu search with graph reduction for finding maximum
  balanced bicliques in bipartite graphs},
\newblock \bibinfo{journal}{Engineering Applications of Artificial
  Intelligence} \bibinfo{volume}{77} (\bibinfo{year}{2019})
  \bibinfo{pages}{86--97}.
%Type = Article
\bibitem[{Wang et~al.(2018)Wang, Cai, and Yin}]{wang2018new}
\bibinfo{author}{Y.~Wang}, \bibinfo{author}{S.~Cai}, \bibinfo{author}{M.~Yin},
\newblock \bibinfo{title}{New heuristic approaches for maximum balanced
  biclique problem},
\newblock \bibinfo{journal}{Information Sciences} \bibinfo{volume}{432}
  (\bibinfo{year}{2018}) \bibinfo{pages}{362--375}.
%Type = Article
\bibitem[{Tomita et~al.(2006)Tomita, Tanaka, and Takahashi}]{tomita2006worst}
\bibinfo{author}{E.~Tomita}, \bibinfo{author}{A.~Tanaka},
  \bibinfo{author}{H.~Takahashi},
\newblock \bibinfo{title}{The worst-case time complexity for generating all
  maximal cliques and computational experiments},
\newblock \bibinfo{journal}{Theoretical computer science} \bibinfo{volume}{363}
  (\bibinfo{year}{2006}) \bibinfo{pages}{28--42}.
%Type = Article
\bibitem[{Fomin and Kaski(2013)}]{fomin2013exact}
\bibinfo{author}{F.~V. Fomin}, \bibinfo{author}{P.~Kaski},
\newblock \bibinfo{title}{Exact exponential algorithms},
\newblock \bibinfo{journal}{Communications of the ACM} \bibinfo{volume}{56}
  (\bibinfo{year}{2013}) \bibinfo{pages}{80--88}.
%Type = Inproceedings
\bibitem[{Chen et~al.(2021)Chen, Liu, Zhou, Xu, and Li}]{chen2021efficient}
\bibinfo{author}{L.~Chen}, \bibinfo{author}{C.~Liu}, \bibinfo{author}{R.~Zhou},
  \bibinfo{author}{J.~Xu}, \bibinfo{author}{J.~Li},
\newblock \bibinfo{title}{Efficient exact algorithms for maximum balanced
  biclique search in bipartite graphs},
\newblock in: \bibinfo{booktitle}{Proceedings of the 2021 International
  Conference on Management of Data}, \bibinfo{year}{2021}, pp.
  \bibinfo{pages}{248--260}.
%Type = Article
\bibitem[{Yang et~al.(2021)Yang, Peng, and Zhang}]{yang2021p}
\bibinfo{author}{J.~Yang}, \bibinfo{author}{Y.~Peng},
  \bibinfo{author}{W.~Zhang},
\newblock \bibinfo{title}{(p, q)-biclique counting and enumeration for large
  sparse bipartite graphs},
\newblock \bibinfo{journal}{Proceedings of the VLDB Endowment}
  \bibinfo{volume}{15} (\bibinfo{year}{2021}) \bibinfo{pages}{141--153}.
%Type = Inproceedings
\bibitem[{Bonchi et~al.(2019)Bonchi, Khan, and Severini}]{bonchi2019distance}
\bibinfo{author}{F.~Bonchi}, \bibinfo{author}{A.~Khan},
  \bibinfo{author}{L.~Severini},
\newblock \bibinfo{title}{Distance-generalized core decomposition},
\newblock in: \bibinfo{booktitle}{proceedings of the 2019 international
  conference on management of data}, \bibinfo{year}{2019}, pp.
  \bibinfo{pages}{1006--1023}.
%Type = Article
\bibitem[{Matula and Beck(1983)}]{matula1983smallest}
\bibinfo{author}{D.~W. Matula}, \bibinfo{author}{L.~L. Beck},
\newblock \bibinfo{title}{Smallest-last ordering and clustering and graph
  coloring algorithms},
\newblock \bibinfo{journal}{Journal of the ACM (JACM)} \bibinfo{volume}{30}
  (\bibinfo{year}{1983}) \bibinfo{pages}{417--427}.
%Type = Book
\bibitem[{Cygan et~al.(2015)Cygan, Fomin, Kowalik, Lokshtanov, Marx, Pilipczuk,
  Pilipczuk, and Saurabh}]{cygan2015parameterized}
\bibinfo{author}{M.~Cygan}, \bibinfo{author}{F.~V. Fomin},
  \bibinfo{author}{{\L}.~Kowalik}, \bibinfo{author}{D.~Lokshtanov},
  \bibinfo{author}{D.~Marx}, \bibinfo{author}{M.~Pilipczuk},
  \bibinfo{author}{M.~Pilipczuk}, \bibinfo{author}{S.~Saurabh},
  \bibinfo{title}{Parameterized algorithms}, volume~\bibinfo{volume}{5},
  \bibinfo{publisher}{Springer}, \bibinfo{year}{2015}.
%Type = Article
\bibitem[{Lyu et~al.(2022)Lyu, Qin, Lin, Zhang, Qian, and
  Zhou}]{lyu2022maximum}
\bibinfo{author}{B.~Lyu}, \bibinfo{author}{L.~Qin}, \bibinfo{author}{X.~Lin},
  \bibinfo{author}{Y.~Zhang}, \bibinfo{author}{Z.~Qian},
  \bibinfo{author}{J.~Zhou},
\newblock \bibinfo{title}{Maximum and top-k diversified biclique search at
  scale},
\newblock \bibinfo{journal}{The VLDB Journal} \bibinfo{volume}{31}
  (\bibinfo{year}{2022}) \bibinfo{pages}{1365--1389}.
%Type = Article
\bibitem[{Yao et~al.(2022)Yao, Chang, and Yu}]{yao2022identifying}
\bibinfo{author}{K.~Yao}, \bibinfo{author}{L.~Chang}, \bibinfo{author}{J.~X.
  Yu},
\newblock \bibinfo{title}{Identifying similar-bicliques in bipartite graphs},
\newblock \bibinfo{journal}{Proceedings of the VLDB Endowment}
  \bibinfo{volume}{15} (\bibinfo{year}{2022}) \bibinfo{pages}{3085--3097}.
%Type = Book
\bibitem[{Bollob{\'a}s and Bollob{\'a}s(1998)}]{bollobas1998random}
\bibinfo{author}{B.~Bollob{\'a}s}, \bibinfo{author}{B.~Bollob{\'a}s},
  \bibinfo{title}{Random graphs}, \bibinfo{publisher}{Springer},
  \bibinfo{year}{1998}.

\end{thebibliography}
% \newpage
% \appendix

% \appendix
% {\bf \Large APPENDIX}

\end{document}